\numberwithin{equation}{section}
\theoremstyle{plain}
\newtheorem{prop}{Proposition}[section]
\newtheorem{lemma}{Lemma}[section]
\newtheorem{assumption}{Assumption}[section]
\theoremstyle{definition}
\newtheorem{remark}{Remark}[section]
\newenvironment{conditions}
  {\begin{list}{}{\leftmargin=2.2em \labelwidth=1.7em \labelsep=0.5em
                  \itemsep=0.25\baselineskip \parsep=0pt
                  \topsep=0.25\baselineskip}}
  {\end{list}}
\newcounter{bean}
\DeclareMathOperator*{\argmin}{arg\,min}
\newcolumntype{d}[1]{D{.}{.}{#1}}
\def\ci{\perp\!\!\!\perp}
\newcommand\midfootnotesize{\@setfontsize\midfootnotesize{9.5}{9}}
\newcommand{\refPropAsym}{\ref{prop: asymptotics}}
\newcommand{\refPropAsymNP}{\ref{prop: asymptotics np}}
\newcommand{\refPropPrim}{\ref{prop: np primitives}}
\newcommand{\refPropEffBound}{\ref{prop: efficiency bound}}
\newcommand{\refPropEffEst}{\ref{prop: efficient estimator}}
\newcommand{\keywords}[1]{%
  \par\medskip\noindent\textbf{Keywords: }#1%
}
\title{Point-identifying semiparametric sample selection\\
       models with no excluded variable}
\author{%
  Dongwoo Kim\thanks{Corresponding author. Simon Fraser University and Korea University.
  Email: \texttt{dongwook@sfu.ca}.}
  \and
  Young Jun Lee\thanks{Hanyang University.
  Email: \texttt{youngjunlee@hanyang.ac.kr}.}
}
\date{July 2026}
\begin{document}

\maketitle

\let\ArxivBibliography\bibliography
\let\ArxivBibliographystyle\bibliographystyle
\renewcommand{\bibliography}[1]{}
\renewcommand{\bibliographystyle}[1]{}

\begin{abstract}
Sample selection is pervasive in applied economic studies. This paper proposes semiparametric selection models that achieve point identification without relying on exclusion restrictions. Our identification conditions require at least one continuously distributed covariate and certain nonlinearity in the selection process. We propose a two-step sieve plug-in estimator that is $\sqrt{n}$-consistent, asymptotically normal, and computationally straightforward, allowing for heteroskedasticity. We further derive the semiparametric efficiency bound for the model and propose a weighted variant of the estimator that attains the bound. Our approach provides a middle ground between Lee (2009)'s nonparametric bounds and Honoré and Hu (2020)'s linear selection bounds, while ensuring point identification. Simulation evidence confirms its excellent finite-sample performance. We apply our method to estimate the racial and gender wage disparities using data from the US Current Population Survey. Our estimates often lie outside the Honoré and Hu bounds.

\keywords{Semiparametric sample selection, exclusion restriction, sieve estimation.}
\end{abstract}

\section{Introduction}

Sample selection poses a fundamental challenge in empirical economics. When workers self-select into employment or patients choose whether to seek medical care, the resulting datasets systematically exclude critical segments of the population, distorting our understanding of economic relationships and leading to misguided policy prescriptions. Even carefully designed random experiments are vulnerable, as systematic attrition introduces selection bias. Sample selection models are widely employed to address this concern. However, they often rely on strong assumptions. In particular, semiparametric selection models have long been thought to require exclusion restrictions, limiting their practical applicability. This paper challenges this conventional wisdom by proposing a class of semiparametric selection models that achieve point identification without exclusion restrictions. Specifically, we relax the linear selection and joint normality assumptions in the \cite{heckman1979sample} selection model. Point identification is established when there exists at least one continuous covariate. Following \cite{escanciano2016identification} (EJL henceforth), we leverage the nonlinearity of the selection probability, which is easily verifiable in practice. Consequently, the parameters in the outcome equation can be estimated via a partial linear regression by plugging nonparametrically estimated selection probabilities into the sieve-approximated nonparametric component. Our two-step sieve plug-in estimator is $\sqrt{n}$-consistent, asymptotically normal, and computationally scalable. Unlike many existing methods, we do not assume that unobserved heterogeneity is independent of regressors, allowing for heteroskedasticity. As we maintain the linearity of the outcome equation, incorporating a large set of covariates is straightforward.

\cite{heckman1979sample} pioneered correction methods for selection bias using a parametric model that assumes linearity in both selection and outcome equations:
\begin{equation}\label{eq: heckman selection model}
Y^* = \alpha + X\beta + V,\quad D=\mathbbm{1}[Z\gamma + \varepsilon \ge 0],\quad Y = D\cdot Y^*,
\end{equation}
where $Y^*$ is the latent outcome, $X$ and $Z$ are row vectors of covariates, $V$ and $\varepsilon$ are mean-zero unobserved heterogeneity terms that are jointly normally distributed and independent of $(X,Z)$, with $\operatorname{Var}(\varepsilon)$ normalized to $1$. Given $X=x$, $Z=z$, and $D=1$, we have
\[
E[Y\mid x,z,D=1] = \alpha + x\beta + \sigma_{V\varepsilon}\phi(z\gamma)/\Phi(z\gamma),
\]
where $\sigma_{V\varepsilon}=\operatorname{Cov}(V,\varepsilon)$, and $\phi(\cdot)$ and $\Phi(\cdot)$ denote the standard normal probability density function (p.d.f.) and cumulative distribution function (c.d.f.), respectively. Although Heckman’s model can identify $(\alpha,\beta)$ when $X=Z$, it is generally recommended to include at least one variable in $Z$ that is excluded from $X$ to strengthen identification. Without such an exclusion restriction, the performance of both the MLE (maximum likelihood estimator) and two-step estimator can be poor, as highlighted in the debates in \cite{hay1984let}, \cite{duan1984choosing}, and \cite{manning1987monte}. When only changes in the observed outcome $Y$ are of interest, a two-part model \citep{mullahy2024transform} or log-like transformation (e.g., $\log(Y+1)$) can serve as alternatives. However, \cite{chen2024logs} recently recommend a sample selection model for non-negative outcomes, particularly when the treatment affects the extensive margin (selection).

By relaxing joint normality, econometricians developed semiparametric \citep{chamberlain1986asymptotic, ahn1993semiparametric, newey2009two} and nonparametric \citep{das2003nonparametric} selection models. In both types of models, exclusion restrictions are widely regarded as essential for identification.\footnote{\cite{lee2009bounds} stated that “\textit{standard parametric or semiparametric methods for correcting for sample selection require exclusion restrictions}” (p. 1072). Similarly, \cite{honore2020selection} and \cite{kroft2024lee} claimed that an exclusion restriction is the \textit{key identifying assumption} in semiparametric selection models.} However, finding an excluded variable is often infeasible. \cite{lee2009bounds} proposed a bounds approach to overcome this challenge without invoking exclusion restrictions. By assuming selection monotonicity, under which individuals who are observed without treatment would also be observed with treatment, a trimming procedure is introduced to adjust for selection. Lee bounds are intuitive and tractable, making them widely used in empirical studies, particularly in experiments where some subjects drop out. However, these bounds are often too wide to yield meaningful insights, and their ability to incorporate covariates is limited. Slightly tighter bounds than unconditional bounds can be obtained by averaging the group-specific effects, weighted by covariate density. However, in practice, discretization is necessary for continuous covariates, and handling a large number of covariates is often infeasible. \cite{semenova2023generalized} generalizes Lee’s approach to a high-dimensional setting.

\cite{honore2020selection} (HH henceforth) later demonstrated that $\beta$ in \eqref{eq: heckman selection model} is set identified without joint normality and exclusion restrictions. The HH model serves as a semiparametric alternative to Lee's and provides tighter bounds than Lee bounds, as it imposes additional assumptions. Despite the growing popularity of partially identifying models, estimating identified sets and conducting inference remain challenging, particularly when the identified set is characterized by a large number of moment inequalities. Our model relaxes the linear selection assumption in HH’s model, providing a middle ground between Lee’s and HH’s approaches.

Nonlinearity in the selection process has been considered as a means to identify the parameters in the latent outcome equation. \cite{ahn1993semiparametric} observed that the nonlinear terms in the selection equation may act as excluded variables, and \cite{newey1993efficiency} derived semiparametric efficiency bounds for sample-selection models of this form, implying that nonlinearity of the selection probability can yield a nonzero bound without an excluded variable. However, this insight has been overlooked in the empirical literature. More recently, EJL introduced a more general model, assuming $E[Y\mid X]=F_0(X\beta_0,p_0(X))$, and demonstrated that $\beta_0$ can be identified up to scale if $X$ contains at least two continuous variables and $p_0(X)$ is nonlinear in $X$. We apply their approach to a simpler set-up, thereby obtaining weaker conditions for identifying the model parameters. Our approach operates on a single continuous covariate without scale normalization. It does not require joint optimization of the parametric and nonparametric components (e.g., \cite{ichimura1993semiparametric}) as in EJL.

\cite{pan2022semiparametric} propose an integrated nonlinear least squares estimator \citep{chen2010integrated, chen2010non, chen2011semiparametric, chen2012semiparametric} for EJL's model, eliminating the need for scale normalization. However, they do not provide an identification argument. Moreover, their estimation procedure relies on multiple layers of kernel regression, each requiring the selection of multiple tuning parameters, along with numerical optimization of an integrated criterion function. This results in a computational burden that scales up quickly with the sample size. An independently developed paper, \cite{pan2024locally}, employs a similar identification strategy but substantially differs from ours in terms of estimation. They employ \cite{robinson1988root}'s partialling-out estimator, in which the first-stage selection probability is estimated via a machine learning approach. They further augment their estimator by an orthogonalized moment and cross-fitting to guard against first-step bias, following \cite{chernozhukov2018double}'s Double/Debiased Machine Learning framework. Our sieve plug-in approach offers practitioners a more user-friendly alternative, implementable in R and Stata without specialized ML tooling.

This paper is organized as follows. Section~\ref{sec: model} introduces our semiparametric selection model and establishes identification. Section~\ref{sec: estimator} presents our estimators and derives their asymptotic properties. Section~\ref{sec: simulations} compares the finite-sample performance of our estimator with that of alternative estimators using simulations. Section~\ref{sec: application} applies our method to estimating gender and racial wage disparities in the US. Section~\ref{sec: conclusion} concludes. The proofs of the identification results are collected in the Appendix. The remaining proofs, additional simulation results, and full regression results in the empirical application are provided in the Online Appendix.

\section{The semiparametric selection model}\label{sec: model}

We consider a sample selection model where the selection procedure is left unspecified:
\begin{equation}\label{eq: semi-np selection model}
Y^* = \alpha_0 + X\beta_0 + V,\quad Y = D\cdot Y^*.
\end{equation}
Let $p_0(x) := P[D=1\mid X=x]$ represent the conditional selection probability given $X=x$. We impose the following assumption to identify $\beta_0$.

\begin{assumption}\label{ass: identification}
    (i) At least one variable ($X_1$ without loss of generality) in $X$ is continuously distributed; (ii) $p_0(x)$ is continuously differentiable with respect to $x_1$ almost everywhere on the support of $X$; (iii) $\partial p_0/\partial x_1 \neq 0$ with probability 1; (iv) $E[Y|X, D=1] = X\beta_0 + \lambda_0(p_0(X))$; (v) $\lambda_0(p)$ is continuously differentiable almost everywhere; (vi) The variables in $X$ are not perfectly multicollinear.
\end{assumption}

The above assumption requires a continuous variable in $X$ and smoothness of $p_0(\cdot)$ and $\lambda_0(\cdot)$. It also restricts the selection bias to depend only on the selection probability through an unknown function, $\lambda_0(\cdot)$. Additionally, it rules out a flat region in $p_0(\cdot)$. The intercept $\alpha_0$ is normalized to 0, as we cannot separately identify it from the selection bias.\footnote{For point identification of $\alpha_0$, see \cite{heckman1990varieties} and \cite{andrews1998semiparametric}. Unlike our paper, both papers rely on the ``identification at infinity'' argument and an exclusion restriction.} Assumption \ref{ass: identification}(iv) is a high-level condition under which the selection probability is a sufficient index to control for selection bias.\footnote{This assumption is satisfied in the following special case:
$D = \mathbbm{1}[h_0(X) \ge U]$, where $U$ is normalized to $\operatorname{Unif}(0,1)$, $X \ci U$, and $E[V|X,U] = E[V|U]$. Now consider a more general non-separable selection process: $D=\mathbbm{1}[m(X,U) \ge 0]$. If $m$ is strictly monotone in $U$, we can rewrite this process as
$D=\mathbbm{1}[U \ge \tau(X)]$, where $\tau(X):= m^{-1}(X,0)$,
which is observationally isomorphic to additive threshold crossing models, and therefore Assumption~\ref{ass: identification}(iv) holds. Without additional structure, however, such index sufficiency does not generally hold in a non-monotone selection process \citep{vytlacil2002independence}. In an arbitrary non-separable model where $U$ is not necessarily scalar, let $R(x):=\{u: m(x,u) \ge 0\}$. Then Assumption~\ref{ass: identification}(iv) holds if and only if, for almost every $x,x'$, $p_0(x) = p_0(x')$ implies $E[V|U \in R(x)] = E[V|U \in R(x')]$. This condition implies that, while not impossible to accommodate non-monotone selection, Assumption~\ref{ass: identification}(iv) admits restrictive forms of selection-relevant heterogeneity.} For some $\beta$ and $\lambda(\cdot)$, define $l(x) := x(\beta_0 - \beta)$ and $b(p) := \lambda_0(p) - \lambda(p)$, both of which represent deviations from the truth. For any observationally equivalent $\beta$ and $\lambda$ such that $E[Y|X, D=1] = X\beta + \lambda(p_0(X))$, $l(x) + b(p) = 0$ identically. Under Assumption~\ref{ass: identification}, $\beta_0$ can be point-identified (by showing that $l(x)=0$ and $b(p)=0$ identically). First, we consider the simplest case in which $X$ consists of only one continuous variable.

\begin{prop}\label{prop: identification - single continuous variable}
    Let $X$ be a scalar, nondegenerate, continuously distributed random variable. Let Assumption~\ref{ass: identification}(iv) hold. If there exist two distinct values $x'$ and $x''$ in the support of $X$ such that $p_0(x') = p_0(x'') = p'$, $\beta_0$ and $\lambda_0$ are identified.
\end{prop}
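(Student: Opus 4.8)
The plan is to argue by observational equivalence, exactly as the paragraph preceding the proposition sets up. Suppose a pair $(\beta,\lambda)$ reproduces the observed selected-sample regression, i.e.\ $E[Y\mid X=x,D=1]=x\beta+\lambda(p_0(x))$ for all $x$ in the support of $X$. Subtracting this from the true representation $E[Y\mid X=x,D=1]=x\beta_0+\lambda_0(p_0(x))$ from Assumption~\ref{ass: identification}(iv) yields the identity $l(x)+b(p_0(x))=0$ at every such $x$, where $l(x)=(\beta_0-\beta)x$ and $b(p)=\lambda_0(p)-\lambda(p)$. Identification then reduces to showing that this single functional identity forces $\beta=\beta_0$, after which $b$ must vanish on the relevant range.

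The decisive step is to exploit the two coincident points. Because $x'$ and $x''$ are mapped by $p_0$ to the \emph{same} value $p'$, the unknown nonparametric term takes the identical value at both, $b(p_0(x'))=b(p_0(x''))=b(p')$. Evaluating the identity at $x'$ and at $x''$ and subtracting therefore cancels $b(p')$ and leaves $(\beta_0-\beta)(x'-x'')=0$; since $x'\neq x''$, this gives $\beta=\beta_0$, so $\beta_0$ is identified. With $\beta=\beta_0$ we have $l\equiv 0$, hence $b(p_0(x))=0$ for all $x$ in the support, i.e.\ $\lambda(p)=\lambda_0(p)$ for every $p$ in the range of $p_0$. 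This pins down $\lambda_0$ on that range; the freedom to reallocate a constant between the intercept $\alpha_0$ and the level of the selection-bias function is exactly the residual indeterminacy, which is why the claim reads ``up to an additive constant.'' To make the last conclusion meaningful I would note, using continuity of $p_0$ with Assumption~\ref{ass: identification}(ii)--(iii), that the range of $p_0$ over the support of $X$ is a nondegenerate interval, so $\lambda_0$ is identified on a set with nonempty interior rather than at a single point.

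The conceptual crux — and the only place any real content enters — is the existence of the coincident pair $x'\neq x''$ with $p_0(x')=p_0(x'')$. This is precisely the nonlinearity that substitutes for an excluded variable, and it is consistent with Assumption~\ref{ass: identification}(iii) only because the derivative may vanish on an $X$-null set: by Rolle's theorem a turning point of $p_0$ must lie between $x'$ and $x''$, but a single such point carries zero probability. If instead $p_0$ were strictly monotone in the scalar $X$, then $p_0(x)$ would be a one-to-one reparametrization of $x$, the single-index term could absorb a linear shift via $\lambda_0(p_0(x))+c\,p_0^{-1}(p_0(x))$, and $\beta_0$ would be confounded with $\lambda_0$. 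The two-point differencing is what breaks this confounding; the remaining steps are immediate. I would therefore devote the write-up to stating the coincidence hypothesis cleanly and verifying the cancellation, and treat the identification of $\lambda_0$ as a one-line consequence.
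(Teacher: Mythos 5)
Your proof is correct, and the decisive step --- evaluating the identity $l(x)+b(p_0(x))=0$ at the two coincident points $x'$ and $x''$ so that $b(p')$ cancels and $(\beta_0-\beta)(x'-x'')=0$ forces $\beta=\beta_0$ --- is exactly the paper's argument. Where you diverge is in the second half. The paper, having obtained $\beta=\beta_0$, differentiates the identity with respect to $x$ and invokes Assumption~\ref{ass: identification}(iii) ($\partial p_0/\partial x_k\neq 0$ a.e.) to conclude $b'(p)=0$, hence $b\equiv C$ and $\lambda_0=\lambda+C$. You instead read off directly from $l\equiv 0$ that $b(p_0(x))=0$ pointwise, i.e.\ $\lambda=\lambda_0$ on the range of $p_0$, and then attribute the ``up to an additive constant'' qualifier to the unidentified split between $\alpha_0$ and the level of the selection-bias term (which matches the paper's own footnote that $\alpha_0$ cannot be separated from the selection bias). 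Your route is more elementary --- it needs neither the differentiability of $b$ nor Assumption (iii) for this step --- and it delivers a nominally stronger conclusion ($b\equiv 0$ on the range rather than merely constant); the paper's derivative argument is, strictly speaking, redundant once the pointwise identity is in hand, though it is the version that generalizes to the multivariate propositions that follow. Your added observations --- that the range of $p_0$ is a nondegenerate interval so $\lambda_0$ is pinned down on a set with nonempty interior, and that by Rolle's theorem the coincident pair is compatible with Assumption (iii) only because the turning point is an $X$-null set --- are not in the paper but are accurate and clarify why the hypothesis rules out strict monotonicity without contradicting $\partial p_0/\partial x_k\neq 0$ a.e.
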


This proposition shows that more than nonlinearity is required in the selection process, as it rules out monotonicity of $p_0(\cdot)$. The identification argument also works for discrete $X$.\footnote{For instance, when $X$ is binary, $p_0(X) = \gamma_0 +\gamma_1 X$ is fully nonparametric. Identification of $\beta_0$ is obtained only when $p_0(0)=p_0(1)$, implying that $X$ is an excluded variable, as $\gamma_1=0$. Therefore, $\beta_0$ is not point-identified in this case unless $X$ is excluded from the selection equation, as shown in HH.} Now we consider more general cases where $X$ is multidimensional. Suppose two elements of $X$, $X_1$ and $X_2$, are continuous.

\begin{prop}\label{prop: identification - two continuous variables}
Let Assumption \ref{ass: identification} hold. Assume there exists another continuously distributed element $X_2$ of $X$ such that $p_0(\cdot)$ is continuously differentiable with respect to $x_2$ and $\partial p_0/\partial x_2 \neq 0$ with probability 1. If the ratio $(\partial p_0/\partial x_1)/(\partial p_0/\partial x_2)$ is not constant on the support of $X$, $\beta_0$ and $\lambda_0$ are identified.
\end{prop}

In general, the marginal effect of $X_1$ on $p$ is not proportional to that of $X_2$. Identification fails when $p_0(X) = F(X\gamma)$. Therefore, the nonlinearity of $p_0(\cdot)$ enables us to identify the model parameters. Lastly, we consider the case where $X_1$ is the only continuous variable in $X$. Without loss of generality, let $X_j$ be a binary variable for all $j \neq 1$, as any discrete variable can be equivalently expressed as a set of dummy variables.

\begin{prop}\label{prop: identification - one continuous variable and discrete variables}
    Let Assumption \ref{ass: identification} hold. For some $X_j$, let $x'$ be a vector where $x_j = 1$, and let $x''$ denote an otherwise identical vector where $x_j = 0$. Assume that for almost all $x_1'$ in the support of $X_1$, there exists $x_1'''$ in the support of $X_1$ such that $p_0(x''') = p_0(x'')$, where $x'''$ is otherwise identical to $x'$ except in its first component. If the mapping $x_1' \mapsto x_1'''$ is nonlinear, i.e., $(x_1' - x_1''')$ varies with $x_1'$, then $\beta_0$ and $\lambda_0$ are identified.
\end{prop}

\noindent Unless the required change in $x_1$ to achieve $p''$ with $x_j = 0$ from $p'$ with $x_j = 1$ is constant for all values of $x_1'$, the model parameters are identified. This proposition also rules out the case in which $p_0(X) = F(X\gamma)$.

The linear outcome specification we consider here is of primary interest in many applied studies because it is easy to interpret and provides an excellent first-order approximation to the data.\footnote{Classic examples include the Mincer wage equation and its extensions, which relate log earnings to schooling and experience \citep{mincer1974schooling, murphy1990empirical, murphy1992structure, card1999causal, lemieux2006mincer}, and health studies that regress log expenditures on age, income, insurance, and health status \citep{manning2001estimating}.} By contrast, selection processes are typically far less transparent. Entry decisions depend on utility maximization, reservation wages, market frictions, and institutional thresholds. Hence the selection process can exhibit complex nonlinearity while the outcome equation is approximately linear. Researchers can enrich the outcome equation with quadratic or cubic terms and interactions for a better approximation. As long as the nonlinearity that drives selection is not absorbed by outcome regressors, our identification results continue to hold.

\begin{remark}\normalfont
    (Parameter heterogeneity) In applied studies, heterogeneous treatment effects are often of interest. It is straightforward to extend our results to allow for parameter heterogeneity on the policy variable of interest. We can modify the model \eqref{eq: semi-np selection model} by partitioning the covariates into a scalar policy variable $X_i$ and control variables $W_i$:
    \begin{equation}\label{eq: parameter heterogeneity}
    Y^*_i = W_i\gamma_0 + X_i\beta_i + V_i, \quad Y_i = D_i \cdot Y^*_i, \quad p_i = P[D_i=1|X_i, W_i],
    \end{equation}
    where $\beta_i$ is individual-specific (we omit the intercept for expositional purposes). Then,
    \[
    E[Y_i|D_i=1, X_i, W_i] = W_i\gamma_0 + X_i E[\beta_i|D_i=1, X_i, W_i] + \lambda^*(p_i).
    \]
    Suppose that $\beta_i = \beta_0 + \varepsilon_{\beta i}$ where $E[\varepsilon_{\beta i}]=0$, and $E[\varepsilon_{\beta i}|D_i=1, X_i, W_i] = \omega(p_i)$, so that the conditional mean of $\varepsilon_{\beta i}$ depends on the covariates only through the selection probability upon selection. Then,
    \begin{equation}\label{eq: varying coefficient}
    E[Y_i|D_i=1, X_i, W_i] = W_i\gamma_0 + X_i\beta_0 + X_i\omega(p_i) + \lambda^*(p_i),
    \end{equation}
    which is a varying coefficient model. With a suitable normalization such as $\omega(p^*)=0$ at some $p^*$, the parameter $\beta_0 = E[\beta_i]$ is separately identified from $\omega(\cdot)$.\footnote{Identification of the model requires that $(1, X_i)$ has rank 2 on the level sets $\{(x,w): p_0(x,w) = p\}$. This condition is generically satisfied as long as $X_i$ is not a deterministic function of the propensity score alone. Since $W_i$ also enters the selection process, fixing $p_0$ allows $X_i$ to vary freely, compensated by offsetting movements in $W_i$. Therefore, the varying coefficient model \eqref{eq: varying coefficient} is identified under essentially the same conditions as the baseline model \eqref{eq: semi-np selection model}.}
\end{remark}

The identification results in this section show that our semiparametric selection model can point-identify the model parameters without an excluded variable as long as there is at least one continuously distributed covariate. In applied economic studies, continuous variables such as age, income, and price are not uncommon. Hence, our semiparametric model is widely applicable to modern data sets. Furthermore, it is natural for the true selection process to exhibit some degree of nonlinearity. As the conditional selection probability is nonparametrically identified, it is simple to check for nonlinearity in the selection probability. When there is strong empirical evidence of selection nonlinearity, our model becomes a strong alternative to the HH model, as it is more robust and point-identifying.

\section{The estimators}\label{sec: estimator}

We now propose a class of tractable sieve two-step plug-in estimators. First suppose that $p_{0i}=p_{0}(X_i)=E[D_i\mid X_i]$ is known. Conditional on selection, the model is partially linear. Suppose we have an i.i.d.\ sample, $\{W_i\}_{i=1}^n$, where $W_i:=(Y_i, D_i, X_i, p_{0i})$. Approximate $\lambda_0$ by $\Lambda_n=\{b_K(\cdot)'\gamma:\gamma\in\mathbbm{R}^{K}\}$, where $b_K$ is a sieve basis of dimension $K=K_n\to\infty$. The sieve least-squares estimator is
\[
\left(\tilde{\beta}_{n},\tilde{\gamma}_{n}\right)
=\argmin_{\beta\in\mathbbm{R}^{\dim(\beta)},\,\gamma\in\mathbbm{R}^{K}}
\frac{1}{n}\sum_{i=1}^{n}
D_{i}\left(Y_{i}-X_{i}'\beta-b_K\left(p_{0i}\right)'\gamma\right)^{2},
\]
where $\tilde{\lambda}_{n}=b_K\left(\cdot\right)'\tilde{\gamma}_{n}$. For any
$p=(p_1,\ldots,p_n)$ with $p_i\in[0,1]$, define
\[
\begin{gathered}
D=\mathop{\rm diag}(D_1,\ldots,D_n),\quad
X=[X_1,\ldots,X_n]',\quad y=(Y_1,\ldots,Y_n)',\\
p_0=(p_{01},\ldots,p_{0n})',\quad
B_K(p)=D[b_K(p_1),\ldots,b_K(p_n)]',\\
Q_K(p)=B_K(p)\{B_K(p)'B_K(p)\}^{+}B_K(p)'.
\end{gathered}
\]
Here ${}^{+}$ denotes the Moore--Penrose inverse. Then the LS estimator of $\beta$ is
written in the following closed-form expression:
\[
\tilde{\beta}_{n}
=\left(\left(DX\right)'\left(I-Q_K\left(p_{0}\right)\right)\left(DX\right)/n\right)^{+}
\left(DX\right)'\left(I-Q_K\left(p_{0}\right)\right)\left(Dy\right)/n.
\]

Let $v_i:=X_i-E[X_i\mid p_{0i},D_i=1]$, the selected-sample residual,
$\tilde X_i:=D_iv_i$, and $A:=E[\tilde X_i'\tilde X_i]$.
\cite{donald1994series} proved that this estimator has the following asymptotic distribution under their regularity conditions:
\begin{equation}\label{eq: asymptotic distribution when p is known}
	\sqrt{n}(
	\tilde{\beta}_{n}-\beta_{0})\rightarrow_{d}N\left(0,A^{-1}E\left[\sigma_{0}^{2}\left(X_i,D_i\right)
		\tilde{X}_i'\tilde{X}_i\right]A^{-1}
		\right).
\end{equation}
If the error term is homoskedastic, so that $\sigma_{0}(X_i,D_i)$ is constant, the sieve LS estimator $\tilde{\beta}_n$ is semiparametrically efficient, as discussed in \cite{li2007nonparametric}. When the error term exhibits heteroskedasticity, efficient estimation can be achieved through the sieve generalized least squares (SGLS) estimator.

As $p_{0i}$ is never observed in practice, $\tilde{\beta}_n$ is an infeasible estimator. Suppose $p_{0i}$ is consistently estimated by an estimator $\hat{p}_n(X_i)$. Define $\hat{p}= (\hat{p}_n(X_1), \cdots, \hat{p}_n(X_n))'$. Replacing $p_{0}$ with $\hat{p}$ in $\tilde{\beta}_n$, we obtain the following feasible estimator:
\[
  \hat{\beta}_{n}
  =\left(\left(DX\right)'\left(I-Q_K\left(\hat{p}\right)\right)
   \left(DX\right)/n\right)^{+}\left(DX\right)'\left(I-Q_K\left(\hat{p}\right)\right)\left(Dy\right)/n.
\]
Equivalently, on the event that the residualized Gram matrix is nonsingular,
Frisch--Waugh--Lovell gives
$\hat\beta_n=\hat A^{-1}n^{-1}\sum_iD_i\hat v_i'Y_i$, where $\hat v_i$
is the residual from regressing $X_i$ on $b_K(\hat p_i)$ in the selected sample
and $\hat A:=n^{-1}\sum_iD_i\hat v_i'\hat v_i$.
We will show that $\hat{\beta}_n$ remains $\sqrt{n}$-consistent and asymptotically normal, but that its asymptotic variance contains a correction term reflecting the first-stage estimation of $p_0$.

We first introduce the H\"{o}lder class of functions. Let
$\left[m\right]$ be the largest nonnegative integer such that $\left[m\right]
<m$. A real-valued function $\lambda$ on $\left[0,1\right]$ is said to be in the H\"{o}lder space $\Lambda^{m}\left(\left[0,1\right]\right)$ if it is
$\left[m\right]$ times continuously differentiable on $\left[0,1\right]$ and its H\"{o}lder norm
\[
  \lVert\lambda\rVert_{\Lambda^{m}}:=\max_{\ell\leq\left[m\right]}\sup_{p}
        \left|\frac{\partial^{\ell}\lambda\left(p\right)}{\partial p^{\ell}}\right|
  +\sup_{p,p'}
    \left|\frac{\partial^{\left[m\right]}\lambda\left(p\right)}{\partial p^{\left[m\right]}}
    -\frac{\partial^{\left[m\right]}\lambda\left(p'\right)}{\partial p^{\left[m\right]}}\right|
    /\left|p-p'\right|^{m-\left[m\right]}
\]
is finite. We impose the following conditions under which the sieve LS estimator with known $p_{0i}$ has the asymptotic distribution \eqref{eq: asymptotic distribution when p is known}.
\begin{assumption}\label{assu:IID}
(i) $\left\{W_i\right\}_{i=1}^{n}$ are i.i.d.; (ii) the support of $X_i$, $\mathcal{X}$,
is compact; (iii) the density of $p_{0i}$ is bounded and bounded away from zero on its compact support in $\left[0,1\right]$.
\end{assumption}
\begin{assumption}\label{assu:hfunc}
(i) $\lambda_0\in\Lambda^{m}([0,1])$ with $m>1/2$, and $\lambda_0$ is
continuously differentiable with bounded Lipschitz derivative:
$|\lambda_0'(p)-\lambda_0'(q)|\leq\bar L_2|p-q|$ for all $p,q\in[0,1]$;
(ii) $\forall \lambda\in\Lambda^{m}\left(\left[0,1\right]\right),\exists \lambda_{n}\left(p;\gamma\right)
\in\Lambda_{n}$ such that $\lVert \lambda_{n}-\lambda\rVert_{\infty}=O\left(K^{-m}
\right)$, where $K=K_{n}\rightarrow\infty$ is the second-stage sieve dimension with $K\log K/n\rightarrow0$.
\end{assumption}
\begin{assumption}\label{assu:s2}
$\sigma_{0}^{2}\left(X_i,D_i\right):=E\left[D_i\left(Y_i-X_i'\beta_{0}-\lambda_{0}
\left(p_{0i}\right)\right)^{2}|X_i=x,D_i=1\right]$ is positive and bounded uniformly over $x \in \mathcal{X}$.
\end{assumption}

\begin{assumption}\label{assu:Dv*}
(i) $A$ is positive definite.
(ii) Each element of $E\left[D_iX_i\mid p_{0i},D_i=1\right]$ belongs to the
H\"{o}lder space $\Lambda^{m_j}\left([0,1]\right)$, where $m_j>1/2$ for
$j=1,\ldots,\dim(\beta)$. The second-stage sieve approximation errors obey
\begin{align*}
e_{\lambda}(K)&\rightarrow0,\qquad e_X(K)\rightarrow0,\qquad
\sqrt n\,e_{\lambda}(K)e_X(K)\rightarrow0,\\
e_{\lambda}(K)
&:=\inf_{\pi}
\left\lVert\lambda_0-b_K(\cdot)'\pi\right\rVert_{L^2(P_{p_0\mid D=1})},\\
e_X(K)
&:=\max_{1\le j\le\dim(\beta)}\inf_{\pi_j}
\left\lVert E[X_{ij}\mid p_{0i},D_i=1]
-b_K(\cdot)'\pi_j\right\rVert_{L^2(P_{p_0\mid D=1})}.
\end{align*}
\end{assumption}

\noindent Assumptions \ref{assu:IID} and \ref{assu:s2} are standard regularity conditions. Assumption \ref{assu:hfunc} imposes mild H\"{o}lder smoothness on $\lambda$ and approximability by the second-stage sieve. Assumption \ref{assu:Dv*}(i) is the standard identification condition that is satisfied when the assumptions in Propositions \ref{prop: identification - single continuous variable}-\ref{prop: identification - one continuous variable and discrete variables} are met. Assumption \ref{assu:Dv*}(ii) ensures the conditional mean of $X_i$ given $p_{0i}$ and $D_i=1$ is smoothly approximable.\footnote{Its product condition is a joint requirement on $\lambda_{0}$ and $E\left[X_i|p_{0i},D_i=1\right]$. A sufficient primitive is $\sqrt{n}\,K^{-\left(m+\underline{m}\right)}\rightarrow0$ with $\underline{m}:=\min_{j}m_{j}$, provided the second-stage spline order is high enough to deliver $e_{\lambda}(K)=O(K^{-m})$ and $e_{X}(K)=O(K^{-\underline{m}})$. }

The estimation of $p_0$ in the first stage is generically \emph{not} asymptotically negligible. We therefore treat the first stage as a genuinely nonparametric sieve estimator of the conditional mean $p_{0}\left(x\right)=E\left[D_{i}|X_{i}=x\right]$. Let $\psi_{L}\left(\cdot\right)=\left(\psi_{1},\ldots,\psi_{L}\right)'$ be a known sieve basis with dimension $L=L_{n}\rightarrow\infty$, and let
\[
\hat{p}\left(x\right)=\psi_{L}\left(x\right)'\hat{\pi},\qquad
\hat{\pi}=\Bigl(\textstyle\sum_{i}\psi_{L}\left(X_{i}\right)\psi_{L}\left(X_{i}\right)'\Bigr)^{-1}\textstyle\sum_{i}\psi_{L}\left(X_{i}\right)D_{i},
\]
be the sieve LS estimator of $p_{0}$. Define the Riesz-representer weight
\[
\alpha_{0}\left(x\right):=p_{0}\left(x\right)\,\lambda_{0}'\left(p_{0}\left(x\right)\right)\,v\left(x\right)',\qquad
v\left(x\right):=x-E\left[X_i\mid p_{0}\left(X_i\right)=p_{0}\left(x\right),D_i=1\right].
\]
Write $\zeta\left(L\right):=\sup_{x\in\mathcal{X}}\lVert\psi_{L}\left(x\right)\rVert_{e}$ and let
\[
\Delta_{p}\left(L\right):=\inf_{\pi}\lVert p_{0}-\psi_{L}'\pi\rVert_{L^{2}\left(F_{X}\right)},\qquad
\Delta_{\alpha}\left(L\right):=\inf_{\Pi}\lVert \alpha_{0}-\psi_{L}'\Pi\rVert_{L^{2}\left(F_{X}\right)}
\]
denote the $L^{2}$ sieve-approximation errors for $p_{0}$ and for the representer $\alpha_{0}$.

\begin{assumption}\label{assu:np-firststage}
(i) The eigenvalues of $E\left[\psi_{L}\left(X_i\right)\psi_{L}\left(X_i\right)'\right]$ are bounded away from zero and infinity uniformly in $L$, and the leverage condition $\zeta\left(L\right)^{2}L\log L/n\rightarrow0$ holds. (ii) $\Delta_{p}\left(L\right)\rightarrow0$, $\Delta_{\alpha}\left(L\right)\rightarrow0$, $\zeta(L)\Delta_{p}(L)\to0$, the $L^{2}$ rate $\lVert\hat{p}-p_{0}\rVert_{L^{2}}=o_{p}\left(n^{-1/4}\right)$, and $\lVert\hat{p}-p_{0}\rVert_{\infty}=o_{p}(1)$.
(iii) $\sqrt{n}\,\Delta_{p}\left(L\right)\,\Delta_{\alpha}\left(L\right)\rightarrow0$.
(iv) The map $p\mapsto E\left[X_i\mid p\left(X_i\right),D_i=1\right]$ is Lipschitz in $L^{2}\left(F_{X}\right)$ along the first-stage perturbations: there is $C<\infty$ with
$E\bigl[\lVert E[X_i\mid \hat{p}(X_i),D_i=1]-E[X_i\mid p_{0}(X_i),D_i=1]\rVert^{2}\bigr]\le C\,\lVert\hat{p}-p_{0}\rVert_{L^{2}}^{2}+o_{p}(n^{-1})$.
\end{assumption}

\noindent Assumption~\ref{assu:np-firststage}(i) is a standard regularity condition.\footnote{The leverage is mildly stronger than the standard invertibility requirement $\zeta(L)^{2}\log L/n\to0$ \citep{newey1997convergence,belloni2015some}, as the score channel of the Riesz step consumes an additional factor of $L$.} (ii) is the first-stage rate condition, and it is substantially weaker than the parametric-rate first stage required in \cite{newey2009two}.\footnote{Only an $n^{-1/4}$ rate in $L^{2}$ is needed, and not a uniform one, because the quadratic remainder is controlled in $L^{2}$ and is the only channel that consumes an estimation rate. The first-order effect of the first stage is not orthogonalized away but priced, entering the asymptotic variance rather than imposing a faster rate on $\hat{p}$.} (iii) is the key undersmoothing condition: only the \emph{product} of the two approximation errors must be $o(n^{-1/2})$, far weaker than the $\sqrt{n}\,\Delta_{p}\to0$ that would be needed if $\hat{p}$ itself had to be $\sqrt{n}$-consistent. (iv) is the no-exclusion-specific condition: convergence of $\hat{p}$ does not by itself imply that the partialling-out residual $\hat{v}_{i}$ converges to $v_{i}$, because $E[X_i\mid p,D_i=1]$ is taken along the level sets of the index.\footnote{A sufficient primitive condition is that $p_{0}$ has no interior critical point on the (possibly trimmed) support, with $\lvert\nabla p_{0}\rvert$ bounded away from zero on each monotone branch; the level sets $\{x:p_{0}(x)=p\}$ then vary smoothly in $p$ and $\hat{v}_{i}\rightarrow v_{i}$.}

The smoothness conditions below are imposed along an \emph{admissible neighbourhood} of $p_{0}$ rather than the raw sup-norm ball.\footnote{The latter contains perturbations of $p_{0}$ that are constant on sets of positive probability, for which $p(X_i)$ has atoms, so density-based conditions quantified over it would be vacuous.} Write $f_{p}$ for the conditional density of $p(X_i)$ given $D_i=1$, required to exist on an interval support with the bound below holding on its interior, and $m_{Y,p}$, $m_{X,p}$ for the conditional means $E[Y_i\mid p(X_i)=\cdot,D_i=1]$ and $E[X_i\mid p(X_i)=\cdot,D_i=1]$. With $\Psi_{L}$ the linear span of $\psi_{L}$ and constants $\varepsilon>0$, $0<\underline{f}\leq\bar{f}<\infty$, $C_{m}<\infty$ fixed, the admissible class is
\[
\mathcal{P}_{n}:=\bigl\{p\in\Psi_{L}\cup\{p_{0}\}:\ \lVert p-p_{0}\rVert_{\infty}\leq\varepsilon,\ f_{p}\in[\underline{f},\bar{f}],\ \lVert m_{Y,p}\rVert_{\Lambda^{m}}\vee\lVert m_{X,p}\rVert_{\Lambda^{m}}\leq C_{m}\bigr\}.
\]
Coefficient vectors are bounded automatically.\footnote{$p=\psi_{L}'\pi\in\mathcal{P}_{n}$ implies $\lVert\pi\rVert^{2}\leq\underline{\lambda}_{G}^{-1}\lVert p\rVert_{L^{2}}^{2}\leq\underline{\lambda}_{G}^{-1}(\lVert p_{0}\rVert_{\infty}+\varepsilon)^{2}$ by Assumption~\ref{assu:np-firststage}(i).} We impose the following smoothness and regularity conditions.

\begin{assumption}\label{assu:diff} (i) $E\left[Y_i|p_{0}(X_i)=\cdot,D_i=1\right]$ and $E\left[X_i|p_{0}(X_i)=\cdot,D_i=1\right]$ are twice continuously differentiable; (ii) there is $C>0$ such that for all $p\in \mathcal{P}_{n}$,
\[
\begin{split}
E\Bigl[\bigl|E[X_i|p(X_i),D_i=1]-E[X_i|p_{0}(X_i),D_i=1]\bigr|_{e}^{2}&\\
{}+\bigl(E[Y_i|p(X_i),D_i=1]-E[Y_i|p_{0}(X_i),D_i=1]\bigr)^{2}\Bigr]
   &\leq C\lVert p-p_{0}\rVert_{L^{2}}^{2};
\end{split}
\]
(iii) for some $\varepsilon>0$, $P[D_i=1|p_{0}(X_i)=\cdot]>\varepsilon$.
\end{assumption}

\begin{assumption}\label{assu:p}
(i) $p_{0}\in\mathcal{P}_{n}$ for all $n$ large: $p_{0}(X_i)$ is continuously distributed given $D_i=1$ with interval support and conditional density in $[\underline{f},\bar{f}]$ on its interior, and the conditional means $E[Y_i\mid p_{0}(X_i)=\cdot,D_i=1]$, $E[X_i\mid p_{0}(X_i)=\cdot,D_i=1]$ have $\Lambda^{m}$ norms at most $C_{m}$. (ii) The fitted first stage is admissible with probability tending to one: $P\left[\hat{p}\in\mathcal{P}_{n}\right]\rightarrow1$. (iii) The second-stage Gram is uniformly nondegenerate over the class:
\begin{align*}
\inf_{p\in\mathcal{P}_{n}}\lambda_{\min}\bigl(E[D_ib_K(p(X_i))b_K(p(X_i))']\bigr)\geq &\ c_{G}>0\\
\sup_{p\in\mathcal{P}_{n}}\lambda_{\max}\bigl(E[D_ib_K(p(X_i))b_K(p(X_i))']\bigr)\leq &\ \bar{c}_{G}<\infty
\end{align*}
\end{assumption}

Assumption \ref{assu:hfunc}(i) supplies the smoothness of the selection-correction
function. Assumption \ref{assu:diff}(i) imposes smoothness on the conditional
means of $Y_i$ and $X_i$ given the propensity, while part (ii) is an
$L^{2}$-Lipschitz condition along perturbations within the admissible class.
Together with Assumption \ref{assu:np-firststage}(iv), it ensures the
partialling-out residual is stable. Assumption \ref{assu:p}(i) regularizes the
density of the generated regressor at $p_{0}$, as is standard in series
estimation \citep[Assumption 4.4 of][]{newey2009two}. Assumption
\ref{assu:p}(ii), membership of the fitted first stage, is a high-level
condition that accommodates a fully data-driven index.\footnote{A primitive
sufficient condition: if $p_{0}$ has no interior critical point on the
estimation support, with $\lvert\nabla p_{0}\rvert$ bounded away from zero on
each monotone branch, and the first stage is derivative-consistent on those
branches, then $\hat{p}\in\mathcal{P}_{n}$ with probability tending to one.}
Assumption~\ref{assu:p}(iii) is the uniform-over-class version of the
second-stage Gram condition of Assumption~\ref{assu:secondstage}(i), which it
implies at $p=p_{0}$.

Recall $v_i$, $\tilde X_i$, and $A$ defined above, and let
\[
\Omega:=E\left[\sigma_{0}^{2}\left(X_i,D_i\right)\tilde{X}_i'\tilde{X}_i\right],\quad
\chi_{i}:=p_{0i}\,\lambda_{0}'\left(p_{0i}\right)v_{i}'\left(D_{i}-p_{0i}\right),
\]
where $\varepsilon_{i}:=Y_i-X_i'\beta_{0}-\lambda_{0}\left(p_{0i}\right)$ and the first-stage correction matrix is $\Omega_{\chi}:=E\left[\chi_{i}\chi_{i}'\right]=E\bigl[p_{0i}^{3}\left(1-p_{0i}\right)\lambda_{0}'\left(p_{0i}\right)^{2}v_{i}'v_{i}\bigr]$, finite under Assumption~\ref{assu:np-hl}(i) (indeed $\Omega_{\chi}\preceq\tfrac14\lVert\alpha_{0}\rVert_{L^{2}}^{2}I$). We state the asymptotic result in two layers. We first give high-level conditions on the two-step estimator under which the feasible estimator has the corrected limiting distribution; we then verify those conditions from the primitive growing-sieve and second-stage assumptions above. Let $\hat{v}_{i}$ be the residual from regressing $X_i$ on the second-stage spline basis $b_K(\hat{p}_i)$ within the selected sample.

\begin{assumption}\label{assu:np-hl}
(i) $\alpha_{0}\in L^{2}\left(F_{X}\right)$ and $A$ is nonsingular.
\begin{align*}
\text{(ii)}\quad
\sqrt n\,\frac1n\sum_iD_i\hat v_i'\lambda_0'(p_{0i})(\hat p_i-p_{0i})
&=\frac1{\sqrt n}\sum_i\chi_i+o_p(1),\\
\text{(iii)}\quad
\sqrt n\,\frac1n\sum_iD_i\hat v_i'
\bigl\{\lambda_0(p_{0i})-\lambda_0(\hat p_i)
+\lambda_0'(p_{0i})(\hat p_i-p_{0i})\bigr\}
&=o_p(1).
\end{align*}
(iv) $\hat A:=n^{-1}\sum_iD_i\hat v_i'\hat v_i\to_p A$, and the oracle term
$n^{-1/2}\sum_iD_i(\hat v_i-v_i)'\varepsilon_i=o_p(1)$. The second-stage
spline-approximation and empirical-projection contributions to the $X$-moment
are $o_p(n^{-1/2})$.
\end{assumption}

\begin{prop}[Asymptotics under high-level conditions]\label{prop: asymptotics np}
    Let Assumptions \ref{assu:IID}(i)--(ii), \ref{assu:s2}, and
    \ref{assu:np-hl} hold. Then
$\sqrt{n}\left(\hat{\beta}_{n}-\beta_{0}\right)=A^{-1}\frac{1}{\sqrt{n}}\sum_{i=1}^{n}\left\{\tilde{X}_i'\varepsilon_{i}-\chi_{i}\right\}+o_{p}\left(1\right)$, and hence $\sqrt{n}(\hat{\beta}_{n}-\beta_{0})\rightarrow_{d}N\left(0,\,V_{\mathrm{NP}}\right)$, where $V_{\mathrm{NP}}=A^{-1}\left(\Omega+\Omega_{\chi}\right)A^{-1}$.
\end{prop}
The two components of the influence function are uncorrelated, so no covariance term appears in $V_{\mathrm{NP}}$. The first component reproduces the oracle variance in \eqref{eq: asymptotic distribution when p is known}; the second, $A^{-1}\Omega_{\chi}A^{-1}$, is the nonparametric first-stage correction, the Riesz representer of the propensity functional in the sense of \cite{newey1994asymptotic}. When $\lambda_{0}'\equiv0$, that is, when there is no selection on unobservables, $\chi_{i}=0$ and $V_{\mathrm{NP}}$ collapses to the oracle variance, so the standard (heteroskedasticity-robust) sandwich formula from the second-stage regression is valid; this restriction is testable, for example through the joint significance of the sieve terms of $\hat{p}$ in the second stage. The fixed-dimensional sieve-index first stage used in our implementation is a parallel implementable specification (Proposition~\ref{prop: asymptotics} in Section~\ref{subsec: practical implementations}), for which the correction takes the score--information form and the full argument is given in Section~S3 of the Online Appendix.

The primitive growing-sieve and second-stage conditions imply the high-level expansion.
\begin{assumption}[Second-stage sieve]\label{assu:secondstage}
		The selection-correction function $\lambda_{0}$ is approximated in the second stage by a cubic spline basis $b_K\left(\cdot\right)$ of dimension $K=K_{n}\rightarrow\infty$, with $\xi\left(K\right):=\sup_{t}\lVert b_K\left(t\right)\rVert_{e}$, satisfying: (i) $\lambda_{\min}\bigl(E[D_{i}b_K(p_{0i})b_K(p_{0i})']\bigr)$ is bounded away from zero; (ii) $\xi\left(K\right)^{2}K\log K/n\rightarrow0$; (iii) the undersmoothing rate $\sqrt{n}\,K^{-m}\rightarrow0$, where $m>1/2$ is the H\"{o}lder smoothness of $\lambda_{0}$ (Assumption~\ref{assu:hfunc}) and of the conditional means $E[X_i\mid\cdot,D_i=1]$, $E[Y_i\mid\cdot,D_i=1]$ (Assumption~\ref{assu:p}(i)); and (iv) the cross-rate condition $(K+L)\log n=o(\sqrt{n})$, that is, $K$ and $L$ grow more slowly than $\sqrt{n}$ up to logarithmic factors, which, together with the other rates of Assumptions~\ref{assu:np-firststage} and~\ref{assu:secondstage}, makes the second-stage cross-remainder $\sqrt{n}\,\lVert\hat{v}-v\rVert_{L^{2}}\,\lVert\hat{p}-p_{0}\rVert_{L^{2}}$ vanish, where $\lVert\hat{v}-v\rVert_{L^{2}}$ carries the uniform-concentration rate of Lemma~S1.2 of the Online Appendix; and (v) the generated-index second-stage regression is stable: $\lambda_{\min}\bigl(n^{-1}\sum_iD_ib_K(\hat p_i)b_K(\hat p_i)'\bigr)$ is bounded away from zero with probability tending to one and $\sup_{1\le i\le n}\lVert\hat{v}_i\rVert=O_{p}(1)$.
\end{assumption}
\begin{prop}[Primitive sufficient conditions]\label{prop: np primitives}
It suffices that\par\noindent
Assumptions \ref{assu:IID}, \ref{assu:hfunc}, \ref{assu:s2},
\ref{assu:Dv*}, \ref{assu:np-firststage}, \ref{assu:diff},
\ref{assu:p}, and \ref{assu:secondstage} hold;
$K$ and $L$ grow polynomially in $n$; and
$p_0\in[\underline c,\bar c]\subset(0,1)$ on the support.
Then Assumption~\ref{assu:np-hl} holds, so
Proposition~\ref{prop: asymptotics np} applies.
\end{prop}
\noindent Assumption~\ref{assu:secondstage}(iii) is stronger than the bias-variance-optimal rate $K\asymp n^{1/(2m+1)}$ admissible under Assumption~\ref{assu:hfunc}: with a nonparametrically generated regressor $\hat{p}$, the second-stage spline-approximation residual is not differenced away at the optimal rate. (iii) and~(iv) are jointly compatible only when $m>1$, since undersmoothing requires $K\gg n^{1/(2m)}$ while (iv) caps $K$ at $\sqrt{n}$ up to logarithms. (v) is the generated-regressor analogue of the usual series-regression stability condition; given the admissible class of Assumption~\ref{assu:p}, both of its parts are in fact implied.\footnote{The generated-index Gram conditioning follows from Lemma~S1.2(b) of the Online Appendix, and the uniform residual bound since $\sup_i\lVert\widehat{\Pi}_K X_i\rVert\le\lVert\Pi_K m_{X,\hat{p}}\rVert_{\infty}+\xi(K)\,O_p\bigl(\sqrt{(K+L)\log n/n}\bigr)=O_p(1)$ under Assumptions~\ref{assu:secondstage}(ii) and~\ref{assu:np-firststage}(i), by the uniformly bounded Lebesgue constant of the spline projection and the boundedness of $m_{X,\hat{p}}$ built into the admissible class. The $L^{4}$ route of Lemma~S1.4 can replace the uniform residual bound if one prefers not to invoke the class construction.} We retain (v) as a stated condition for transparency, since it is the primitive used to control the quadratic remainder in Lemma~S1.4 of the Online Appendix.

\subsection{Semiparametric efficiency}\label{subsec: efficiency}

Because $p_0$ is unknown, the relevant benchmark is the efficiency bound of the feasible model, not the oracle variance. We specialize \cite{newey1993efficiency} to the model $\mathcal P$ in which $p_0$ is unrestricted subject to overlap and density regularity, $E[Y_i\mid X_i,D_i=1]=X_i'\beta+\lambda(p_0(X_i))$, and the remaining selected-sample distribution is unrestricted. Let $\sigma_0^2(x)=\mathrm{Var}(Y_i\mid X_i=x,D_i=1)$ and define
\begin{align*}
	\tilde{u}_{i}&:=D_{i}\varepsilon_{i}
	-p_{0i}\lambda_{0}'(p_{0i})(D_{i}-p_{0i}),\\
	\kappa^{2}(X_i)&:=E[\tilde{u}_{i}^{2}\mid X_i]
	=p_{0i}\sigma_{0}^{2}(X_i)+p_{0i}^{3}(1-p_{0i})\lambda_{0}'(p_{0i})^{2}.
\end{align*}
\[
w_{0}(X_i):=\frac{p_{0i}}{\kappa^{2}(X_i)}=\frac{1}{\sigma_{0}^{2}(X_i)+p_{0i}^{2}(1-p_{0i})\lambda_{0}'(p_{0i})^{2}},
\]
together with $\tilde{E}[X\mid p_0]:=E[(p_0^2/\kappa^2)X\mid p_0]/E[p_0^2/\kappa^2\mid p_0]$ and $\tilde v_i:=X_i-\tilde E[X\mid p_0](p_{0i})$. The composite residual combines the outcome and selection channels with a relative weight pinned by the model.

\begin{assumption}[Efficiency regularity]\label{assu:eff}
	(i) Overlap and the density conditions of Assumption~\ref{assu:p} hold, $\sigma_0^2$ is bounded and bounded away from zero, and $E[(p_0^2/\kappa^2)\tilde v\tilde v']$ is nonsingular. (ii) The range of $w_0$ lies in a known compact interval $[\underline w,\bar w]\subset(0,\infty)$. (iii) Assumptions~\ref{assu:np-firststage}(iv), \ref{assu:diff}(i)--(ii), and~\ref{assu:secondstage} hold for the $w_0$-weighted conditional means and projections, uniformly over $\mathcal P_n$, and the approximation conditions of Assumption~\ref{assu:np-firststage}(ii)--(iii) hold for $\alpha_{w_0}:=p_0w_0\lambda_0'(p_0)\tilde v$. (iv) The fold-specific weight estimate uses outcomes only from its own fold and the full-sample $(X,D)$ data, is truncated to $[\underline w,\bar w]$, and satisfies $\lVert\hat w^{(k)}-w_0\rVert_\infty=o_p(n^{-1/4})$.
\end{assumption}

\begin{prop}[Efficiency]\label{prop: efficiency bound}
	Let Assumptions \ref{assu:IID}, \ref{assu:hfunc}(i), \ref{assu:Dv*}(i), and \ref{assu:eff}(i) hold for $\mathcal P$. Every regular asymptotically linear estimator has influence function $\tilde B(X_i)\tilde u_i$, where
	$E[p_0\tilde BX']=I$ and $E[p_0\tilde B\mid p_0]=0$. The efficient influence function is $\psi_i^*=V^*w_0(X_i)\tilde v_i\tilde u_i$, and the efficiency bound is
	\[
	V^{*}=\Bigl(E\Bigl[\frac{p_{0i}^{2}}{\kappa^{2}(X_i)}\,\tilde{v}_{i}\tilde{v}_{i}'\Bigr]\Bigr)^{-1}.
	\]
	(a) Moreover, $V_{\mathrm{NP}}\succeq V^*$, with equality if and only if
	\[
	(\kappa^2/p_0)v(X)=LX+\mu(p_0(X))\quad\text{a.s.}
	\]
	Constancy of $\kappa^2(X)/p_0(X)$ is sufficient and, when $\sigma_0^2$ depends only on $p_0$, necessary wherever $\mathrm{Var}(X\mid p_0)$ is nondegenerate. (b) Also $V^*\succeq V^\circ$, the known-$p_0$ bound, with equality under $\lambda_0'(p_0(X))=0$. The inequality is strict when $\mathrm{Var}(X\mid p_0)$ has full rank on a positive-measure subset of $\{\lambda_0'(p_0)\neq0\}$.
\end{prop}

The proposition shows that first-stage noise is intrinsic to the model: regular estimators may reweight, but cannot remove, the selection channel. Our unweighted estimator is efficient only under the equality condition in part (a). In particular, homoskedasticity does not generally restore efficiency because $\kappa^2/p_0=\sigma^2+p_0^2(1-p_0)\lambda_0'(p_0)^2$ still varies with $p_0$. The missing weighting is easily supplied. Split the sample into two folds $I_{1},I_{2}$ (of sizes $n_{1},n_{2}$ with $n_{k}/n\rightarrow1/2$); keep the full-sample first stage $\hat{p}$; on each fold $k$ estimate the weight function
\[
\hat{w}^{(k)}(\cdot):=T_{[\underline{w},\bar{w}]}\Bigl(\bigl\{\hat{\sigma}^{2}_{(k)}(\cdot)+\hat{p}^{2}(1-\hat{p})\hat{\lambda}'_{(k)}(\hat{p})^{2}\bigr\}^{-1}\Bigr),
\]
where $\hat{\lambda}'_{(k)}$ and $\hat{\sigma}^{2}_{(k)}$ come from a preliminary (unweighted) second stage on fold $I_{k}$ and $T$ truncates; then run the weighted second stage on each fold with the \emph{other} fold's weight, $\hat{\beta}^{(k)}$ from weighted least squares of $Y_i$ on $(X_i,b_K(\hat{p}_i))$ over $\{i\in I_{k}:D_i=1\}$ with weights $\hat{w}^{(-k)}(X_i)$, and average: $\hat{\beta}^{*}:=\sum_{k}(n_{k}/n)\hat{\beta}^{(k)}$.

\begin{prop}[Efficient weighted estimator]\label{prop: efficient estimator}
	Let the conditions of Proposition~\ref{prop: np primitives} and Assumption~\ref{assu:eff} hold. Then
	\[
	\sqrt{n}\bigl(\hat{\beta}^{*}-\beta_{0}\bigr)=\frac{1}{\sqrt{n}}\sum_{i=1}^{n}\psi^{*}_{i}+o_{p}(1)\rightarrow_{d}N\bigl(0,\,V^{*}\bigr),
	\]
	so $\hat{\beta}^{*}$ attains the semiparametric efficiency bound of Proposition~\ref{prop: efficiency bound}.
\end{prop}

\begin{remark}\normalfont\label{rem: efficient weights}
	(Implementation) The efficient weight requires only $\hat p$, the derivative of the fitted second-stage spline, and an estimate of $\sigma_0^2$. When $\sigma_0^2$ depends only on $p_0$, the variance estimate requires one additional spline regression of squared residuals on $\hat p$. A composite sandwich estimator, detailed in the Online Appendix, consistently estimates $V^*$ under $m>2$ and $K^4\log K/n\to0$. A misspecified limiting weight preserves consistency and asymptotic normality under the corresponding stability conditions, but loses efficiency.
\end{remark}

\subsection{Practical implementations}\label{subsec: practical implementations}
In implementation we estimate $p_{0}$ by a fixed-dimensional sieve \emph{index} model rather than by series least squares, because the index form is $\sqrt{n}$-asymptotically linear with a closed-form influence and is available in any statistical software. Let $\phi\left(x\right)=\left(\phi_{1}\left(x\right),\ldots,\phi_{d_{\gamma}}\left(x\right)\right)'$ be a vector of $d_{\gamma}$ known basis functions (including a constant), with $d_{\gamma}$ fixed, and let $F$ be a known link function with derivative $f$.

\begin{assumption}[Fixed-dimensional sieve index]\label{assu:convergence}
	(i) $p_{0}\left(x\right)=F\left(\phi\left(x\right)'\gamma_{0}\right)$ for some $\gamma_{0}\in\mathrm{int}\left(\Gamma\right)$ with $\Gamma\subset\mathbb{R}^{d_{\gamma}}$ compact, where $F$ is strictly increasing and twice continuously differentiable with bounded first and second derivatives; (ii) the first-stage estimator $\hat{\gamma}_{n}$ satisfies $\sqrt{n}\left(\hat{\gamma}_{n}-\gamma_{0}\right)=J^{-1}\frac{1}{\sqrt{n}}\sum_{i=1}^{n}s_{i}+o_{p}\left(1\right)$, where $s_{i}=s\left(D_{i},X_{i}\right)$ with $E\left[s_{i}\right]=0$ and $\Sigma_{s}:=E\left[s_{i}s_{i}'\right]$ finite and positive definite, and $J$ is nonsingular; (iii) there are estimators $\hat{J}\rightarrow_{p}J$ and $\hat{\Sigma}_{s}\rightarrow_{p}\Sigma_{s}$; (iv) $\lambda_{0}$ is $s$ times continuously differentiable on $\left[0,1\right]$ with bounded derivatives for some $s\geq3$, and $F$ is $s$ times continuously differentiable with bounded derivatives; (v) the second-stage spline dimension $K$ of Assumption~\ref{assu:hfunc} satisfies $K^{4}/n\rightarrow0$ and $\sqrt{n}\,K^{-s-1}\rightarrow0$, which strengthen the corresponding rates of Assumption~\ref{assu:hfunc} for this special case and match the spline conditions of \citet[Assumption~4.5]{newey2009two}.
\end{assumption}

\noindent Assumption~\ref{assu:convergence} is satisfied by the sieve probit (or logit) MLE with a fixed number of basis functions, in which case $s_{i}$ is the score, $J$ is the information matrix, and $V_{\gamma}:=J^{-1}\Sigma_{s}J^{-1}=J^{-1}$ under correct specification; more generally any $\sqrt{n}$-consistent asymptotically linear index estimator is admissible \citep{klein1993efficient}. Define $\dot{p}_{\gamma}\left(x\right):=\partial p_{\gamma}\left(x\right)/\partial\gamma=f\left(\phi\left(x\right)'\gamma\right)\phi\left(x\right)$ and $G:=E\left[\tilde{X}_i'\,\lambda_{0}'\left(p_{0i}\right)\dot{p}_{\gamma_{0}}\left(X_i\right)'\right]$. Under this special case the nonparametric correction $\chi_{i}$ of Proposition~\ref{prop: asymptotics np} takes a score--information form.

\begin{prop}[Fixed-dimensional first stage]\label{prop: asymptotics}
		Let Assumptions \ref{assu:IID}, \ref{assu:hfunc}, \ref{assu:s2}, \ref{assu:Dv*}(i), \ref{assu:diff}(i) and (iii), \ref{assu:p}(i), and \ref{assu:convergence} hold. Then,
	\[
	\sqrt{n}\left(\hat{\beta}_{n}-\beta_{0}\right)=A^{-1}\frac{1}{\sqrt{n}}\sum_{i=1}^{n}\left\{\tilde{X}_i'\varepsilon_{i}-GJ^{-1}s_{i}\right\}+o_{p}\left(1\right),
	\]
	and consequently
	\begin{equation}\label{eq: asymptotic distribution when p is estimated}
		\sqrt{n}(\hat{\beta}_{n}-\beta_{0})\rightarrow_{d}N\left(0,\,V\right),\quad V=A^{-1}\left(\Omega+GV_{\gamma}G'\right)A^{-1},\quad V_{\gamma}:=J^{-1}\Sigma_{s}J^{-1}.
	\end{equation}
\end{prop}
\noindent By the chain rule, $G$ is the matrix $H$ of \cite{newey2009two}, so Proposition~\ref{prop: asymptotics} specializes his Theorem~4.1; Section~S3 of the Online Appendix verifies its conditions. Along a growing first-stage sieve, the score--information correction tracks the Riesz correction under \cite{ackerberg2012practical}. At fixed dimension, however, the reported standard errors estimate $V$ in \eqref{eq: asymptotic distribution when p is estimated}, not literally $V_{\mathrm{NP}}$.

In practical implementation, we propose the following simple two-step procedure.
\setcounter{bean}{0}
\begin{list}{\textsc{Step} \arabic{bean}.}{\usecounter{bean}\leftmargin=4em}
		\item Using the full sample, estimate $p_0$ by the sieve MLE, i.e., a probit (or logit) regression of $D_i$ on the basis $\phi(X_i)$, and obtain the fitted values $\hat{p}$, the score vectors $\hat{s}_i$, and the information matrix $\hat{J}$.
		\item Conditional on $D_i=1$, estimate $\beta_0$ and $\lambda_0$ using the SLS estimator, and compute the corrected standard errors from \eqref{eq: asymptotic distribution when p is estimated} as described below.
\end{list}
In the simulations and application, Step 1 uses a sieve probit with cubic regression splines and Step 2 uses cubic B-splines in $\hat p$. Thus these exercises implement Proposition~\ref{prop: asymptotics}. Its corrected variance is estimated by
\begin{equation}\label{eq: corrected variance estimator}
	\hat{V}=\hat{A}^{-1}\left(\frac{1}{n}\sum_{i=1}^{n}\hat{\omega}_{i}\hat{\omega}_{i}'\right)\hat{A}^{-1},\qquad
	\hat{\omega}_{i}:=D_{i}\hat{v}_{i}'\hat{\varepsilon}_{i}-\hat{G}\hat{J}^{-1}\hat{s}_{i},
\end{equation}
where $\hat{v}_{i}$ is the residual from regressing $X_i$ on the second-stage spline basis of $\hat{p}_i$ within the selected sample (the partialling-out residual that standard partial linear routines already compute), $\hat{\varepsilon}_{i}$ is the second-stage residual, $\hat{A}=n^{-1}\sum_{i}D_{i}\hat{v}_{i}'\hat{v}_{i}$, $\hat{\lambda}'\left(\cdot\right)$ is the derivative of the fitted second-stage spline, and $\hat{G}=\frac{1}{n}\sum_{i=1}^{n}D_{i}\hat{v}_{i}'\,\hat{\lambda}'\left(\hat{p}_{i}\right)f\left(\phi\left(X_{i}\right)'\hat{\gamma}_{n}\right)\phi\left(X_{i}\right)'$. The standard error of $\hat\beta_j$ is $(\hat V_{jj}/n)^{1/2}$. All ingredients are standard outputs of the two regressions. Omitting $\hat G\hat J^{-1}\hat s_i$ yields the robust second-stage sandwich; it ignores first-stage uncertainty and is valid when $G=0$, in particular under $\lambda_0'\equiv0$.

\begin{remark}\normalfont\label{rem: nonlinearity test}
    (Nonlinearity test) Nonlinearity in the first stage can be tested empirically. When a sieve MLE is employed in the first stage, let $D_i = \mathbbm{1}\bigl[\,g(X_i) + U_i > 0 \bigr]$, where $g(X_i)$ is approximated by $\sum_{k=1}^{d_{\gamma}} \gamma_k \,\phi_k(X_i)$. Significant coefficients of high-order sieve terms imply nonlinearity. Alternatively, one can also consider linear index specifications such as probit and logit. Let $l_{lin}$ and $l_{sieve}$ denote maximized log likelihoods of linear index and sieve-based models, respectively. Under the null hypothesis $H_0: g(X) = X'\beta$, the distribution of $l_{LR} = 2(l_{sieve}-l_{lin})$ is asymptotically $\chi^2_{d_{\gamma}-d_X}$ (with $d_{\gamma}$ fixed; the degrees of freedom are the difference between the unrestricted and restricted parameter counts). Reject $H_0$ (and thus reject linearity) if $l_{LR} \;>\; \chi^2_{d_{\gamma}-d_X,\,\alpha}$, where $\chi^2_{d_{\gamma}-d_X,\,\alpha}$ is the critical value of the chi-square distribution with $d_{\gamma}-d_X$ degrees of freedom at the significance level $\alpha$.
\end{remark}


\section{Simulations}\label{sec: simulations}

We evaluate the finite-sample performance of our semiparametric estimator using known data-generating processes (DGPs). For each DGP, we repeat
1,000 iterations, in each of which we draw a Monte Carlo sample of size $n = 5,000$.
We consider three designs: a single-covariate design (DGP0), in which identification
of $\beta_1$ turns on whether the selection index is monotone, and two designs with
two covariates (DGP1 and DGP2), in which $p_0(\cdot)$ exhibits sufficient
nonlinearity that $\beta_1$ and $\beta_2$ are point identified. We employ our
semiparametric estimator (denoted KL). For alternative estimators, we consider
the two-part model (referred to as TPM), which is the ordinary least squares (OLS)
estimator conditional on $D=1$, assuming random selection, and Heckman's MLE
(denoted as HSM). We also compare our estimator with the
oracle estimator, which incorporates the true functional forms of $p_0(X)$ and the
selection bias. The full design specifications, the single-covariate
design, the box plots for every design, the comparison with the \cite{lee2009bounds}
and \cite{honore2020selection} bounds, and designs with weak nonlinearity are
reported in Section~S4 of the Online Appendix.

Table~\ref{tab: finite sample performance - DGP1 and 2} displays the RMSE, mean bias, and the coverage probability of the 95\% confidence interval of each estimator. In both DGPs, TPM and Heckman's MLE are misspecified and hence suffer from large biases. In contrast, our semiparametric estimator performs exceptionally well in DGP1 for both parameters, and the oracle estimator outperforms it by a very slight margin. In DGP2, our estimator performs similarly to the oracle estimator for $\beta_1$, but shows a larger RMSE than the oracle estimator for $\beta_2$, possibly due to limited variation in the discrete covariate. The first-stage correction is not negligible, as the theory suggests. With uncorrected robust standard errors, the coverage of our estimator falls below the nominal level, while the corrected standard errors bring it to approximately 95\%.

\begin{table}[ht!]
	\centering\footnotesize
	\setlength{\tabcolsep}{4pt}
		\caption{Finite-sample performance of estimators}\label{tab: finite sample performance - DGP1 and 2}
	\begin{tabular}{lccccccccc}
		\toprule
		& & \multicolumn{4}{c}{DGP1} & \multicolumn{4}{c}{DGP2} \\
		\cmidrule(lr){3-6} \cmidrule(lr){7-10}
		& & TPM & HSM & KL & Oracle & TPM & HSM & KL & Oracle \\
		\midrule
		\multirow{2}{*}{RMSE} &$\beta_1$ &  0.255 & 0.100 &  0.060 &  0.045 &  0.153 &  0.224 &  0.049 &  0.047 \\
		& $\beta_2$ &  0.321 & 0.065 &  0.063 &  0.051 &  0.393 &  0.383 &  0.113 &  0.085 \\
		\multirow{2}{*}{Bias} & $\beta_1$ & -0.252 & 0.088 & -0.011 & -0.003 & -0.148 & -0.078 & -0.001 & -0.003 \\
		& $\beta_2$ & -0.318 & 0.038 & -0.010 & -0.003 & -0.388 & -0.318 & -0.019 & -0.004 \\
		\multirow{2}{*}{Coverage (uncorrected)} & $\beta_1$ & 0.000 & 0.566 & 0.932 & 0.939 &  0.040 & 0.186 & 0.929 & 0.921  \\
		& $\beta_2$ & 0.000 & 0.873 & 0.927 & 0.947 & 0.000 & 0.351 & 0.922 & 0.943 \\
		\multirow{2}{*}{Coverage (corrected)} & $\beta_1$ & -- & -- & 0.952 & -- & -- & -- & 0.956 & -- \\
		& $\beta_2$ & -- & -- & 0.936 & -- & -- & -- & 0.947 & -- \\
		\bottomrule
	\end{tabular}
	\vspace{0.5em}
	\par\footnotesize\renewcommand{\baselineskip}{11pt}\justifying
	\textbf{Note:} TPM, HSM, KL, and Oracle denote OLS under random selection, Heckman’s MLE, our semiparametric sieve estimator, and the oracle estimator, respectively. `Coverage' rows report the coverage probability of 95\% confidence intervals. For the KL estimator, `uncorrected' uses the robust standard errors of the second-stage regression, which treat $\hat{p}$ as known, and `corrected' uses the first-stage-corrected standard errors. TPM, HSM, and Oracle use their conventional standard errors.
\end{table}

These simulation exercises demonstrate the practical usefulness of our semiparametric estimator. When at least one continuous covariate is present and the selection process exhibits nonlinearity, our estimator performs exceptionally well even with a modest sample size. The first-stage selection probability is nonparametrically identified, so assessing nonlinearity in the selection equation is practically easy. In contrast to the HH model, our semiparametric model offers greater flexibility by not imposing linearity in the first stage while still point-identifying the parameters of interest. Consequently, our estimator can serve as a valuable alternative when the Lee bounds are excessively wide to provide meaningful insights. However, if there is no continuous variable or the selection process is genuinely linear, the HH bounds would be an excellent alternative to the Lee bounds.

\section{Application: gender and racial wage gaps in the US}\label{sec: application}

We now demonstrate the empirical usefulness of our semiparametric estimator by estimating the gender and racial wage disparities in the US. The reservation wage varies across gender and ethnic groups. Upon selection into employment, the distribution of unobserved factors can differ from that of the unemployed. Therefore, the effect of sample selection on observed wages should be taken into account to accurately calculate wage gaps. Following \cite{mora2008nonparametric} and \cite{honore2020selection}, who focus only on racial wage gaps, we analyse Current Population Survey (CPS) data on wages from Arizona, California, New Mexico, and Texas.\footnote{The data set covers the years 2003--2016 and includes 127,738 women. Among them, 24,698 are third-generation Mexican-Americans, while 103,040 are non-Hispanic whites. The remaining 118,250 men comprise 21,402 third-generation Mexican-Americans and 96,848 non-Hispanic whites. All individuals in the sample are aged between 25 and 62. In terms of employment, the percentage of women working is 64\% for third-generation Mexican-Americans and 61\% for non-Hispanic whites. The employment rates for men are 71\% for Mexican-Americans and 67\% for non-Hispanic whites, respectively.} The gender wage gap is estimated for Mexican-Americans and non-Hispanic whites separately to nonparametrically control for ethnicity. We use the log of the inflation-adjusted hourly wage as the outcome. In the outcome equation, we estimate the coefficient on the female dummy with age, age squared, experience, experience squared, education dummies (less than high school, some college, college, and advanced degrees such as master's and doctoral degrees, with high school as the base category), dummies for being a veteran and being married, state dummies (New Mexico as a base state), and year dummies as control variables. Age and experience serve as continuously distributed covariates in the selection equation for our semiparametric model. For the racial wage gap, we estimate the coefficient on the Mexican-American dummy with the same set of control variables separately for men and women.

In Table \ref{tab: Lee bounds}, we present the estimated Lee bounds by education group (high school and college) with no additional covariates. Overall, estimated bounds are not very informative about wage gaps. Regarding racial wage disparity, the Lee bounds for college graduates contain a zero value. For high school graduates, the bounds imply a substantial to moderate racial gap for both men and women. In the context of the gender disparity, the Lee bounds suggest significantly lower wages for women in both racial groups.

\begin{table}[ht!]
\centering\footnotesize
\caption{Estimated \cite{lee2009bounds} bounds for racial and gender wage gaps}\label{tab: Lee bounds}
\begin{tabular}{lclc}
\toprule
Category & Racial Gap& Category & Gender Gap \\
\midrule
 Men, High school   & [-0.249, -0.074] & Mexican, High school & [-0.325, -0.105] \\
 Women, High school & [-0.210, -0.041] & White, High school & [-0.372, -0.142] \\
 Men, College      & [-0.205, 0.015] & Mexican, College & [-0.240, -0.152] \\
 Women, College    & [-0.235, 0.035] & White, College & [-0.284, -0.119] \\
\bottomrule
\end{tabular}
\end{table}

In accordance with our simulations, we evaluate the TPM, Heckman’s MLE (HSM), and our estimator (KL). In the first stage, we estimate the selection probability by the sieve probit MLE, using group-specific cubic regression splines of age and experience (five basis functions each) together with the group dummy and the remaining controls. Nested likelihood-ratio tests reject a linear index and, more importantly, an index spanned by the outcome-equation regressors, in favour of the sieve in every subsample, indicating strong nonlinearity in the selection process. The test results for the selection equation are provided in Section~S5 of the Online Appendix (Table~S3). After obtaining $\hat{p}_i$, we estimate a partial linear model approximating $\lambda(\cdot)$ by a cubic B-spline basis of $\hat{p}$ with five degrees of freedom, and we report the corrected standard errors for our estimator. Across all reported specifications the full-sample fitted propensities remain interior, ranging over $[0.098,0.950]$, and both stages are well conditioned.
The estimation results remain stable when we perturb the number of knots. We report the results with seven knots for both stages in Section~S6 of the Online Appendix (Table~S7). We also compute the HH bounds along with point estimates. Unlike the Lee bounds, we can use the full set of covariates for the HH bounds.

\begin{table}[ht!]
	\centering\footnotesize
	\caption{Estimated racial and gender wage gaps}\label{tab:regression-combined}
	\begin{tabular}{llcccc}
		\toprule
		Gap & Group & TPM & HSM & KL & HH bound\\
		\midrule
		Racial & Men & $-0.113$ $(0.005)$ & $-0.113$ $(0.005)$ & $-0.089$ $(0.009)$ & $[-0.114,-0.103]$\\
		& Women & $-0.078$ $(0.005)$ & $-0.078$ $(0.005)$ & $-0.066$ $(0.006)$ & $[-0.089,-0.066]$\\
		Gender & White & $-0.209$ $(0.003)$ & $-0.159$ $(0.004)$ & $-0.211$ $(0.005)$ & $[-0.214,-0.179]$\\
		& Mexican & $-0.193$ $(0.006)$ & $-0.195$ $(0.007)$ & $-0.180$ $(0.012)$ & $[-0.219,-0.145]$\\
		\bottomrule
	\end{tabular}
	\par\footnotesize\renewcommand{\baselineskip}{11pt}\justifying
	\textbf{Note:} Standard errors are in parentheses. KL standard errors use the first-stage correction in \eqref{eq: corrected variance estimator}. HH denotes \cite{honore2020selection} bounds.
\end{table}

The estimation results for racial and gender wage gaps (full regression results with all covariates are provided in Section~S6 of the Online Appendix) are shown in Table~\ref{tab:regression-combined}. For the racial gap, the TPM and Heckman's MLE produce the same estimate for both men (-11.3\%) and women (-7.8\%). They generally produce almost identical estimates for all covariates. In Heckman's approach, the null hypothesis of no selection bias cannot be rejected. The HH bounds are informative and much narrower than the Lee bounds, ranging between -11.4\% and -10.3\% for men and between -8.9\% and -6.6\% for women. The estimates from both TPM and Heckman's MLE are contained in the HH bounds, meaning that the linear selection models fail to capture any selection bias. As the selection process exhibits strong nonlinearity, the linear selection models are misspecified regardless of the assumptions about the error terms. By contrast, our estimator shows a smaller magnitude of the racial wage disparity: the estimate for men (-8.9\%) lies outside the HH bounds, and the estimate for women (-6.6\%) sits at the upper edge of the HH bounds.

The results on the gender wage gap also show interesting patterns. For Mexican-Americans, the TPM and Heckman's MLE again produce the same estimates (around -19.5\%). In contrast, our estimator indicates a smaller magnitude of the gender wage disparity (-18.0\%). The HH bounds (-21.9\% to -14.5\% for Mexican-Americans and -21.4\% to -17.9\% for whites) are narrower than the Lee bounds but not as tight as for the racial gaps. All wage gap estimates are contained in the HH bounds. For non-Hispanic whites, the patterns are quite the opposite. Heckman's MLE seems to over-correct the selection bias, delivering a much smaller magnitude of the gender wage gap (-15.9\%) than TPM (-20.9\%). It also indicates much larger premiums on higher degrees (college and advanced degrees) compared to high school diplomas than TPM. These patterns are reversed in the semiparametric estimation. Our estimator produces an estimate of the gender wage gap (-21.1\%) very similar to that from TPM, while producing lower wage premiums for higher degrees. Interestingly, Heckman's MLE estimate does not lie in the HH bounds, whereas our semiparametric estimate is still contained in the HH bounds.

This empirical application demonstrates that the widely used bounds approach proposed by \cite{lee2009bounds} can yield uninformative bounds in analysing crucial labour market outcomes, such as wages. The HH bounds offer a potential alternative, as they tend to provide tighter bounds. However, even the feasible non-sharp version of the HH bounds (as the sharp characterization relies on an uncountable infinity of moment inequalities) is computationally intensive. Moreover, inference for these bounds hinges on resampling, which can be computationally demanding. In contrast, our semiparametric estimator is easy to implement in standard statistical packages.

\section{Concluding remarks}\label{sec: conclusion}

In this paper, we investigate point identification and efficient estimation of semiparametric selection models without an exclusion restriction. The primary objective of our paper is to challenge the long-held belief that an exclusion restriction is necessary for semiparametric selection models. Bounds approaches for selection models are often motivated by this misconception. We present convenient and practical semiparametric estimators that accommodate nonlinear selection, heteroskedastic errors, multiple control variables, and simple asymptotically valid inference. We also characterize the semiparametric efficiency bound for the model and provide a weighted variant of the estimator that attains the bound. The identifying conditions are readily verifiable in practice, as researchers simply need to ensure the presence of a continuous variable in the data and reject the linearity of the selection process.

In the simulations and empirical application, we demonstrate that our method offers more robust estimates than linear selection models. Therefore, our estimator presents a valuable alternative in cases where the bounds approaches fail to provide informative results. For researchers interested in correcting sample selection bias without resorting to unjustifiable parametric or distributional assumptions, we recommend reporting point estimates from our semiparametric method. Although our approach is more restrictive than Lee’s nonparametric model, it can accommodate restrictive forms of parameter heterogeneity. Extending our results to more flexible parameter heterogeneity would be an intriguing avenue for future research. Another promising research direction would be the identification of semiparametric sample selection models with endogenous regressors.

\section*{Acknowledgements}

We thank Krishna Pendakur and Myung Hwan Seo for their helpful comments and
suggestions. We also benefited greatly from comments by seminar and conference
participants at Seoul National University, Kyung Hee University, Sogang
University, Hanyang University, Korea University, CIREQ 2025 Econometrics
Conference, and Econometric Society World Congress 2025. All errors are our
own. Kim gratefully acknowledges support from the Social Sciences and
Humanities Research Council of Canada under the Insight Development Grant
(430-2022-00841) and the Insight Grant (435-2024-0322). Lee gratefully
acknowledges support from Hanyang University (HY-202600000001438). The authors declare
that they have no conflicts of interest.

\bibliographystyle{chicago}
\bibliography{ref}

\appendix

\section{Proofs for Identification Results}
\begin{proof}[Proof of Proposition \ref{prop: identification - single continuous variable}]
    For any observationally equivalent $\beta$ and $\lambda(\cdot)$, both
    $l(x')+b(p_0(x'))$ and $l(x'')+b(p_0(x''))$ equal zero. Since
    $p_0(x')=p_0(x'')$, subtraction gives
    $(\beta_0-\beta)(x'-x'')=0$. As $x'\neq x''$, $\beta=\beta_0$.
    Given $\beta_0$, $\lambda_0(\cdot)$ is identified by
    $E[Y-X\beta_0\mid X,D=1]=\lambda_0(p_0(X))$.
\end{proof}

\begin{proof}[Proof of Proposition \ref{prop: identification - two continuous variables}]
    Let $d_X:=\dim(\beta)$.
    Differentiating $l(x)+b(p)=0$ with respect to $x_1$ and $x_2$ gives
\begin{align*}
    \beta_{01}-\beta_1+b'(p)p_{x_1}&=0,&
    \beta_{02}-\beta_2+b'(p)p_{x_2}&=0,\\
    \text{hence}\qquad
    \beta_1-\beta_{01}
    &=\frac{p_{x_1}}{p_{x_2}}(\beta_2-\beta_{02}).
\end{align*}
The maintained nonproportionality of $p_{x_1}$ and $p_{x_2}$ implies
$\beta_1=\beta_{01}$ and $\beta_2=\beta_{02}$, and hence $b'(p)=0$.
Thus $b(p)=C$ and
	$\sum_{j=3}^{d_X}(\beta_{0j}-\beta_j)x_j=C$ on
	$\operatorname{Supp}(X_{3:d_X})$. Assumption \ref{ass: identification}(vi)
implies $\beta_{0j}=\beta_j$ for all $j\ge3$ and $C=0$. Therefore
$\beta=\beta_0$, and $\lambda_0(\cdot)$ follows from
$E[Y-X\beta_0\mid X,D=1]=\lambda_0(p_0(X))$.
\end{proof}

\begin{proof}[Proof of Proposition \ref{prop: identification - one continuous variable and discrete variables}]
    Let $p'=p_0(x')$, $p''=p_0(x'')$, and
    $p'''=p_0(x''')=p''$. Observational equivalence at the three points gives
\begin{align*}
    \beta_{0j}-\beta_j+b(p')-b(p'')&=0,\\
    (\beta_{01}-\beta_1)(x_1'-x_1''')+b(p')-b(p'')&=0.
\end{align*}
Consequently,
$\beta_{0j}-\beta_j=(\beta_{01}-\beta_1)(x_1'-x_1''')$.
Since $x_1'-x_1'''$ is not constant across $x_1'$, it follows that
$\beta_{01}=\beta_1$ and $\beta_{0j}=\beta_j$. Differentiating
$l(x)+b(p)=0$ with respect to $x_1$ then gives $b'(p)=0$ because
$p_{x_1}\neq0$. The remaining coefficients and $\lambda_0(\cdot)$ are
identified as in the proof of Proposition
\ref{prop: identification - two continuous variables}.
\end{proof}

\clearpage
\setcounter{section}{0}
\renewcommand{\thesection}{S\arabic{section}}
\renewcommand{\theHsection}{S\arabic{section}}
\setcounter{equation}{0}
\setcounter{table}{0}
\setcounter{figure}{0}
\renewcommand{\thetable}{S\arabic{table}}
\renewcommand{\thefigure}{S\arabic{figure}}
\renewcommand{\theHtable}{S\arabic{table}}
\renewcommand{\theHfigure}{S\arabic{figure}}

\begin{center}
  {\LARGE\bfseries Online Appendix\par}
  \vspace{0.6em}
  {\large Point-identifying semiparametric sample selection models
  with no excluded variable\par}
\end{center}
\vspace{1.5em}

\let\ArxivSection\section
\renewcommand{\section}[1]{%
  \ifstrequal{#1}{Full wage regression tables}{\clearpage}{}%
  \ArxivSection{#1}%
}
\noindent\textit{Note on cross-references.} Sections, tables, figures and
numbered results of this Online Appendix carry the prefix `S'. Numbers without
that prefix, such as Assumption~\ref{assu:p} or Proposition~\ref{prop: asymptotics np},
refer to the paper.

\section{Proofs of Propositions~\refPropAsymNP{} and~\refPropPrim}\label{app: np proof}

This appendix proves both layers in theorem order. We first prove Proposition~\ref{prop: asymptotics np} directly under the high-level conditions of Assumption~\ref{assu:np-hl}. We then prove Proposition~\ref{prop: np primitives}: Lemmas~\ref{lem: np first stage rates}--\ref{lem: np riesz} verify those high-level conditions from the primitive growing-sieve and second-stage assumptions, after which Proposition~\ref{prop: asymptotics np} applies. Throughout we treat $X_i,v_i,\alpha_{0}$ as column vectors in $\mathbb{R}^{d}$, so that $A=E[D_iv_iv_i']$, $\tilde{X}_i=D_iv_i$, $\chi_i=\left(D_i-p_{0i}\right)\alpha_{0}(X_i)\in\mathbb{R}^{d}$, $\Omega=E[\sigma_{0}^{2}(X_i,D_i)D_iv_iv_i']$, and $\Omega_{\chi}=E[\chi_i\chi_i']$ ($d\times d$). The main-text displays of Section~\ref{sec: estimator} write $v_i$ as a row vector, so that their $v_i'$ is the transpose of the column $v_i$ used here; the two definitions of $\chi_i$ and $\Omega_{\chi}$ therefore coincide. We write $\delta_i:=\hat{p}_i-p_{0i}$ and $\phi_i:=D_iv_i\lambda_{0}'(p_{0i})$. Since $v_i=X_i-m_{X,p_0}(p_{0i})$ and $\lambda_{0}'(p_{0i})$ are deterministic functions of $X_i$ (with $m_{X,p_0}(p):=E[X\mid p_{0}(X)=p,D=1]$) and $E[D_i\mid X_i]=p_{0i}$, we have $E[\phi_i\mid X_i]=p_{0i}\lambda_{0}'(p_{0i})v_i=\alpha_{0}(X_i)$. Let $\sigma_{0}^{2}(x):=E[\varepsilon_i^{2}\mid X_i=x,D_i=1]$. For the first-stage series basis $\psi_{L}$ we set $G:=E[\psi_{L}\psi_{L}']$, $\hat{G}:=n^{-1}\sum_i\psi_{L}(X_i)\psi_{L}(X_i)'$, $\zeta:=\zeta(L)$; $\pi_{0}:=G^{-1}E[\psi_{L}D]=G^{-1}E[\psi_{L}p_{0}]$, $p_{0,L}:=\psi_{L}'\pi_{0}$, $r_{L}:=p_{0}-p_{0,L}$, $\Delta_{p}:=\lVert r_{L}\rVert_{L^{2}}$; and $\alpha_{0,L}:=\Gamma G^{-1}\psi_{L}$ with $\Gamma:=E[\alpha_{0}\psi_{L}']$ the $L^{2}(F_{X})$-projection of $\alpha_{0}$ onto $\mathrm{span}(\psi_{L})$, $\Delta_{\alpha}:=\lVert\alpha_{0}-\alpha_{0,L}\rVert_{L^{2}}$. By Assumption~\ref{assu:np-firststage}(i) the eigenvalues of $G$ lie in $[\underline{\lambda}_{G},\bar{\lambda}_{G}]\subset(0,\infty)$ uniformly in $L$. Generic finite constants are denoted $C$ and may change from line to line. The infeasible estimator $\tilde{\beta}_n$ (with $p_{0}$ known) is the unconstrained least-squares projection of $Y$ on $X$ and the spline span $\{b_K(p_{0i})\}_{D_i=1}$ within the selected sample: it is defined by the normal equations, which are exact identities rather than local approximations, so no compactness of the parameter space or interiority of $\beta_{0}$ enters. The generalized inverse in its closed form makes it well defined in every sample; the residualized Gram matrix is nonsingular with probability approaching one under the oracle conditions together with the second-stage Gram convergence, so the convention adopted off that event does not affect the asymptotics. Under the oracle conditions (Assumptions~\ref{assu:IID}, \ref{assu:hfunc}, \ref{assu:s2}, \ref{assu:Dv*}, together with the second-stage Gram conditions of Assumption~\ref{assu:secondstage}(i)--(ii) evaluated at $p_{0}$) it satisfies $\sqrt{n}(\tilde{\beta}_n-\beta_{0})=A^{-1}n^{-1/2}\sum_iD_iv_i\varepsilon_i+o_p(1)$. This is Theorem~2 of \cite{donald1994series} applied to the selection-weighted semilinear model
\[
y^{\dagger}_{i}=D_{i}Y_{i},\quad x^{\dagger}_{i}=D_{i}X_{i},\quad z^{\dagger}_{i}=(D_{i},p_{0i}),\quad g^{\dagger}(z^{\dagger}_{i})=D_{i}\lambda_{0}(p_{0i}),
\]
with series terms $D_{i}b_K(p_{0i})$, for which $x^{\dagger}_{i}-E[x^{\dagger}_{i}\mid z^{\dagger}_{i}]=D_{i}\{X_{i}-E[X_{i}\mid p_{0i},D_{i}=1]\}=D_{i}v_{i}$, so that their $\bar{A}_{n}$ and $\bar{B}_{n}$ are our $A$ and $\Omega$ and the normalisation is over the full sample; their framework admits the discrete component of $z^{\dagger}$. \citet[pp.~S225--S226]{newey2009two} uses the same $D$-weighted formulation before invoking this result. The work below is to show the feasible estimator differs by exactly the Riesz correction $-A^{-1}n^{-1/2}\sum_i\chi_i$.

For matrices, $\lVert\cdot\rVert_{op}$ and $\lVert\cdot\rVert_F$ denote the
operator and Frobenius norms; for vectors, $\lVert\cdot\rVert$ denotes the
Euclidean norm. For a scalar or vector-valued function $g$, write
$\lVert g\rVert_{n,2,D}^{2}:=n^{-1}\sum_iD_i\lVert g_i\rVert^{2}$ and
$\lVert g\rVert_{L^{2}}^{2}:=E[\lVert g(X_i)\rVert^{2}]$ unless a different
measure is displayed.

\subsection{Proof of Proposition~\ref{prop: asymptotics np}}\label{subsec: np assembly}

By the Frisch--Waugh--Lovell structure, $\hat{\beta}_n-\beta_0=\hat{A}^{-1}n^{-1}\sum_iD_i\hat{v}_i(\lambda_{0}(p_{0i})+\varepsilon_i)$. Because $\hat{v}_i$ is sample-orthogonal to the spline span, $n^{-1}\sum_iD_i\hat{v}_i\widehat{\Pi}_K\lambda_{0}(\hat{p}_i)=0$, so writing
\[
\lambda_{0}(p_{0i})=\bigl[\lambda_{0}(p_{0i})-\lambda_{0}(\hat{p}_i)+\lambda_{0}'(p_{0i})\delta_i\bigr]-\lambda_{0}'(p_{0i})\delta_i+\bigl[\lambda_{0}(\hat{p}_i)-\widehat{\Pi}_K\lambda_{0}(\hat{p}_i)\bigr]+\widehat{\Pi}_K\lambda_{0}(\hat{p}_i)
\]
and multiplying by $\sqrt{n}$, Assumption~\ref{assu:np-hl}(ii)--(iv) give $\sqrt{n}\,n^{-1}\sum_iD_i\hat{v}_i\lambda_{0}(p_{0i})=-n^{-1/2}\sum_i\chi_i+o_p(1)$: the first bracket is $o_p(1)$ by~(iii); the term $[\lambda_{0}(\hat{p}_i)-\widehat{\Pi}_K\lambda_{0}(\hat{p}_i)]$, on inserting $\pm\Pi_{K}\lambda_{0}(\hat{p}_i)$, splits into a spline-approximation part and a $\Pi_{K}-\widehat{\Pi}_K$ empirical-projection part, both $o_p(1)$ by~(iv); the final projection term vanishes exactly by sample orthogonality; and $-\sqrt{n}\,n^{-1}\sum_iD_i\hat{v}_i\lambda_{0}'(p_{0i})\delta_i=-n^{-1/2}\sum_i\chi_i+o_p(1)$ by~(ii). For the noise term, Assumption~\ref{assu:np-hl}(iv) gives $n^{-1/2}\sum_iD_i\hat{v}_i\varepsilon_i=n^{-1/2}\sum_iD_iv_i\varepsilon_i+o_p(1)=n^{-1/2}\sum_i\tilde{X}_i\varepsilon_i+o_p(1)$. With $\hat{A}\to_p A$ nonsingular (Assumption~\ref{assu:np-hl}(i),(iv)), Slutsky's theorem yields
\[
\sqrt{n}(\hat{\beta}_n-\beta_{0})=A^{-1}\frac1{\sqrt n}\sum_i\bigl\{\tilde{X}_i\varepsilon_i-\chi_i\bigr\}+o_{p}(1).
\]
The summands $\omega_i:=D_iv_i\varepsilon_i-\chi_i$ are i.i.d., do not depend on $L$ or $K$, and are mean zero: $E[D_iv_i\varepsilon_i]=E[D_iv_iE[\varepsilon_i\mid X_i,D_i{=}1]]=0$ and $E[\chi_i]=E[\alpha_{0}(X_i)E[D_i-p_{0i}\mid X_i]]=0$. Their covariance is $\Omega+\Omega_{\chi}$: the cross-term vanishes because
\[
E[D_iv_i\varepsilon_i\,\chi_i']=E\bigl[v_i\alpha_{0}(X_i)'\,D_i(D_i-p_{0i})\,\varepsilon_i\bigr]=E\bigl[v_i\alpha_{0}(X_i)'(1-p_{0i})\,E[D_i\varepsilon_i\mid X_i]\bigr]=0,
\]
using $D_i(D_i-p_{0i})=D_i(1-p_{0i})$ and $E[D_i\varepsilon_i\mid X_i]=p_{0i}E[\varepsilon_i\mid X_i,D_i{=}1]=0$; the variance blocks are $E[D_iv_iv_i'\varepsilon_i^{2}]=\Omega$ (finite under Assumptions~\ref{assu:IID}(ii) and~\ref{assu:s2}) and $E[\chi_i\chi_i']=E[(D_i-p_{0i})^{2}\alpha_{0}(X_i)\alpha_{0}(X_i)']=E[p_{0i}^{3}(1-p_{0i})\lambda_{0}'(p_{0i})^{2}v_iv_i']=\Omega_{\chi}$ (using $E[(D-p_{0})^{2}\mid X]=p_{0}(1-p_{0})$ and that $v_i$ is $X_i$-measurable; finite, since $\Omega_{\chi}\preceq\tfrac14\lVert\alpha_{0}\rVert_{L^{2}}^{2}I$, under Assumption~\ref{assu:np-hl}(i)). Since $\omega_i$ is a fixed (non-triangular) i.i.d.\ sequence with finite second moment, the Lindeberg--L\'evy central limit theorem gives $n^{-1/2}\sum_i\omega_i\to_{d}N(0,\Omega+\Omega_{\chi})$, and by Slutsky's theorem (with $A$ symmetric, $A^{-1\prime}=A^{-1}$),
\[
\sqrt{n}(\hat{\beta}-\beta_{0})\to_{d}N\bigl(0,\,A^{-1}(\Omega+\Omega_{\chi})A^{-1}\bigr)=N(0,V_{\mathrm{NP}}),
\]
which is the claim. \hfill$\qed$

\subsection{Proof of Proposition~\ref{prop: np primitives}}

We establish the primitive implication through the following five lemmas and then verify each part of Assumption~\ref{assu:np-hl}.

\subsubsection*{Lemma A: first-stage series rates}

\begin{lemma}\label{lem: np first stage rates}
Under Assumptions~\ref{assu:IID}(i)--(ii) and \ref{assu:np-firststage}(i)--(ii): (a) $\lVert G^{-1/2}\hat{G}G^{-1/2}-I_{L}\rVert_{op}=o_{p}(1)$, hence $\hat{G}$ is invertible with probability tending to one and $\lVert\hat{G}^{-1}\rVert_{op}=O_{p}(1)$; (b) $\lVert\hat{p}-p_{0}\rVert_{L^{2}(F_{X})}=O_{p}(\sqrt{L/n}+\Delta_{p})=o_{p}(n^{-1/4})$, and the empirical-norm version $n^{-1}\sum_i\delta_i^{2}=O_{p}(L/n+\Delta_{p}^{2})=o_{p}(n^{-1/2})$; (c) $\lVert\hat{p}-p_{0}\rVert_{\infty}=o_{p}(1)$.
\end{lemma}

\begin{proof}
(a) Whitening by $G^{-1/2}$, the matrix $G^{-1/2}\hat{G}G^{-1/2}-I_{L}$ is an average of i.i.d.\ mean-zero $L\times L$ matrices $\eta_i:=G^{-1/2}\psi_{L}(X_i)\psi_{L}(X_i)'G^{-1/2}-I_{L}$ with $\lVert\eta_i\rVert_{op}\le \underline{\lambda}_{G}^{-1}\zeta^{2}+1$. By Rudelson's matrix law of large numbers \citep[Lemma~6.2 of][]{belloni2015some} and Markov's inequality, $\lVert G^{-1/2}\hat{G}G^{-1/2}-I_{L}\rVert_{op}=O_{p}\bigl(\sqrt{\zeta^{2}\log L/n}\bigr)$. Assumption~\ref{assu:np-firststage}(i) imposes $\zeta^{2}L\log L/n\to0$, hence a fortiori $\zeta^{2}\log L/n\to0$ (divide by $L\ge1$), so the operator norm is $o_{p}(1)$, and Weyl's inequality gives $\lambda_{\min}(G^{-1/2}\hat{G}G^{-1/2})\ge\tfrac12$ with probability tending to one, whence $\lVert\hat{G}^{-1}\rVert_{op}\le2\underline{\lambda}_{G}^{-1}=O_{p}(1)$.

(b) Write $\hat{p}-p_{0}=(\hat{p}-p_{0,L})-r_{L}$ with $\lVert r_{L}\rVert_{L^{2}}=\Delta_{p}$ deterministic. For the in-span part, $\hat{p}-p_{0,L}=\psi_{L}'(\hat{\pi}-\pi_{0})$ and, as derived in \eqref{eq: np pi expansion} below, $\hat{\pi}-\pi_{0}=\hat{G}^{-1}(S_{n}+T_{n})$ with $S_{n}=n^{-1}\sum_i\psi_{L}(X_i)(D_i-p_{0i})$ and $T_{n}=n^{-1}\sum_i\psi_{L}(X_i)r_{L}(X_i)$. By \eqref{eq: np Sn rate}, $E\lVert S_{n}\rVert^{2}\le\bar{\lambda}_{G}L/(4n)$, so $\lVert\psi_{L}'\hat{G}^{-1}S_{n}\rVert_{L^{2}}^{2}=(S_{n}'\hat{G}^{-1}G\hat{G}^{-1}S_{n})=O_{p}(L/n)$ using part~(a); and $\lVert\psi_{L}'\hat{G}^{-1}T_{n}\rVert_{L^{2}}=O_{p}(\Delta_{p})$ since $\lVert T_{n}\rVert\le(n^{-1}\sum\lVert\psi_{L}(X_i)\rVert^{2})^{1/2}(n^{-1}\sum r_{L}(X_i)^{2})^{1/2}=O_{p}(\zeta\Delta_{p})$ and the in-span $L^{2}$ norm contracts the $\zeta$ factor to $O_{p}(\Delta_{p})$ by part~(a) (equivalently, $\lVert\Pi_{L}r_{L}\rVert_{L^{2}}\le\lVert r_{L}\rVert_{L^{2}}=\Delta_{p}$). Collecting, $\lVert\hat{p}-p_{0}\rVert_{L^{2}}=O_{p}(\sqrt{L/n}+\Delta_{p})$. The empirical-norm bound is identical with the empirical Gram in place of $G$: $n^{-1}\sum_i\delta_i^{2}\le2\,(\hat\pi-\pi_{0})'\hat{G}(\hat\pi-\pi_{0})+2\,n^{-1}\sum_ir_{L}(X_i)^{2}=2\,(S_{n}+T_{n})'\hat{G}^{-1}(S_{n}+T_{n})+O_{p}(\Delta_{p}^{2})=O_{p}(L/n+\Delta_{p}^{2})$, using $\lVert\hat{G}^{-1}\rVert_{op}=O_{p}(1)$, $\lVert S_{n}\rVert=O_{p}(\sqrt{L/n})$, $\lVert T_{n}\rVert=O_{p}(\zeta\Delta_{p}/\sqrt n)$ (here the sharper \emph{centred} rate, valid because $T_{n}$ is a mean-zero average as $E[\psi_{L}r_{L}]=0$, in place of the cruder uncentered $O_{p}(\zeta\Delta_{p})$ used for the $L^{2}$ bound above), and Markov's inequality for the nonnegative deterministic $r_{L}^{2}$, whose mean is exactly $\Delta_{p}^{2}$.

(c) The sup-norm consistency is imposed in Assumption~\ref{assu:np-firststage}(ii). For local stable bases it is delivered by the same normal-equation expansion: for any coefficient vector $c$, $\lVert\psi_{L}'c\rVert_{\infty}\le\zeta\lVert c\rVert$, and $\lVert\hat{\pi}-\pi_{0}\rVert=O_p(\sqrt{L/n}+\zeta\Delta_p/\sqrt n)$ using the centred rate for $T_n$ above. Together with a sup-norm approximation primitive for the basis, $\lVert r_L\rVert_\infty\to0$, a separate condition on the sieve, since a sup-norm approximation error is not implied by the $L^{2}$ error $\Delta_{p}$ alone; this gives $\lVert\hat p-p_0\rVert_\infty\le\zeta\lVert\hat{\pi}-\pi_{0}\rVert+\lVert r_{L}\rVert_{\infty}=O_p(\zeta\sqrt{L/n}+\zeta^2\Delta_p/\sqrt n)+\lVert r_{L}\rVert_{\infty}=o_p(1)$ under Assumption~\ref{assu:np-firststage}(i)--(ii). The $L^{2}$ bound of part~(b) is $o_{p}(n^{-1/4})$ because $L=o(\sqrt{n})$ and $\Delta_{p}=o(n^{-1/4})$ by Assumption~\ref{assu:np-firststage}(ii).
\end{proof}

\subsubsection*{Lemma C0: uniform concentration at generated indices}

The second-stage arguments below repeatedly involve series regressions on the basis $b_K(\hat{p}_i)$, whose index $\hat{p}$ is estimated from the same observations. A pointwise conditioning argument at $\hat{p}$ is not available for the covariate channel: $\hat{p}$ depends on every $(X_j,D_j)$, including observation $i$, so conditional on $\hat{p}$ the array $\{X_i-m_{X,\hat{p}}(\hat{p}_i)\}$ is not a mean-zero regression array. The following lemma replaces that invalid conditioning with uniformity over the admissible class $\mathcal{P}_{n}$ of Assumption~\ref{assu:p}, within which each \emph{fixed} index does deliver mean-zero noise; membership $\hat{p}\in\mathcal{P}_{n}$ (Assumption~\ref{assu:p}(ii)) then transfers every bound to the generated index.

\begin{lemma}\label{lem: np uniform}
Suppose Assumptions~\ref{assu:IID}, \ref{assu:hfunc}, \ref{assu:p}, \ref{assu:np-firststage}(i)--(ii) and \ref{assu:secondstage}(i)--(ii) hold, $K$ and $L$ grow polynomially in $n$, and the spline basis $b_K$ is constructed on a fixed compact interval containing the $\varepsilon$-enlargement of the support of $p_{0}(X_i)$. Let $\mathcal{H}_{n}:=\Lambda^{m}_{C_{m}}([0,1])\cup\mathcal{S}_{K}$, where $\mathcal{S}_{K}:=\{b_K'c:\lVert b_K'c\rVert_{\infty}\leq C_{h}\}$ is a common spline sup-norm ball with $C_{h}:=\Lambda_{\mathrm{Leb}}\max(C_{m},\lVert\lambda_{0}\rVert_{\infty})$ and $\Lambda_{\mathrm{Leb}}$ the uniform Lebesgue-constant bound of the (possibly weighted) spline $L^{2}$-projections; these two \emph{fixed} function balls contain every nuisance the lemmas below evaluate ($m_{g,p}$ and $\lambda_{\omega,p}$ in the H\"{o}lder ball, their spline projections and $\Pi_{K}\lambda_{0}$ in $\mathcal{S}_{K}$), so no link between the nuisance and the index is required. The common knot interval of the basis, the Lebesgue-constant bound, and the uniform Gram bounds of Assumption~\ref{assu:p}(iii) are used explicitly. Set $\rho_{n}:=\sqrt{(K+L)\log n/n}$. Then:
\begin{conditions}
\item[(a)] $\displaystyle\sup_{p\in\mathcal{P}_{n},\,h\in\mathcal{H}_{n}^{d}}\Bigl\lVert\frac1n\sum_i D_i\,b_K(p(X_i))\bigl[g_i-h(p(X_i))\bigr]'-E\bigl[D_i\,b_K(p(X_i))\bigl[g_i-h(p(X_i))\bigr]'\bigr]\Bigr\rVert_{F}=O_{p}(\rho_{n})$ for $g=X$ and for $g=\mu_{0}$, where $\mu_{0}(x):=x'\beta_{0}+\lambda_{0}(p_{0}(x))$ is the fixed, bounded outcome conditional mean, and $\mathcal{H}_{n}^{d}$ collects $\mathbb{R}^{d}$-valued nuisances with each coordinate in $\mathcal{H}_{n}$, so the statement is the scalar bound applied coordinatewise (for $g=\mu_{0}$ the raw-channel envelope uses boundedness of $\mu_{0}$ on the compact support, and $m_{\mu_{0},p}=m_{Y,p}$ lies in the class's H\"{o}lder ball); in particular, since $E[D_ib_K(p_i)u_i^{g}(p)']=0$ for each fixed $p\in\mathcal{P}_{n}$ with $u^{g}_{i}(p):=g_i-m_{g,p}(p(X_i))$ and $m_{g,p}(t):=E[g_i\mid p(X_i)=t,D_i=1]$ (the population normal equations at a fixed index, using $m_{g,p}\in\Lambda^{m}_{C_{m}}$ by the class definition),
\[
\sup_{p\in\mathcal{P}_{n}}\Bigl\lVert\frac1n\sum_i D_i\,b_K(p(X_i))\,u^{g}_{i}(p)'\Bigr\rVert_{F}=O_{p}(\rho_{n})=o_{p}(1);
\]
\item[(b)] $\displaystyle\sup_{p\in\mathcal{P}_{n}}\lVert\widehat{G}_{K}(p)-G_{K}(p)\rVert_{op}=O_{p}\bigl(\xi(K)\rho_{n}\bigr)=o_{p}(1)$, where $\widehat{G}_{K}(p):=n^{-1}\sum_iD_ib_K(p_i)b_K(p_i)'$ and $G_{K}(p):=E[D_ib_K(p_i)b_K(p_i)']$; moreover, by Assumption~\ref{assu:p}(iii), the eigenvalues of $G_{K}(p)$ are bounded and bounded away from zero uniformly over $p\in\mathcal{P}_{n}$, so on the event $\{\hat{p}\in\mathcal{P}_{n}\}$ the generated-index empirical Gram $\widehat{G}_{K}(\hat{p})$ has smallest eigenvalue bounded away from zero with probability tending to one;
\item[(c)] \emph{(Generated-index joint least squares.)} Let $\omega$ be a fixed function of $x$ with $\underline{w}\leq\omega\leq\bar{w}$ (the unweighted case is $\omega\equiv1$) such that the $\omega$-weighted conditional means $m^{\omega}_{g,p}$, $g\in\{X,Y\}$, and the $\omega$-weighted partially linear component $\lambda_{\omega,p}:=m^{\omega}_{Y,p}-m^{\omega\prime}_{X,p}\beta_{\omega}(p)$ lie in a fixed $\Lambda^{m}$ ball uniformly over $p\in\mathcal{P}_{n}$, where $\beta_{\omega}(p)$ is the $\omega$-weighted partially linear coefficient at index $p$. For $Z_i(p):=(X_i',b_K(p(X_i))')'$, let $\hat{\theta}_{\omega}=(\hat{\beta}_{\omega}',\hat{c}_{\omega}')'$ be the $D\omega$-weighted least squares coefficients of $Y_i$ on $Z_i(\hat{p})$, and $\theta_{\omega}(p)=(\beta_{\omega}(p)',c_{\omega,p}')'$ with $c_{\omega,p}$ the coefficients of the $D\omega$-weighted spline projection of $\lambda_{\omega,p}$. Then, for $\varepsilon$ small enough in the definition of $\mathcal{P}_{n}$, on the event $\{\hat{p}\in\mathcal{P}_{n}\}$,
\[
\lVert\hat{\theta}_{\omega}-\theta_{\omega}(\hat{p})\rVert=O_{p}\bigl(\rho_{n}+K^{-m}\bigr).
\]
\end{conditions}
\end{lemma}

\begin{proof}
Every $p\in\mathcal{P}_{n}\setminus\{p_{0}\}$ has the form $p=\psi_{L}'\pi$ with $\lVert\pi\rVert\leq C_{\pi}:=\underline{\lambda}_{G}^{-1/2}(\lVert p_{0}\rVert_{\infty}+\varepsilon)$ (the coefficient-ball bound noted with the class definition). Throughout the proof, the $p$-net is a maximal $\delta$-separated subset of $\mathcal{P}_{n}$ \emph{itself} in the coefficient norm, augmented by $p_{0}$, with $\delta=n^{-4}$: by maximality every $p\in\mathcal{P}_{n}$ lies within $\delta$ of a centre in the coefficient norm, hence within $\zeta(L)\delta$ in sup norm; the centers are admissible, so the density, Gram, and fixed-index normal-equation identities of $\mathcal{P}_{n}$ hold at every centre; and a $\delta$-separated subset of the coefficient ball has at most $(3C_{\pi}/\delta)^{L}$ points, so the net has logarithmic cardinality at most $CL\log n$. All three parts below use this admissible net.

\emph{Part (b).} Fix $p\in\mathcal{P}_{n}$. The summands $D_ib_K(p_i)b_K(p_i)'-G_{K}(p)$ are i.i.d., mean zero, with operator norm at most $2\xi(K)^{2}$ and variance parameter $\lVert E[(D_ib_Kb_K')^{2}]\rVert_{op}\leq\xi(K)^{2}\lVert G_{K}(p)\rVert_{op}\leq C\xi(K)^{2}$ (eigenvalue bound below); the matrix Bernstein inequality \citep[Theorem~6.1 of][]{tropp2012user} gives $\lVert\widehat{G}_{K}(p)-G_{K}(p)\rVert_{op}=O_{p}(\xi(K)\sqrt{(t+\log K)/n}+\xi(K)^{2}(t+\log K)/n)$ with probability at least $1-e^{-t}$. Taking $t=CL\log n$ and a union bound over the net yields the rate $O_{p}(\xi(K)\rho_{n})$, with the Bernstein linear term dominated because $\xi(K)^{2}(K+L)\log n/n\to0$ under Assumptions~\ref{assu:np-firststage}(i) and \ref{assu:secondstage}(ii) with polynomial growth ($\xi(K)^{2}=O(K)$ for the normalized spline basis). Between net points, $\lVert b_K(p_i)b_K(p_i)'-b_K(p'_i)b_K(p'_i)'\rVert_{op}\leq2\xi(K)\sup_t\lVert b_K'(t)\rVert\,\zeta(L)\delta=O(K^{2}\zeta(L)n^{-4})$, negligible under polynomial growth. For the eigenvalue claim, condition on $p(X_i)=t$: $G_{K}(p)=\int b_K(t)b_K(t)'\,q_{p}(t)\,dt$ with $q_{p}(t):=P[D_i=1]\,f_{p(X)\mid D=1}(t)\in[P[D_i=1]\underline{f},\,P[D_i=1]\bar{f}]$ on the support (the class's density property), which shows that the density bounds are the right primitive where the supports align with the knot interval. The uniform two-sided eigenvalue bounds themselves are Assumption~\ref{assu:p}(iii); they cannot be derived from the density bounds alone, because a class member whose support does not cover the knot interval places no mass on some spline components, and they reduce to Assumption~\ref{assu:secondstage}(i) at $p=p_{0}$. The final claim follows from Weyl's inequality on $\{\hat{p}\in\mathcal{P}_{n}\}$.

\emph{Part (a).} Fix an outer coordinate of $g$; the argument is applied $d$ times. Write $V(p,h):=n^{-1}\sum_i\{D_ib_K(p_i)[g_i-h(p_i)]-E[\,\cdot\,]\}\in\mathbb{R}^{K}$ and dualize the norm,
\[
\lVert V(p,h)\rVert=\sup_{\lVert a\rVert=1}a'V(p,h)\leq2\max_{a\in\mathcal{N}}a'V(p,h),
\]
where $\mathcal{N}$ is a $1/2$-net of the unit sphere $S^{K-1}$ with $\lvert\mathcal{N}\rvert\leq5^{K}$. The supremum is then covered \emph{jointly} over the sphere net, the index class, and the nuisance class, which is what delivers complexity of order $K+L$ and the rate $\rho_{n}$; coordinatewise maximal inequalities followed by aggregation would lose a factor $\sqrt{K}$ and are not used. For the index class, use the admissible $p$-net fixed in the opening paragraph. By linearity in $h$ it suffices to treat the raw channel and the two fixed balls of $\mathcal{H}_{n}$.

\emph{Raw and spline-ball channels.} For fixed $(a,p)$ and fixed $h\in\mathcal{S}_{K}$ (or $h=0$ with $g\in\{X,\mu_{0}\}$), the scalar summands $D_i(a'b_K(p_i))[g_i-h(p_i)]$, centred, are i.i.d.\ and bounded by $C\xi(K)$, with variance $E[D_i(a'b_K(p_i))^{2}(g_i-h(p_i))^{2}]\leq C\,a'G_{K}(p)a\leq C\bar{c}_{G}$, dimension free, by boundedness of $g$ (compact support of $X_i$, Assumption~\ref{assu:IID}(ii), for $g=X$; boundedness of $\mu_{0}$ for $g=\mu_{0}$), the ball bound $\lVert h\rVert_{\infty}\leq C_{h}$, and Assumption~\ref{assu:p}(iii). Cover $\mathcal{S}_{K}$ by a sup-norm $n^{-4}$-net (a $K$-dimensional symmetric convex body; logarithmic cardinality at most $CK\log n$ by the volumetric bound). The scalar Bernstein inequality (the $1\times1$ case of Theorem~6.1 of \citealp{tropp2012user}) with $t=C'(K+L)\log n$, together with the union over $\mathcal{N}\times(p\text{-net})\times(h\text{-net})$ of total logarithmic cardinality $O((K+L)\log n)$, gives
\[
\max\,\lvert a'V(p,h)\rvert=O_{p}\Bigl(\rho_{n}+\xi(K)\,\frac{(K+L)\log n}{n}\Bigr)=O_{p}(\rho_{n}),
\]
the Bernstein linear term being dominated because $\xi(K)^{2}(K+L)\log n/n\rightarrow0$ under Assumptions~\ref{assu:np-firststage}(i) and \ref{assu:secondstage}(ii) with polynomial growth.

\emph{H\"{o}lder-ball channel.} For $h\in\Lambda^{m}_{C_{m}}$ split $h=\Pi_{K}h+(h-\Pi_{K}h)$: the projection lies in $\mathcal{S}_{K}$ ($\lVert\Pi_{K}h\rVert_{\infty}\leq\Lambda_{\mathrm{Leb}}C_{m}\leq C_{h}$) and is covered above, while the residual class $\mathcal{R}:=\{h-\Pi_{K}h:h\in\Lambda^{m}_{C_{m}}\}$ has envelope $CC_{m}K^{-m}$ (the Jackson bound) and sup-norm bracketing entropy $\log N_{[\,]}(\epsilon)\leq C\epsilon^{-1/m}$ inherited from the H\"{o}lder ball \citep[Theorem~2.7.1 of][]{vaart1996weak}. At each fixed $(a,p)$, the maximal inequality for bracketing entropy \citep[Lemma~3.4.2 of][]{vaart1996weak} bounds the expected supremum of the centred empirical average over $\mathcal{R}$ by $C\,\xi(K)K^{-m+1/2}/\sqrt{n}=CK^{1-m}/\sqrt{n}=o(\rho_{n})$ (the entropy integral converges for $m>1/2$, Assumption~\ref{assu:hfunc}), and Talagrand's concentration inequality for suprema of bounded empirical processes \citep{bousquet2002bennett}, with variance parameter $\sigma^{2}\leq CK^{-2m}\bar{c}_{G}$ and envelope $C\xi(K)K^{-m}$, converts this into a tail bound at level $e^{-t}$; taking $t=C'(K+L)\log n$ and the union over $\mathcal{N}\times(p\text{-net})$ yields the uniform bound $O_{p}(K^{1-m}/\sqrt{n}+K^{-m}\rho_{n}+\xi(K)K^{-m}(K+L)\log n/n)=o_{p}(\rho_{n})$.

\emph{Off the nets.} Moving $p$ to its net centre $p'$ with $\lVert p-p'\rVert_{\infty}\leq\zeta(L)\delta$, at fixed $h$, perturbs each summand by at most $C\sup_t\lVert b_K'(t)\rVert\,\zeta(L)\delta+\xi(K)\,\lvert h(p(x))-h(p'(x))\rvert$, and the nuisance move is bounded by the H\"{o}lder modulus of continuity, $C_{m}(\zeta(L)\delta)^{m\wedge1}$, on the H\"{o}lder ball, and by the spline inverse inequality $\lVert h'\rVert_{\infty}\leq CK\lVert h\rVert_{\infty}\leq CKC_{h}$ on $\mathcal{S}_{K}$; moving $h$ to its net centre within $\mathcal{S}_{K}$ costs at most $\xi(K)n^{-4}$ directly, since that net is taken in the sup norm. Every increment is polynomial in $n$ times $n^{-4}$, hence negligible. Collecting channels proves the displayed bound; the ``in particular'' statement follows because for each fixed $p\in\mathcal{P}_{n}$ the choice $h=m_{g,p}$ lies in the H\"{o}lder ball by the class definition and makes the population term vanish by the normal equations at the fixed index $p$.

\emph{Part (c).} Write $Q_{\omega}(p):=E[D_i\omega_iZ_i(p)Z_i(p)']$ and $\widehat{Q}_{\omega}(p):=n^{-1}\sum_iD_i\omega_iZ_i(p)Z_i(p)'$. \emph{Joint Gram.} The weight sandwich $\underline{w}\,Q_{1}(p)\preceq Q_{\omega}(p)\preceq\bar{w}\,Q_{1}(p)$ reduces the eigenvalue bounds to the unweighted case. Upper: the $X$-block is bounded by compactness and the $b$-block by Assumption~\ref{assu:p}(iii). Lower, by the Schur complement: using the off-diagonal control $\lVert E[DXb_K(p)']G_{K}(p)^{-1/2}\rVert_{op}^{2}=\lVert E[D\,\Pi^{p}_{K}X(\Pi^{p}_{K}X)']\rVert_{op}\leq\lVert E[DXX']\rVert_{op}\leq C$ (projection contraction and compact support), the block-triangular factorization gives $\lambda_{\min}(Q_{1}(p))\geq c\min\{\lambda_{\min}(G_{K}(p)),\lambda_{\min}(S_{K}(p))\}$ with $c$ depending only on that norm, where $S_{K}(p):=E[D_iv_{K,p,i}v_{K,p,i}']$ and $v_{K,p}:=X-\Pi^{p}_{K}X$ is the spline-partialled residual at index $p$; since $v_{K,p}=v_{p}+(m_{X,p}-\Pi^{p}_{K}m_{X,p})(p)$ with the two pieces $D$-orthogonal, $S_{K}(p)\succeq E[D_iv_{p,i}v_{p,i}']=:A(p)$, and $\lVert A(p)-A\rVert\leq C\lVert p-p_{0}\rVert_{L^{2}}\leq C\varepsilon$ by Assumption~\ref{assu:diff}(ii), so for $\varepsilon\leq\lambda_{\min}(A)/(2C)$ we have $\lambda_{\min}(A(p))\geq\lambda_{\min}(A)/2$ uniformly over $\mathcal{P}_{n}$; $G_{K}(p)$ is bounded below by Assumption~\ref{assu:p}(iii). The empirical Gram concentrates uniformly: the $bb'$ block is part (b) with the bounded weight absorbed, the $Xb'$ block is the part-(a) machinery with $g=X$ and $h\equiv0$, and the $XX'$ block is a fixed-dimensional law of large numbers; hence $\lVert\widehat{Q}_{\omega}(\hat{p})^{-1}\rVert_{op}=O_{p}(1)$ on the membership event. \emph{Score.} At any fixed $p$, decompose $Y_i-Z_i(p)'\theta_{\omega}(p)=\varepsilon_i+q^{\omega}_{p}(X_i)+r^{\omega}_{p,K}(p(X_i))$, where $q^{\omega}_{p}:=\mu_{0}-X'\beta_{\omega}(p)-\lambda_{\omega,p}(p(X))$ and $r^{\omega}_{p,K}:=\lambda_{\omega,p}-\Pi^{\omega,p}_{K}\lambda_{\omega,p}$ with $\lVert r^{\omega}_{p,K}\rVert_{\infty}\leq CK^{-m}$ uniformly, by the $\Lambda^{m}$ ball and the uniformly bounded Lebesgue constant of the weighted spline projection. Three channels. (i) \emph{Outcome shock:} conditional on the first stage (and, in fold applications, on the fold's conditioning field), $Z_i(\hat{p})$ and $\omega_i$ are fixed and $E[D_i\varepsilon_i\mid\cdot]=0$, so the conditional-variance bound gives $\lVert n^{-1}\sum_iD_i\omega_iZ_i(\hat{p})\varepsilon_i\rVert=O_{p}(\sqrt{K/n})\leq\rho_{n}$; this conditioning is valid because the outcome shock does not enter the first stage. (ii) \emph{Design discrepancy:} the population term vanishes at every fixed $p$ by the first-order conditions of the $\omega$-weighted partially linear projection, $E[D\omega Xq^{\omega}_{p}]=0$ and $E[D\omega q^{\omega}_{p}\mid p(X)]=0$ (so $E[D\omega b_K(p)q^{\omega}_{p}]=0$); the centred process is controlled uniformly over $\mathcal{P}_{n}$ by linearity: $q^{\omega}_{p}$ splits into the $g=\mu_{0}$ channel, the $g=X$ channels scaled by $\beta_{\omega}(p)$, which is bounded uniformly over $\mathcal{P}_{n}$ since $\lVert\beta_{\omega}(p)\rVert\leq\lambda_{\min}(A_{\omega}(p))^{-1}\lVert E[D\omega v^{\omega}_{p}\mu_{0}]\rVert=O(1)$ by the Schur lower bound above and boundedness of $\mu_{0}$, and the H\"{o}lder-ball channel of part (a) with $h=\lambda_{\omega,p}$, the bounded weight rescaling envelopes only, each $O_{p}(\rho_{n})$; the $X$-rows obey the same bounds with the fixed-dimensional $X$ in place of $b_K$. (iii) \emph{Spline bias:} the $b$-rows have zero population term by the weighted-projection orthogonality and centred part $O_{p}(K^{-m}\rho_{n})$ by the $K^{-m}$-scaled Bernstein argument; the $X$-rows have population term $E[D\omega\,m^{\omega}_{X,p}(p)\,r^{\omega}_{p,K}(p)']=O(K^{-m})$ and centred part of the same order. Collecting, the score is $O_{p}(\rho_{n}+K^{-m})$ uniformly over $\mathcal{P}_{n}$, and multiplying by $\widehat{Q}_{\omega}(\hat{p})^{-1}=O_{p}(1)$ on the membership event yields the claim.
\end{proof}

\subsubsection*{Lemma C: second-stage reduction to a linear functional}

By Frisch--Waugh--Lovell, with $\hat{v}_i$ the within-selected-sample residual of $X_i$ on $\mathrm{span}\{b_K(\hat{p}_j)\}_{D_j=1}$, $\hat{A}=n^{-1}\sum_iD_i\hat{v}_i\hat{v}_i'$, and $\hat{v}_i$ sample-orthogonal to the spline span ($\sum_{D_i=1}\hat{v}_ib_K(\hat{p}_i)'=0$),
\begin{equation}\label{eq: np FWL}
\hat{\beta}-\beta_{0}=\hat{A}^{-1}\,\frac1n\sum_iD_i\hat{v}_i\bigl(\lambda_{0}(p_{0i})+\varepsilon_i\bigr).
\end{equation}

\begin{lemma}\label{lem: np second stage}
Under Assumptions~\ref{assu:IID}, \ref{assu:hfunc}, \ref{assu:s2},
\ref{assu:Dv*}, \ref{assu:np-firststage}, \ref{assu:diff}, \ref{assu:p},
and \ref{assu:secondstage}, together with the overlap condition of
Proposition~\ref{prop: np primitives}, $\hat{A}\to_{p}A$, the oracle term
$n^{-1/2}\sum_iD_i(\hat{v}_i-v_i)\varepsilon_i=o_{p}(1)$, and the
second-stage spline-approximation and empirical-projection contributions to
the $X$-moment are $o_{p}(n^{-1/2})$; consequently
\[
\sqrt{n}(\hat{\beta}-\beta_{0})=A^{-1}\Bigl[\frac1{\sqrt{n}}\sum_iD_iv_i\varepsilon_i-\frac1{\sqrt{n}}\sum_i\phi_i\delta_i+\sqrt{n}R_{2,n}\Bigr]+o_{p}(1),
\]
with $R_{2,n}$ the first-order Taylor remainder of Lemma~\ref{lem: np quadratic}. This establishes Assumption~\ref{assu:np-hl}(iv) and reduces the first-stage drift to the linear functional $-n^{-1}\sum_i\phi_i\delta_i$.
\end{lemma}

\begin{proof}
\emph{(\S1: $\hat{A}\to_{p}A$.)} Write $m_{X,p}(t):=E[X\mid p(X)=t,D=1]$ for the conditional mean of $X$ given the value of the conditioning function $p$, so that $v_i=X_i-m_{X,p_0}(p_{0i})$, and note that the empirical spline regression $\widehat{\Pi}_{K}X_i$ (of $X$ on $b_K(\hat{p})$ in the selected sample) estimates $m_{X,\hat{p}}$, the conditional mean \emph{along} $\hat{p}$, not $m_{X,p_0}$. Since $\hat{v}_i-v_i=m_{X,p_0}(p_{0i})-\widehat{\Pi}_{K}X_i$, decompose
\[
\hat{v}_i-v_i=\underbrace{[m_{X,p_0}(p_{0i})-m_{X,\hat{p}}(\hat{p}_i)]}_{\mathrm{(I)}}+\underbrace{[m_{X,\hat{p}}(\hat{p}_i)-\Pi_{K}m_{X,\hat{p}}(\hat{p}_i)]}_{\mathrm{(II)}}+\underbrace{[\Pi_{K}m_{X,\hat{p}}(\hat{p}_i)-\widehat{\Pi}_{K}X_i]}_{\mathrm{(III)}}.
\]
In $L^{2}$ on the selected sample: $\lVert\mathrm{(I)}\rVert_{L^{2}}^{2}\le C\lVert\hat{p}-p_{0}\rVert_{L^{2}}^{2}=o_{p}(1)$ by the conditional-mean stability Assumption~\ref{assu:np-firststage}(iv), which compares the two conditional-mean \emph{operators} $m_{X,\hat{p}}(\hat{p}_i)=E[X_i\mid\hat{p}(X_i),D_i{=}1]$ and $m_{X,p_0}(p_{0i})=E[X_i\mid p_0(X_i),D_i{=}1]$ (the bound does not follow from smoothness of the single map $m_{X,p_0}$ evaluated at $\hat{p}_i$ versus $p_{0i}$, since $X_i-m_{X,p_0}(\hat{p}_i)$ is not mean zero given $\hat{p}_i$); $\lVert\mathrm{(II)}\rVert_{L^{2}}=O(K^{-m})=o(1)$ by the H\"{o}lder smoothness of $m_{X,p}$ uniformly over $p\in\mathcal{P}_{n}$, which is part of the class definition (Assumption~\ref{assu:p}); and $\mathrm{(III)}$, the gap between the population and empirical spline projections of $X_i$ onto $\mathrm{span}\{b_K(\hat{p}_j)\}$, is controlled uniformly over the admissible class by Lemma~\ref{lem: np uniform}. At any fixed $p\in\mathcal{P}_{n}$ decompose $\Pi_{K}m_{X,p}(p_i)-\widehat{\Pi}_{K}X_i=[\Pi_{K}-\widehat{\Pi}_{K}]m_{X,p}(p_i)-\widehat{\Pi}_{K}u^{X}_i(p)$, where $u^{X}_{i}(p)=X_i-m_{X,p}(p_i)$. The projected-noise piece has empirical $L^{2}$ norm $O_{p}(\lVert n^{-1}\sum_iD_ib_K(p_i)u^{X}_i(p)'\rVert_{F})=O_{p}(\rho_{n})$ uniformly over $\mathcal{P}_{n}$, by Lemma~\ref{lem: np uniform}(a) with the uniform Gram conditioning of Lemma~\ref{lem: np uniform}(b). For the projection-difference piece, with $c_{p}:=G_{K}(p)^{-1}E[D_ib_K(p_i)m_{X,p}(p_i)']$ the population coefficients ($\lVert b_K'c_{p}\rVert_{\infty}=\lVert\Pi_{K}m_{X,p}\rVert_{\infty}\leq C$ by the uniformly bounded Lebesgue constant of the spline $L^{2}$-projection and $\lVert m_{X,p}\rVert_{\infty}\leq C_{m}$),
\[
[\Pi_{K}-\widehat{\Pi}_{K}]m_{X,p}=-\,b_K'\,\widehat{G}_{K}(p)^{-1}\Bigl[\Bigl(\tfrac1n\textstyle\sum_iD_ib_K(p_i)\,m_{X,p}(p_i)'-E[\cdot]\Bigr)-\Bigl(\tfrac1n\textstyle\sum_iD_ib_K(p_i)\,(b_K'c_{p})(p_i)'-E[\cdot]\Bigr)\Bigr],
\]
and both bracket terms are $O_{p}(\rho_{n})$ uniformly by Lemma~\ref{lem: np uniform}(a), the first with $h=m_{X,p}\in\Lambda^{m}_{C_{m}}$ and the second with $h=b_K'c_{p}\in\mathcal{S}_{K}$ (its sup norm is bounded by the Lebesgue constant); the fixed-vector form is essential here, as the crude operator-norm route $\lVert\widehat{G}_{K}-G_{K}\rVert_{op}\lVert c_{p}\rVert$ would lose a factor $\xi(K)$. On the event $\{\hat{p}\in\mathcal{P}_{n}\}$ of Assumption~\ref{assu:p}(ii), whose probability tends to one, every bound applies at $p=\hat{p}$, so $\lVert\mathrm{(III)}\rVert_{L^{2}}=O_{p}(\rho_{n})=o_{p}(1)$ with $\rho_{n}=\sqrt{(K+L)\log n/n}$. We emphasize that the pointwise argument at the random $\hat{p}$ is unavailable: $\hat{p}$ depends on every $(X_j,D_j)$, including observation $i$, so conditional on the fitted first stage the array $X_i-m_{X,\hat{p}}(\hat{p}_i)$ is not mean zero; uniformity over fixed indices in $\mathcal{P}_{n}$, plus membership, replaces that invalid conditioning. The generated-index Gram conditioning posited in Assumption~\ref{assu:secondstage}(v) is, on the same event, no longer an independent restriction: it follows from Lemma~\ref{lem: np uniform}(b). Hence $n^{-1}\sum_iD_i\lVert\hat{v}_i-v_i\rVert^{2}=o_{p}(1)$, and with $n^{-1}\sum_iD_iv_iv_i'\to_{p}A$ (law of large numbers under Assumption~\ref{assu:IID}(i)--(ii); $A$ nonsingular by Assumption~\ref{assu:Dv*}(i)) the Cauchy--Schwarz cross term vanishes, giving $\hat{A}\to_{p}A$.

\emph{(\S2: oracle replacement.)} We show $n^{-1/2}\sum_iD_i(\hat{v}_i-v_i)\varepsilon_i=o_{p}(1)$. Condition on the first stage $\mathcal{F}_{1}=\{(X_j,D_j)\}_{j=1}^{n}$; then $\hat{v}_i-v_i$ is $\mathcal{F}_{1}$-measurable, and by Assumption~\ref{assu:IID}(i) ($\varepsilon_i\perp\{(X_j,D_j)\}_{j\neq i}\mid(X_i,D_i)$) we have $E[D_i\varepsilon_i\mid\mathcal{F}_{1}]=D_iE[\varepsilon_i\mid X_i,D_i=1]=0$. Hence the sum is, conditionally, mean zero with variance
\[
\frac1n\sum_iD_i\lVert\hat{v}_i-v_i\rVert^{2}\,\sigma_{0}^{2}(X_i)\le\bar{\sigma}^{2}\cdot\frac1n\sum_iD_i\lVert\hat{v}_i-v_i\rVert^{2}=o_{p}(1)
\]
by \S1 and Assumption~\ref{assu:s2}. Chebyshev (conditionally) then unconditionally gives $o_{p}(1)$. Note this step integrates the \emph{noise} $\varepsilon_i$ against the increment and therefore needs only $\lVert\hat{v}-v\rVert_{L^{2}}=o_{p}(1)$, not undersmoothing.

\emph{(\S3: the control term.)} By sample-orthogonality, for the spline projection $\widehat{\Pi}_{K}$,
\[
\frac1n\sum_iD_i\hat{v}_i\lambda_{0}(p_{0i})=\frac1n\sum_iD_i\hat{v}_i\bigl[\lambda_{0}(p_{0i})-\widehat{\Pi}_{K}\lambda_{0}(\hat{p}_i)\bigr].
\]
Insert $\pm\lambda_{0}(\hat{p}_i)$ and $\pm\Pi_{K}\lambda_{0}(\hat{p}_i)$:
\[
\lambda_{0}(p_{0i})-\widehat{\Pi}_{K}\lambda_{0}(\hat{p}_i)=\underbrace{[\lambda_{0}(p_{0i})-\lambda_{0}(\hat{p}_i)]}_{\text{Taylor, to Lemma~\ref{lem: np quadratic}}}+\underbrace{[\lambda_{0}(\hat{p}_i)-\Pi_{K}\lambda_{0}(\hat{p}_i)]}_{\text{spline bias}}+\underbrace{[\Pi_{K}-\widehat{\Pi}_{K}]\lambda_{0}(\hat{p}_i)}_{\text{Gram error}}.
\]
Write
$r_i:=\lambda_{0}(p_{0i})-\lambda_{0}(\hat p_i)
+\lambda_{0}'(p_{0i})\delta_i$, so that
$\lambda_{0}(p_{0i})-\lambda_{0}(\hat p_i)
=-\lambda_{0}'(p_{0i})\delta_i+r_i$ and
$|r_i|\leq(\bar L_2/2)\delta_i^2$ by the Lipschitz continuity of
$\lambda_0'$ in Assumption~\ref{assu:hfunc}(i). The linear piece contributes
$-n^{-1}\sum_iD_i\hat{v}_i\lambda_{0}'(p_{0i})\delta_i$; replacing
$\hat{v}_i$ by $v_i$ here costs
$n^{-1}\sum_iD_i(\hat v_i-v_i)\lambda_0'(p_{0i})\delta_i$, which by
Cauchy--Schwarz and bounded $\lambda_0'$ is
$O_p\bigl(\lVert\hat v-v\rVert_{L^2}\lVert\delta\rVert_{L^2}\bigr)$.
By \S1,
$\lVert\hat v-v\rVert_{L^2}
=O_p(\rho_{n}+K^{-m}+\sqrt{L/n}+\Delta_p)$ with
$\rho_{n}=\sqrt{(K+L)\log n/n}$, and by
Lemma~\ref{lem: np first stage rates},
$\lVert\delta\rVert_{L^2}=o_p(n^{-1/4})$; multiplying by $\sqrt n$, the
$K^{-m}$, $\sqrt{L/n}$ and $\Delta_p$ contributions are $o_p(1)$ (the
first by Assumption~\ref{assu:secondstage}(iii), the latter two because
$L=o(\sqrt n)$ and $\Delta_p=o(n^{-1/4})$ under
Assumption~\ref{assu:np-firststage}(i)--(ii)), while the dominant
$\sqrt n\,\rho_{n}\,\lVert\delta\rVert_{L^2}
=O_p(\sqrt{(K+L)\log n}\cdot o(n^{-1/4}))$ is $o_p(1)$ by the cross-rate
condition $(K+L)\log n=o(\sqrt n)$ of
Assumption~\ref{assu:secondstage}(iv). Hence the substitution error is
$o_p(n^{-1/2})$, and the linear piece equals
$-n^{-1}\sum_i\phi_i\delta_i+o_p(n^{-1/2})$ with
$\phi_i=D_iv_i\lambda_0'(p_{0i})$; the remainder is handled in
Lemma~\ref{lem: np quadratic}. For the spline-bias part,
$\lVert\lambda_{0}(\hat p_\cdot)-\Pi_{K}\lambda_{0}(\hat p_\cdot)\rVert_{\infty}
=O(K^{-m})$ (Assumption~\ref{assu:hfunc}; the cubic-spline $L^{2}$-projection
has uniformly bounded Lebesgue constant, so its sup-norm error matches the
best approximation), so
\[
\Bigl\lVert\frac1{\sqrt n}\sum_iD_i\hat{v}_i[\lambda_{0}(\hat{p}_i)-\Pi_{K}\lambda_{0}(\hat{p}_i)]\Bigr\rVert\le\sqrt{n}\,\Bigl(\frac1n\sum_iD_i\lVert\hat v_i\rVert\Bigr)O(K^{-m})=O_{p}(\sqrt{n}\,K^{-m})=o_{p}(1)
\]
by the undersmoothing rate Assumption~\ref{assu:secondstage}(iii). The Gram-error part is bounded by the uniform machinery of Lemma~\ref{lem: np uniform}: since both $\lambda_{0}\in\Lambda^{m}$ and $\Pi_{K}\lambda_{0}$ (a spline with $\lVert\Pi_{K}\lambda_{0}\rVert_{\infty}\leq C$) lie in $\mathcal{H}_{n}$, the coefficient error of the empirical projection of $\lambda_{0}$ collapses to a single centred average, $[\Pi_{K}-\widehat{\Pi}_{K}]\lambda_{0}=-b_K'\widehat{G}_{K}(\hat p)^{-1}\bigl(n^{-1}\sum_iD_ib_K(\hat p_i)[\lambda_{0}-\Pi_{K}\lambda_{0}](\hat p_i)-E[\cdot]\bigr)$, whose summands carry the small envelope $\xi(K)\,O(K^{-m})$ and variance scaled by $\lVert\lambda_{0}-\Pi_{K}\lambda_{0}\rVert_{\infty}^{2}=O(K^{-2m})$; the same Bernstein-plus-net argument as Lemma~\ref{lem: np uniform}(a), with every bound scaled by $O(K^{-m})$, gives the uniform rate $O_{p}(K^{-m}\rho_{n})$ for its empirical $L^{2}$ norm on $\{\hat p\in\mathcal{P}_{n}\}$. Hence the Gram-error contribution is at most $\sqrt n\cdot O_{p}(1)\cdot O_{p}(K^{-m}\rho_{n})=O_{p}\bigl((\sqrt nK^{-m})\rho_{n}\bigr)=o_{p}(1)$ by Assumptions~\ref{assu:secondstage}(ii)--(iii) together with \ref{assu:np-firststage}(i) for the $L$ part of $\rho_{n}$. Collecting \S1--\S3 and \eqref{eq: np FWL}, and multiplying by $\sqrt{n}$,
\[
\sqrt{n}(\hat\beta-\beta_0)=\hat A^{-1}\Bigl[\frac1{\sqrt n}\sum_iD_iv_i\varepsilon_i-\frac1{\sqrt n}\sum_i\phi_i\delta_i+\sqrt n R_{2,n}\Bigr]+o_p(1),
\]
where $R_{2,n}$ is the first-order Taylor remainder of Lemma~\ref{lem: np quadratic}. With $\hat A\to_p A$ (\S1) and $A$ nonsingular, Slutsky's theorem replaces $\hat A^{-1}$ by $A^{-1}$, completing the lemma once $\sqrt n R_{2,n}=o_p(1)$.
\end{proof}

\subsubsection*{Lemma D: first-order Taylor remainder}

\begin{lemma}\label{lem: np quadratic}
Under Assumptions~\ref{assu:np-firststage}(i)--(ii),
\ref{assu:hfunc}(i), \ref{assu:secondstage}, and the overlap condition of
Proposition~\ref{prop: np primitives}, the first-order remainder
\[
R_{2,n}:=\frac1n\sum_iD_i\hat v_i
\{\lambda_0(p_{0i})-\lambda_0(\hat p_i)
+\lambda_0'(p_{0i})\delta_i\}
\]
satisfies $\sqrt{n}\,R_{2,n}=o_{p}(1)$, using only the $L^{2}$
first-stage rate.
\end{lemma}

\begin{proof}
By the fundamental theorem of calculus and Assumption~\ref{assu:hfunc}(i),
\[
\left|\lambda_0(p_{0i})-\lambda_0(\hat p_i)
+\lambda_0'(p_{0i})\delta_i\right|
=\left|\int_{p_{0i}}^{\hat p_i}
\{\lambda_0'(t)-\lambda_0'(p_{0i})\}\,dt\right|
\leq\frac{\bar L_2}{2}\delta_i^2.
\]
Hence the remainder is controlled by the weighted form
\[
\lVert\sqrt{n}\,R_{2,n}\rVert\le\frac{\bar L_2}{2}\,\sqrt{n}\,\frac1n\sum_iD_i\lVert\hat{v}_i\rVert\delta_i^{2}\le\frac{\bar L_2}{2}\,\Bigl(\sup_i\lVert\hat{v}_i\rVert\Bigr)\,\sqrt{n}\,\frac1n\sum_i\delta_i^{2}.
\]
The empirical squared norm is bounded \emph{directly} from the series normal
equations: by Lemma~\ref{lem: np first stage rates}(b),
$n^{-1}\sum_i\delta_i^{2}=O_p(L/n+\Delta_{p}^{2})=o_p(n^{-1/2})$.
This uses the empirical $L^{2}$ norm, not a conditional-mean argument; such a
conditioning argument is unavailable because $\hat{p}$ is estimated from the
same $D_i$. The residual is uniformly stable,
$\sup_i\lVert\hat v_i\rVert=O_p(1)$, by the generated-index stability imposed
in Assumption~\ref{assu:secondstage}(v). Therefore
$\lVert\sqrt n R_{2,n}\rVert\le\tfrac{\bar L_2}{2}\,O_p(1)\cdot
\sqrt n\,o_p(n^{-1/2})=o_p(1)$. Thus \emph{only the empirical $L^{2}$
first-stage rate is consumed}; no uniform $n^{-1/4}$ rate, hence no
strengthened leverage, is needed for the remainder. If one prefers not to
posit Assumption~\ref{assu:secondstage}(v), the same conclusion follows from
an $L^{4}$ first-stage rate in its place: by Cauchy--Schwarz,
$\sqrt{n}\,n^{-1}\sum_iD_i\lVert\hat{v}_i\rVert\delta_i^{2}
\le\bigl(n^{-1}\sum_iD_i\lVert\hat{v}_i\rVert^{2}\bigr)^{1/2}
\sqrt{n}\,\bigl(n^{-1}\sum_i\delta_i^{4}\bigr)^{1/2}
=O_{p}(1)\cdot\sqrt{n}\,\lVert\delta\rVert_{L^{4}}^{2}$, since
$n^{-1}\sum_iD_i\lVert\hat{v}_i\rVert^{2}
=\operatorname{tr}(\hat{A})=O_{p}(1)$; this is $o_{p}(1)$ whenever
$\lVert\delta\rVert_{L^{4}}=o_{p}(n^{-1/4})$, trading the uniform sup-norm
stability of~(v) for an $L^{4}$ first-stage rate. The first-derivative
substitution error
$n^{-1}\sum_iD_i\hat v_i(\lambda_0'(\hat p_i)-\lambda_0'(p_{0i}))
(p_{0i}-\hat p_i)$ is also bounded by $\bar L_2\delta_i^2$ and is handled
identically. Finiteness of $\Omega_{\chi}$ plays no role here: the remainder
is governed by the Lipschitz modulus of $\lambda_0'$ and the $L^{2}$ rate,
not by the integrated representer norm.
\end{proof}

\subsubsection*{Lemma B: Riesz representation of the linear functional}

\begin{lemma}\label{lem: np riesz}
Let $\phi_i\in\mathbb{R}^{d}$ satisfy $E[\phi_i\mid X_i]=\alpha_{0}(X_i)$ and $\sup_{x}E[\lVert\phi_i\rVert^{2}\mid X_i=x]<\infty$. Under Assumptions~\ref{assu:np-firststage}(i)--(iii),
\[
\sqrt{n}\,\frac1n\sum_i\phi_i\delta_i=\frac1{\sqrt{n}}\sum_i\alpha_{0}(X_i)(D_i-p_{0i})+o_{p}(1)=\frac1{\sqrt{n}}\sum_i\chi_i+o_{p}(1).
\]
\end{lemma}

\begin{proof}
\emph{Normal-equation expansion.} Since $\hat{\pi}=\hat{G}^{-1}n^{-1}\sum_i\psi_{L}(X_i)D_i$, $\pi_{0}=G^{-1}E[\psi_{L}p_{0}]$, and $\psi_{L}'\pi_{0}$ lies in the span,
\begin{equation}\label{eq: np pi expansion}
\hat{\pi}-\pi_{0}=\hat{G}^{-1}\bigl[\,\underbrace{\tfrac1n\textstyle\sum_i\psi_{L}(X_i)(D_i-p_{0i})}_{S_{n}}+\underbrace{\tfrac1n\textstyle\sum_i\psi_{L}(X_i)r_{L}(X_i)}_{T_{n}}\,\bigr].
\end{equation}
The summands of $S_{n}$ are i.i.d.\ mean zero with $\mathrm{Var}(n^{1/2}S_{n})=E[(D-p_{0})^{2}\psi_{L}\psi_{L}']\preceq\tfrac14 G$ (using $E[(D-p_{0})^{2}\mid X]=p_{0}(1-p_{0})\le\tfrac14$), so
\begin{equation}\label{eq: np Sn rate}
E\lVert S_{n}\rVert^{2}=n^{-1}\mathrm{tr}\,E[(D-p_{0})^{2}\psi_{L}\psi_{L}']\le\bar{\lambda}_{G}L/(4n),\qquad\lVert S_{n}\rVert=O_{p}(\sqrt{L/n}).
\end{equation}
With $\hat p(X_i)-p_{0}(X_i)=\psi_{L}(X_i)'(\hat\pi-\pi_{0})-r_{L}(X_i)$ and $\hat\Gamma:=n^{-1}\sum_i\phi_i\psi_{L}(X_i)'$ (so $E[\hat\Gamma]=\Gamma$),
\[
\sqrt n\,\frac1n\sum_i\phi_i\delta_i=\underbrace{\sqrt n\,\hat\Gamma\hat G^{-1}S_{n}}_{\mathrm{(I)}}+\underbrace{\sqrt n\,\hat\Gamma\hat G^{-1}T_{n}}_{\mathrm{(II)}}-\underbrace{\sqrt n\,\frac1n\sum_i\phi_ir_{L}(X_i)}_{\mathrm{(III)}}.
\]

\emph{Term (I): the score channel.} Split $\mathrm{(I)}=\sqrt n\,\Gamma G^{-1}S_{n}+\sqrt n(\hat\Gamma-\Gamma)G^{-1}S_{n}+\sqrt n\,\hat\Gamma(\hat G^{-1}-G^{-1})S_{n}=:\mathrm{(I_a)}+\mathrm{(I_b)}+\mathrm{(I_c)}$.

\noindent$\bullet$ \emph{$\mathrm{(I_a)}$.} The projection identity $\Gamma G^{-1}\psi_{L}=\alpha_{0,L}$ gives $\mathrm{(I_a)}=n^{-1/2}\sum_i\alpha_{0,L}(X_i)(D_i-p_{0i})$. Writing $\alpha_{0,L}=\alpha_{0}-(\alpha_{0}-\alpha_{0,L})$, the first part is $n^{-1/2}\sum_i\chi_i$ and the second is mean zero with variance $E[(\alpha_{0}-\alpha_{0,L})(X)(\alpha_{0}-\alpha_{0,L})(X)'(D-p_{0})^{2}]\preceq\tfrac14\lVert\alpha_{0}-\alpha_{0,L}\rVert_{L^{2}}^{2}I=\tfrac14\Delta_{\alpha}^{2}I\to0$ by Assumption~\ref{assu:np-firststage}(ii). Hence $\mathrm{(I_a)}=n^{-1/2}\sum_i\chi_i+o_{p}(1)$.

\noindent$\bullet$ \emph{$\mathrm{(I_b)}$.} $E\lVert\hat\Gamma-\Gamma\rVert_{F}^{2}\le n^{-1}E[\lVert\phi\rVert^{2}\lVert\psi_{L}\rVert^{2}]\le n^{-1}\zeta^{2}\sup_xE[\lVert\phi\rVert^{2}\mid X{=}x]=O(\zeta^{2}/n)$ by the moment hypothesis on $\phi_{i}$; with $\lVert G^{-1}\rVert_{op}=O(1)$ and $\lVert S_{n}\rVert=O_{p}(\sqrt{L/n})$, $\lVert\mathrm{(I_b)}\rVert\le\sqrt n\,\lVert\hat\Gamma-\Gamma\rVert_{F}\lVert G^{-1}\rVert\lVert S_{n}\rVert=O_{p}(\sqrt n\cdot\zeta/\sqrt n\cdot\sqrt{L/n})=O_{p}(\sqrt{\zeta^{2}L/n})=o_{p}(1)$ by Assumption~\ref{assu:np-firststage}(i).

\noindent$\bullet$ \emph{$\mathrm{(I_c)}$, the Gram-error term.} Using $\hat G^{-1}-G^{-1}=-\hat G^{-1}(\hat G-G)G^{-1}$, inserting $G^{-1/2}G^{1/2}$ between factors and noting $G^{1/2}G^{-1}S_n=G^{-1/2}S_n$,
\[
\lVert\mathrm{(I_c)}\rVert\le\sqrt n\,\lVert\hat\Gamma G^{-1/2}\rVert_{op}\,\lVert G^{1/2}\hat G^{-1}G^{1/2}\rVert_{op}\,\bigl\lVert G^{-1/2}(\hat G-G)G^{-1/2}\bigr\rVert_{op}\,\lVert G^{-1/2}S_{n}\rVert.
\]
Here $\lVert\hat\Gamma G^{-1/2}\rVert_{op}=\lVert\alpha_{0,L}\rVert_{L^{2}}+o_p(1)\le\lVert\alpha_{0}\rVert_{L^{2}}+o_p(1)=O_{p}(1)$ by the moment hypothesis and conditional Jensen; $\lVert G^{1/2}\hat G^{-1}G^{1/2}\rVert_{op}=O_{p}(1)$ by Lemma~\ref{lem: np first stage rates}(a); $\lVert G^{-1/2}S_{n}\rVert=O_{p}(\sqrt{L/n})$ by \eqref{eq: np Sn rate}; and $\lVert G^{-1/2}(\hat G-G)G^{-1/2}\rVert_{op}=O_{p}(\sqrt{\zeta^{2}\log L/n})$ by the matrix Bernstein inequality \citep[Theorem~6.1 of][]{tropp2012user}. Therefore $\lVert\mathrm{(I_c)}\rVert=O_{p}(\sqrt n\cdot\sqrt{\zeta^{2}\log L/n}\cdot\sqrt{L/n})=O_{p}(\sqrt{\zeta^{2}L\log L/n})$, which is $o_{p}(1)$ by the leverage condition $\zeta^{2}L\log L/n\to0$ of Assumption~\ref{assu:np-firststage}(i). This is the step that consumes the leverage: the unweighted score channel carries no approximation factor to damp the Gram error, so the $\sqrt{L}$ from the score norm $\lVert S_n\rVert$ must be absorbed by the leverage rate (it is the sole reason part~(i) is stated with the factor $L$ rather than the bare invertibility condition $\zeta^{2}\log L/n\to0$).

\emph{Terms (II)$-$(III): the bias channel and the double-robustness cancellation.} Replacing $\hat\Gamma\hat G^{-1}$ by $\Gamma G^{-1}$ in (II) changes it by $o_p(1)$, by the same Gram- and sampling-error bounds as $\mathrm{(I_b)},\mathrm{(I_c)}$ applied to $T_n$ in place of $S_n$ (the substitution is further damped because $E[\psi_{L}r_{L}]=0$ gives $\lVert T_n\rVert=O_p(\zeta\Delta_p/\sqrt n)$). The leading part of (II) is $\sqrt n\,\Gamma G^{-1}T_{n}=\sqrt n\,n^{-1}\sum_i\alpha_{0,L}(X_i)r_{L}(X_i)$ and the leading part of (III) is $\sqrt n\,n^{-1}\sum_i\alpha_{0}(X_i)r_{L}(X_i)$. Their population means are $\sqrt n\,E[\alpha_{0,L}r_{L}]$ and $\sqrt n\,E[\alpha_{0}r_{L}]$; since $\alpha_{0,L}\in\mathrm{span}(\psi_L)$ and $r_{L}\perp\mathrm{span}(\psi_L)$, $E[\alpha_{0,L}r_{L}]=0$, while $E[\alpha_{0}r_{L}]=E[(\alpha_{0}-\alpha_{0,L})r_{L}]$, so
\[
\text{(mean of (II)$-$(III))}=-\sqrt n\,E[(\alpha_{0}-\alpha_{0,L})r_{L}],\qquad\bigl\lVert\sqrt n\,E[(\alpha_{0}-\alpha_{0,L})r_{L}]\bigr\rVert\le\sqrt n\,\Delta_{\alpha}\Delta_{p}\to0
\]
by Cauchy--Schwarz and the product undersmoothing Assumption~\ref{assu:np-firststage}(iii). This exact cancellation of the $\sqrt n\langle\alpha_{0,L},r_{L}\rangle$ terms is the double-robustness payoff: only the product $\Delta_{\alpha}\Delta_{p}$, not $\Delta_{p}$ alone, must vanish at the parametric rate. The centred parts of (II) and (III) are mean-zero averages. The centred part of (III) has variance $\mathrm{Var}(\alpha_{0}(X)r_{L}(X))\le\sup_xE[\lVert\phi\rVert^{2}\mid X{=}x]\,\Delta_{p}^{2}=O(\Delta_{p}^{2})\to0$ by the moment hypothesis and Assumption~\ref{assu:np-firststage}(ii); the centred part of (II) has variance $\le\lVert\alpha_{0,L}\rVert_{\infty}^{2}\Delta_{p}^{2}\le\underline{\lambda}_{G}^{-1}\zeta^{2}\lVert\alpha_{0}\rVert_{L^{2}}^{2}\Delta_{p}^{2}=O(\zeta^{2}\Delta_{p}^{2})=o(1)$, where $\lVert\alpha_{0,L}\rVert_{\infty}\le\zeta\lVert G^{-1}E[\psi_{L}\alpha_{0}]\rVert\le\underline{\lambda}_{G}^{-1/2}\zeta\lVert\alpha_{0}\rVert_{L^{2}}$ uses $\lVert\psi_{L}'c\rVert_{\infty}\le\zeta\lVert c\rVert$ and the Gram lower bound, and the final $o(1)$ is the stable-basis approximation condition $\zeta(L)\Delta_p(L)\to0$ in Assumption~\ref{assu:np-firststage}(ii). Hence $\mathrm{(II)}-\mathrm{(III)}=o_{p}(1)$, and combining with Term~(I), $\sqrt n\,n^{-1}\sum_i\phi_i\delta_i=n^{-1/2}\sum_i\chi_i+o_{p}(1)$.
\end{proof}

\subsubsection*{Completion of the proof}

We verify the high-level conditions of Assumption~\ref{assu:np-hl} from the primitives. Part~(i): $\lVert\alpha_{0}(x)\rVert=p_{0}(x)\lvert\lambda_{0}'(p_{0}(x))\rvert\lVert v(x)\rVert\le\sup_{t\in[0,1]}\lvert\lambda_{0}'(t)\rvert\,\sup_{x\in\mathcal{X}}\lVert v(x)\rVert<\infty$ by Assumptions~\ref{assu:hfunc}(i) (bounded $\lambda_{0}'$) and~\ref{assu:IID}(ii) (compact $\mathcal{X}$, so $v$ is bounded), hence $\alpha_{0}\in L^{2}(F_{X})$; $A$ is nonsingular by Assumption~\ref{assu:Dv*}(i). Part~(iv) is established in Lemma~\ref{lem: np second stage}: its \S1--\S2 give $\hat{A}\to_p A$ and the oracle term $n^{-1/2}\sum_iD_i(\hat{v}_i-v_i)\varepsilon_i=o_p(1)$, and its \S3 gives the second-stage spline-approximation and empirical-projection contributions as $o_p(n^{-1/2})$ (Assumption~\ref{assu:secondstage}(ii)--(v)). Part~(iii) is Lemma~\ref{lem: np quadratic}. Part~(ii): by Lemma~\ref{lem: np second stage}\,\S3 the first-stage linear term satisfies $n^{-1}\sum_iD_i\hat{v}_i\lambda_{0}'(p_{0i})\delta_i=n^{-1}\sum_i\phi_i\delta_i+o_p(n^{-1/2})$ (the $\hat{v}_i\to v_i$ substitution), and Lemma~\ref{lem: np riesz}, whose hypotheses hold for $\phi_i=D_iv_i\lambda_{0}'(p_{0i})$ since $E[\lVert\phi_i\rVert^{2}\mid X_i=x]=p_{0}(x)\lambda_{0}'(p_{0}(x))^{2}\lVert v(x)\rVert^{2}$ is bounded as in Part~(i), gives $\sqrt{n}\,n^{-1}\sum_i\phi_i\delta_i=n^{-1/2}\sum_i\chi_i+o_p(1)$, whence $\sqrt{n}\,n^{-1}\sum_iD_i\hat{v}_i\lambda_{0}'(p_{0i})\delta_i=n^{-1/2}\sum_i\chi_i+o_p(1)$. Thus Assumption~\ref{assu:np-hl} holds, and Proposition~\ref{prop: asymptotics np} applies.\hfill$\qed$

\section{Proofs of Propositions~\refPropEffBound{} and~\refPropEffEst}\label{app: efficiency proof}

Throughout this section we use the column-vector convention of Appendix~\ref{app: np proof}: $X_i,v_i,\alpha_{0}\in\mathbb{R}^{d}$ are columns, $A=E[D_iv_iv_i']$, and $\varepsilon_i:=Y_i-X_i'\beta_{0}-\lambda_{0}(p_{0i})$. Write $\mu_{0}(x):=x'\beta_{0}+\lambda_{0}(p_{0}(x))$ for the outcome conditional mean, $\sigma_{0}^{2}(x):=\mathrm{Var}(Y_i\mid X_i=x,D_i=1)$, and recall from Section~\ref{subsec: efficiency} the composite residual, composite variance, efficient weight, and tilted projection
\[
\tilde{u}_i:=D_i\varepsilon_i-p_{0i}\lambda_{0}'(p_{0i})(D_i-p_{0i}),\quad
\kappa^{2}(x):=p_{0}\sigma_{0}^{2}+p_{0}^{3}(1-p_{0})\lambda_{0}'(p_{0})^{2},\quad
w_{0}(x):=\frac{p_{0}(x)}{\kappa^{2}(x)},
\]
\[
\tilde{E}[X\mid p_{0}]:=\frac{E[(p_{0}^{2}/\kappa^{2})X\mid p_{0}(X)]}{E[p_{0}^{2}/\kappa^{2}\mid p_{0}(X)]},\qquad
\tilde{v}_i:=X_i-\tilde{E}[X\mid p_{0}](p_{0i}),\qquad
V^{*}:=\Bigl(E\bigl[\tfrac{p_{0i}^{2}}{\kappa^{2}(X_i)}\,\tilde{v}_i\tilde{v}_i'\bigr]\Bigr)^{-1}.
\]
The bound argument follows the projection approach to semiparametric efficiency \citep{bickel1993efficient,newey1990semiparametric}, and specializes the efficiency-bound analysis of sample-selection models with a nonparametric selection probability in \cite{newey1993efficiency} to the no-exclusion design. Two elementary facts are used repeatedly. First, $E[\tilde{u}_i\mid X_i]=0$ and
\begin{equation}\label{eq: composite variance}
E[\tilde{u}_i^{2}\mid X_i]=E[D_i\varepsilon_i^{2}\mid X_i]+p_{0i}^{2}\lambda_{0}'(p_{0i})^{2}E[(D_i-p_{0i})^{2}\mid X_i]=p_{0i}\sigma_{0}^{2}(X_i)+p_{0i}^{3}(1-p_{0i})\lambda_{0}'(p_{0i})^{2}=\kappa^{2}(X_i),
\end{equation}
since the cross term vanishes: $E[D_i\varepsilon_i(D_i-p_{0i})\mid X_i]=(1-p_{0i})E[D_i\varepsilon_i\mid X_i]=0$. Second, for any bounded weight $\omega(x)>0$ that is a function of $X$,
\begin{equation}\label{eq: weighted projection identity}
\frac{E[D_i\omega(X_i)X_i\mid p_{0i}]}{E[D_i\omega(X_i)\mid p_{0i}]}=\frac{E[p_{0i}\omega(X_i)X_i\mid p_{0i}]}{E[p_{0i}\omega(X_i)\mid p_{0i}]},
\end{equation}
by iterating $E[D_i\mid X_i]=p_{0i}$; in particular $E[X_i\mid p_{0i},D_i=1]=E[X_i\mid p_{0i}]$, and $\tilde{E}[X\mid p_{0}]$ is the $w_{0}$-weighted projection of $X$ on functions of $p_{0}$ in the selected population, because $D w_{0}$ has conditional mean $p_{0}w_{0}=p_{0}^{2}/\kappa^{2}$ given $X$.

\subsection{The tangent set}\label{subsec: tangent}

The model $\mathcal{P}$ of Section~\ref{subsec: efficiency} restricts the distribution $P$ of $(X_i,D_i,D_iY_i)$ by: (M1) $p_{0}=E[D\mid X]$ is unrestricted subject to overlap $p_{0}\in[\underline{c},\bar{c}]\subset(0,1)$; (M2) $E[Y\mid X,D=1]=X'\beta+\lambda(p_{0}(X))$ for some $\beta\in\mathbb{R}^{d}$ and $\lambda\in C^{1}_{b}$, where $C_b^1$ denotes the class of continuously differentiable functions on $[0,1]$ with bounded derivative; (M3) the conditional law of $Y$ given $(X,D=1)$ and the marginal of $X$ are otherwise unrestricted, with $\sigma_{0}^{2}$ bounded and bounded away from zero. $Y$ is unobserved when $D=0$, so the likelihood of one observation factors as
\[
dP=f_{X}(x)\,p_{0}(x)^{d}\left(1-p_{0}(x)\right)^{1-d}\,\bigl[f_{Y\mid X,D=1}(y\mid x)\bigr]^{d}.
\]

\begin{lemma}[Tangent set and pathwise derivative]\label{lem: eff tangent}
Consider regular parametric submodels $t\mapsto P_{t}\in\mathcal{P}$ through $P_{0}=P$ generated by perturbations $p_{t}=p_{0}+th$, $\beta_{t}=\beta_{0}+t\delta$, $\lambda_{t}=\lambda_{0}+tg$, a mean-square differentiable path of $f_{Y\mid X,D=1}$ whose conditional mean tracks $x'\beta_{t}+\lambda_{t}(p_{t}(x))$, and a path of $f_{X}$, where $h$ is bounded (bounded $h$ are dense in $L^{2}(F_{X})$, which suffices for all conclusions below), $\delta\in\mathbb{R}^{d}$, and $g\in C^{1}_{b}$ are otherwise free, and the truth lies interior to the overlap interval so that $p_{t}\in(0,1)$ and identification is preserved for small $t$. The score of such a path is
\begin{equation}\label{eq: tangent score}
s=a(X)(D-p_{0})+D\,\ell(Y,X)+u(X),
\end{equation}
with $a=h/\{p_{0}(1-p_{0})\}$, $E[u(X)]=0$, $E[\ell\mid X,D=1]=0$, and
\begin{equation}\label{eq: mean shift}
E[\ell\,(Y-\mu_{0})\mid X=x,D=1]=\Delta(x):=x'\delta+g(p_{0}(x))+\lambda_{0}'(p_{0}(x))h(x);
\end{equation}
conversely every $(a,\ell,u)$ with $\Delta_{\ell}(x):=E[\ell(Y-\mu_{0})\mid x,D=1]$ of the form \eqref{eq: mean shift} for some $(\delta,g)$ arises from such a path. Along any such path, $\beta(P_{t})$ is differentiable with $\dot{\beta}=\delta$, and $(\delta,g)$ are uniquely determined by $(h,\Delta)$ through the decomposition $\Delta-\lambda_{0}'h=x'\delta+g(p_{0})$, which is unique under Assumption~\ref{assu:Dv*}(i).
\end{lemma}

\begin{proof}
The score decomposition \eqref{eq: tangent score} follows by differentiating the log-likelihood along the path: the Bernoulli factor contributes $h(X)(D-p_{0})/\{p_{0}(1-p_{0})\}$, the outcome factor contributes $D$ times the $L^{2}$-derivative $\ell$ of $\log f_{Y\mid X,D=1}$, and the $f_{X}$ factor contributes $u(X)$. The model constraint (M2) holds along the path if and only if the conditional mean of $f_{t}(\cdot\mid x)$ equals $x'\beta_{t}+\lambda_{t}(p_{t}(x))=\mu_{0}(x)+t\Delta(x)+o(t)$ with $\Delta$ as in \eqref{eq: mean shift}; differentiating the mean identity $\int y f_{t}(y\mid x)dy=\mu_{0}(x)+t\Delta(x)+o(t)$ gives $E[\ell(Y-\mu_{0})\mid x,D=1]=\Delta(x)$. For the converse, given $(a,\ell,u)$ with $\Delta_{\ell}$ of the stated form, take $h=a\,p_{0}(1-p_{0})$, verify $p_{t}\in(\underline{c}',\bar{c}')$ for small $t$, and construct the corrected exponential-tilt path $f_{t}\propto f_{0}\exp\bigl(t\ell+a_{t}(x)(y-\mu_{0}(x))\bigr)$ with $a_{t}=O(t^{2})$ chosen so that the conditional mean tracks $x'\beta_{t}+\lambda_{t}(p_{t}(x))$ exactly (the $O(t^{2})$ adjustment changes neither the score nor $\dot{\beta}$); this realizes the score $\ell$ and the mean shift $\Delta_{\ell}$.

For the pathwise derivative: under $P_{t}$ the observed conditional mean is $\mu_{t}(x)=\mu_{0}(x)+t\Delta(x)+o(t)$ and the propensity is $p_{t}$. The parameter $\beta(P_{t})$ is defined by the unique decomposition $\mu_{t}(x)=x'\beta(P_{t})+\lambda(p_{t}(x))$. Substituting, $x'(\beta(P_{t})-\beta_{0})+\lambda(p_{t}(x))-\lambda_{0}(p_{0}(x))=t\Delta(x)+o(t)$; expanding $\lambda_{0}(p_{0})=\lambda_{0}(p_{t})-t\lambda_{0}'(p_{0})h+o(t)$ and collecting first-order terms in the decomposition $x'\delta+g(p_{0})=\Delta-\lambda_{0}'h$, uniqueness of the decomposition (no nonzero $x'c$ is a.s.\ equal to a function of $p_{0}(x)$, by Assumption~\ref{assu:Dv*}(i) and \eqref{eq: weighted projection identity}) yields $\beta(P_{t})=\beta_{0}+t\delta+o(t)$.
\end{proof}

Two special cases are noteworthy. First, pure first-stage paths, those holding $f_{Y\mid X,D=1}$ completely fixed ($\ell=0$, $\Delta=0$), exist whenever $\lambda_{0}'$ is bounded away from zero on a neighbourhood of the support of $p_{0}$ (so that $\lambda_{t}$ is invertible there for small $t$): for any $(\delta,g)$, setting $p_{t}=\lambda_{t}^{-1}\bigl(\mu_{0}(x)-x'\beta_{t}\bigr)$ with $\beta_{t}=\beta_{0}+t\delta$, $\lambda_{t}=\lambda_{0}+tg$ stays in the model exactly, with $h=-\{x'\delta+g(p_{0})\}/\lambda_{0}'(p_{0})$ and $\dot{\beta}=\delta$; within this family, $\dot{\beta}=0$ if and only if $\delta=0$, i.e.\ $h=-g(p_{0})/\lambda_{0}'(p_{0})$ is a function of the index alone (the index-preserving reparametrizations $\lambda_{t}=\lambda_{0}\circ\varphi_{t}^{-1}$, $\varphi_{t}(p)=p+tH(p)$). Thus, when selection on unobservables is present, $\beta$ is locally estimable off selection-equation variation alone, the selection data are genuinely informative about $\beta$, which is why every influence function below must carry a nondegenerate $(D-p_{0})$-component. Second, for a pre-specified $h$ outside this $d$-dimensional-plus-$\{g\}$ family, the perturbation must be accompanied by an outcome-mean adjustment, and the general score always carries the coupling term $\lambda_{0}'h$ inside $\Delta$; it is this coupling that pins the relative weight of the two channels in every influence function.

\subsection{Characterization of influence functions and the bound}

\begin{lemma}[Influence characterization]\label{lem: eff influence}
Let $\psi\in L^{2}(P)^{d}$ satisfy $E[\psi]=0$ and $E[\psi s]=\dot{\beta}(s)$ for every score $s$ of Lemma~\ref{lem: eff tangent}. Then
\begin{equation}\label{eq: influence form}
\psi=\tilde{B}(X)\,\tilde{u},\qquad \tilde{u}=D(Y-\mu_{0})-p_{0}\lambda_{0}'(p_{0})(D-p_{0}),
\end{equation}
where $\tilde{B}:\mathcal{X}\to\mathbb{R}^{d}$ satisfies
\begin{equation}\label{eq: Btilde constraints}
E[p_{0}\,\tilde{B}(X)X']=I_{d},\qquad E[p_{0}\,\tilde{B}(X)\mid p_{0}(X)]=0.
\end{equation}
Conversely, every $\psi=\tilde{B}\tilde{u}$ with $\tilde{B}$ satisfying \eqref{eq: Btilde constraints} meets $E[\psi s]=\dot{\beta}(s)$ for all tangent scores. The class $\{\tilde{B}\tilde{u}:\eqref{eq: Btilde constraints}\}$ therefore exhausts the influence functions of regular asymptotically linear estimators of $\beta_{0}$.
\end{lemma}

\begin{proof}
Because the observed data are $(X,D,DY)$, every mean-zero $\psi\in L^{2}(P)$ decomposes \emph{exactly}, not merely up to a remainder, into the three mutually orthogonal blocks
\[
\psi=d(X)+C(X)(D-p_{0})+D\,q(Y,X),\qquad E[q\mid X,D=1]=0,
\]
with $d=E[\psi\mid X]$, $C=E[\psi\mid X,D{=}1]-\psi(X,0)$, and $q=\psi(X,1,Y)-E[\psi\mid X,D{=}1]$, where $\psi(X,0)$ and $\psi(X,1,Y)$ denote the values of $\psi$ on $\{D=0\}$ and $\{D=1\}$; the identity is verified by evaluating both sides on each event. It therefore suffices to pin down $(d,C,q)$ from the inner-product conditions.

\emph{Shape directions.} For any $\ell$ with $\Delta_{\ell}=0$ (mean-preserving) the score $D\ell$ is tangent with $\dot\beta=0$ (take $h=0$, $\delta=0$, $g=0$), so $E[\psi D\ell]=E[p_{0}\,q\,\ell]$ must vanish for all such $\ell$; hence $q$ is orthogonal, within $\{E[q\mid X,D{=}1]=0\}$, to the mean-preserving subspace, which forces $q(Y,X)=\tilde{B}(X)(Y-\mu_{0})$ for some $\tilde{B}\in L^{2}$ (the orthocomplement of $\{\ell:\Delta_{\ell}=0\}$ within the centred space is spanned conditionally by $Y-\mu_{0}$).

\emph{Marginal directions.} Paths in $f_{X}$ alone have $\dot\beta=0$ and score $u(X)$, so $E[\psi u]=0$ for all mean-zero $u$, i.e.\ $E[\psi\mid X]=0$; given the other two components are conditionally mean zero, $d\equiv0$.

\emph{Mean and first-stage directions.} Fix $(h,\delta,g)$ and choose $\ell$ with $\Delta_{\ell}=\Delta=x'\delta+g(p_{0})+\lambda_{0}'h$. Using $E[(Y-\mu_{0})\ell\mid X,D=1]=\Delta$, $E[q\ell D\mid X]=p_{0}\tilde{B}\Delta$, and the orthogonality $E[C(D-p_{0})D\ell\mid X]=C(1-p_{0})p_{0}E[\ell\mid X,D{=}1]=0$,
\[
E[\psi s]=E\bigl[p_{0}\tilde{B}(X)\{X'\delta+g(p_{0})+\lambda_{0}'(p_{0})h\}\bigr]+E[C(X)h].
\]
Setting this equal to $\dot\beta=\delta$ for all $(\delta,g,h)$: the $\delta$-terms give $E[p_{0}\tilde{B}X']=I_{d}$; the $g$-terms give $E[p_{0}\tilde{B}g(p_{0})]=0$ for all $g$, i.e.\ $E[p_{0}\tilde{B}\mid p_{0}]=0$; and the $h$-terms give $E[\{p_{0}\lambda_{0}'\tilde{B}+C\}h]=0$ for all $h\in L^{2}(F_{X})$, i.e.\ $C=-p_{0}\lambda_{0}'(p_{0})\tilde{B}$ pointwise. Collecting, $\psi=\tilde{B}(X)\{D(Y-\mu_{0})-p_{0}\lambda_{0}'(p_{0})(D-p_{0})\}=\tilde{B}\tilde{u}$ exactly. The converse direction re-runs the same computations. The exhaustiveness claim follows because the influence function of any regular asymptotically linear estimator is a mean-zero element of $L^{2}(P)^{d}$ satisfying the inner-product conditions, hence of the form \eqref{eq: influence form}.
\end{proof}

\begin{lemma}[Constrained weighted projection]\label{lem: eff GLS}
Let $\kappa^{2}$ be bounded and bounded away from zero, and let $V^{*-1}=E[(p_{0}^{2}/\kappa^{2})\tilde{v}\tilde{v}']$ be nonsingular. Over all $\tilde{B}\in L^{2}$ satisfying \eqref{eq: Btilde constraints}, the matrix $E[\tilde{B}\tilde{B}'\kappa^{2}]$ is minimized (in the positive semidefinite order) uniquely at
\[
\tilde{B}^{*}(X)=V^{*}\,\frac{p_{0}(X)}{\kappa^{2}(X)}\,\tilde{v}(X),\qquad\text{with minimum }E[\tilde{B}^{*}\tilde{B}^{*\prime}\kappa^{2}]=V^{*}.
\]
\end{lemma}

\begin{proof}
First, $\tilde{B}^{*}$ is feasible: $E[p_{0}\tilde{B}^{*}X']=V^{*}E[(p_{0}^{2}/\kappa^{2})\tilde{v}X']=V^{*}E[(p_{0}^{2}/\kappa^{2})\tilde{v}\tilde{v}']=I$ (adding back $\tilde{E}[X\mid p_{0}]'$ contributes zero because $E[(p_{0}^{2}/\kappa^{2})\tilde{v}\mid p_{0}]=0$ by the definition of $\tilde{E}$), and $E[p_{0}\tilde{B}^{*}\mid p_{0}]=V^{*}E[(p_{0}^{2}/\kappa^{2})\tilde{v}\mid p_{0}]=0$. Next, for any feasible $\tilde{B}$ write $\tilde{B}=\tilde{B}^{*}+\rho$; feasibility of both implies $E[p_{0}\rho X']=0$ and $E[p_{0}\rho\mid p_{0}]=0$. Then
\[
E[\rho\,\kappa^{2}\tilde{B}^{*\prime}]=E\bigl[\rho\,p_{0}\tilde{v}'\bigr]V^{*}=E\bigl[\rho\,p_{0}\{X-\tilde{E}[X\mid p_{0}]\}'\bigr]V^{*}=\Bigl(E[p_{0}\rho X']-E\bigl[E[p_{0}\rho\mid p_{0}]\,\tilde{E}[X\mid p_{0}]'\bigr]\Bigr)V^{*}=0,
\]
so $E[\tilde{B}\tilde{B}'\kappa^{2}]=E[\tilde{B}^{*}\tilde{B}^{*\prime}\kappa^{2}]+E[\rho\rho'\kappa^{2}]\succeq E[\tilde{B}^{*}\tilde{B}^{*\prime}\kappa^{2}]$, with equality iff $\rho=0$ a.s.\ ($\kappa^{2}\ge\underline{\kappa}^{2}>0$). Finally $E[\tilde{B}^{*}\tilde{B}^{*\prime}\kappa^{2}]=V^{*}E[(p_{0}^{2}/\kappa^{2})\tilde{v}\tilde{v}']V^{*}=V^{*}$.
\end{proof}

\begin{proof}[Proof of Proposition~\ref{prop: efficiency bound}]
By Lemma~\ref{lem: eff influence}, every influence function of a regular asymptotically linear estimator equals $\tilde{B}\tilde{u}$ for a feasible $\tilde{B}$, with variance $E[\tilde{B}\tilde{B}'\kappa^{2}]$ by \eqref{eq: composite variance}. Hence $\mathrm{Var}(\psi)\succeq V^{*}$ by Lemma~\ref{lem: eff GLS}, and the efficient influence function is $\psi^{*}=\tilde{B}^{*}\tilde{u}=V^{*}w_{0}(X)\tilde{v}(X)\tilde{u}$, which attains $\mathrm{Var}(\psi^{*})=V^{*}$. It remains to verify $\psi^{*}\in\bar{\mathcal{T}}$, so that $V^{*}$ is also the convolution-theorem bound for all regular estimators. Note first that a candidate $B\tilde{u}$ is a tangent score if and only if $(\kappa^{2}/p_{0})B$ is partially linear in $(X,p_{0})$: matching $B'c\,\tilde{u}=D\ell+a(D-p_{0})$ requires $\ell=B'c(Y-\mu_{0})$ and $a=-p_{0}\lambda_{0}'B'c$, i.e.\ $h=-p_{0}^{2}(1-p_{0})\lambda_{0}'B'c$, and the coupling \eqref{eq: mean shift} then demands
\[
x'\delta+g(p_{0})=\Delta_{\ell}-\lambda_{0}'h=B'c\,\sigma_{0}^{2}+p_{0}^{2}(1-p_{0})\lambda_{0}'^{2}\,B'c=\frac{\kappa^{2}}{p_{0}}\,B'c.
\]
For $\tilde{B}^{*}=V^{*}(p_{0}/\kappa^{2})\tilde{v}$ this holds with $\delta=V^{*}c$ and $g=-\tilde{E}[X\mid p_{0}]'V^{*}c$, since $(\kappa^{2}/p_{0})\tilde{B}^{*\prime}c=\tilde{v}'V^{*}c=x'V^{*}c-\tilde{E}[X\mid p_{0}]'V^{*}c$; hence $\psi^{*}\in\bar{\mathcal{T}}$. (For a general feasible $\tilde{B}$ the partial-linearity requirement fails, which is why not every regular estimator is efficient; this membership criterion reappears as the equality condition in part (a).) This proves the bound.

\emph{Part (a).} The two-step estimator of Proposition~\ref{prop: asymptotics np} has influence $A^{-1}(D v\varepsilon-\chi)=A^{-1}v\,\tilde{u}$, i.e.\ $\tilde{B}_{2\mathrm{s}}=A^{-1}v$ with $v=X-E[X\mid p_{0}]$; it is feasible for \eqref{eq: Btilde constraints} since $E[p_{0}vX']=E[p_{0}vv']=A$ and $E[p_{0}v\mid p_{0}]=p_{0}(E[X\mid p_{0}]-E[X\mid p_{0}])=0$. Hence $V_{\mathrm{NP}}=E[\tilde{B}_{2\mathrm{s}}\tilde{B}_{2\mathrm{s}}'\kappa^{2}]=A^{-1}E[vv'\kappa^{2}]A^{-1}\succeq V^{*}$, which also re-derives $V_{\mathrm{NP}}=A^{-1}(\Omega+\Omega_{\chi})A^{-1}$ via \eqref{eq: composite variance}. By the uniqueness in Lemma~\ref{lem: eff GLS}, equality holds iff $A^{-1}v=\tilde{B}^{*}$ a.s., which (multiplying by $\kappa^{2}/p_{0}$ and applying $A$) is equivalent to the existence of a constant matrix $L$ and a function $\mu(\cdot)$ with
\begin{equation}\label{eq: equality FOC}
\frac{\kappa^{2}(X)}{p_{0}(X)}\,v(X)=LX+\mu(p_{0}(X))\quad\text{a.s.,}
\end{equation}
i.e.\ iff the two-step influence $A^{-1}v\,\tilde{u}$ itself lies in the tangent closure (the membership criterion established in the first paragraph of this proof). For the equivalence, the forward direction takes $L=AV^{*}$ and $\mu=-AV^{*}\tilde{E}[X\mid p_{0}]$; conversely, given \eqref{eq: equality FOC}, $A^{-1}v=(p_{0}/\kappa^{2})\{\tilde{L}X+\tilde{\mu}(p_{0})\}$ with $\tilde{L}=A^{-1}L$, and the two feasibility constraints pin $(\tilde{L},\tilde{\mu})$: $E[p_{0}A^{-1}v\mid p_{0}]=0$ forces $\tilde{\mu}=-\tilde{L}\tilde{E}[X\mid p_{0}]$, and then $E[p_{0}A^{-1}vX']=I$ gives $\tilde{L}E[(p_{0}^{2}/\kappa^{2})\tilde{v}\tilde v']=I$, i.e.\ $\tilde{L}=V^{*}$, so $A^{-1}v=\tilde{B}^{*}$. A transparent sufficient condition is that $\kappa^{2}/p_{0}$ be a.s.\ constant, say $=c$: then $p_{0}^{2}/\kappa^{2}=p_{0}/c$, so by \eqref{eq: weighted projection identity} $\tilde{E}[X\mid p_{0}]=E[X\mid p_{0}]$, $\tilde{v}=v$, and $\tilde{B}^{*}=V^{*}c^{-1}v$ with $V^{*}=cA^{-1}$, giving $\tilde{B}^{*}=A^{-1}v$. Constancy is not necessary in complete generality; \eqref{eq: equality FOC} can hold with nonconstant $\kappa^{2}/p_{0}$ in designs whose within-level-set support of $X$ is disconnected in a special way, but it is necessary in the leading case: if $\sigma_{0}^{2}(X)$ is a function of $p_{0}(X)$ alone (as under index-structured errors), so that $\kappa^{2}/p_{0}=\sigma_{0}^{2}(p_{0})+p_{0}^{2}(1-p_{0})\lambda_{0}'(p_{0})^{2}=:c(p_{0})$, then equality reads $\{c(p_{0}(X))A^{-1}-V^{*}\}v(X)=0$ a.s., so $c(p_{0})A^{-1}=V^{*}$, and hence $c(\cdot)$ is constant, on the set of $p_{0}$-values at which $\mathrm{Var}(v\mid p_{0})$ is nondegenerate in every direction. In particular, under homoskedasticity ($\sigma_{0}^{2}$ constant), $c(p)=\sigma^{2}+p^{2}(1-p)\lambda_{0}'(p)^{2}$ is nonconstant, and the unweighted two-step estimator remains inefficient wherever $\mathrm{Var}(X\mid p_{0})$ is nondegenerate, for every $\lambda_{0}$ with $\lambda_{0}'\neq0$ \emph{except} the one-parameter family $\lambda_{0}'(p)^{2}=\text{const}/\{p^{2}(1-p)\}$ on the support of $p_{0}$, the single smooth shape for which the composite variance is proportional to $p_{0}$.

\emph{Part (b).} When $p_{0}$ is known, the tangent set loses the $\{a(X)(D-p_{0})\}$ directions and the $\lambda_{0}'h$ term in \eqref{eq: mean shift}; repeating the argument of Lemma~\ref{lem: eff influence} yields influence functions $\psi=b(X)D(Y-\mu_{0})+C(X)(D-p_{0})$ with $E[p_{0}bX']=I$, $E[p_{0}b\mid p_{0}]=0$, and $C$ now unconstrained (the $(D-p_{0})$-directions are orthogonal to the reduced tangent set), so the $C$-component only adds variance and is zero at the optimum; the effective conditional variance weight is $E[D(Y-\mu_{0})^{2}\mid X]=p_{0}\sigma_{0}^{2}$, and Lemma~\ref{lem: eff GLS} with $\kappa^{2}$ replaced by $p_{0}\sigma_{0}^{2}$ gives the known-$p_{0}$ bound $V^{\circ}=(E[(p_{0}/\sigma_{0}^{2})v^{\circ}v^{\circ\prime}])^{-1}$ with the $(p_{0}/\sigma_{0}^{2})$-tilted residual $v^{\circ}$. For any direction $c\in\mathbb{R}^{d}$, writing each inverse-bound quadratic form as a minimum over tilts,
\[
c'V^{*-1}c=\min_{g}E\bigl[\tfrac{p_{0}^{2}}{\kappa^{2}}\{c'X-g(p_{0})\}^{2}\bigr]\le\min_{g}E\bigl[\tfrac{p_{0}}{\sigma_{0}^{2}}\{c'X-g(p_{0})\}^{2}\bigr]=c'V^{\circ-1}c,
\]
because $p_{0}^{2}/\kappa^{2}\le p_{0}/\sigma_{0}^{2}$ pointwise (equivalent to $\kappa^{2}\ge p_{0}\sigma_{0}^{2}$, immediate from \eqref{eq: composite variance}). Hence $V^{*}\succeq V^{\circ}$. If $\lambda_{0}'(p_{0}(X))=0$ a.s.\ then $\kappa^{2}=p_{0}\sigma_{0}^{2}$ and $V^{*}=V^{\circ}$. Conversely, evaluating the first minimand at the $(p_{0}/\sigma_{0}^{2})$-optimal tilt $g^{\circ}_{c}$, $c'V^{*-1}c\le E[(p_{0}^{2}/\kappa^{2})\{c'X-g^{\circ}_{c}(p_{0})\}^{2}]<E[(p_{0}/\sigma_{0}^{2})\{c'X-g^{\circ}_{c}(p_{0})\}^{2}]=c'V^{\circ-1}c$ whenever the weight gap $p_{0}/\sigma_{0}^{2}-p_{0}^{2}/\kappa^{2}$, which is positive exactly on $\{\lambda_{0}'(p_{0})\neq0\}$ (under overlap), meets a nonvanishing residual $c'X-g^{\circ}_{c}(p_{0})$ there; thus $V^{*}\succ V^{\circ}$ in the full positive-definite order whenever $\mathrm{Var}(X\mid p_{0})$ has full rank on a positive-measure subset of $\{\lambda_{0}'(p_{0})\neq0\}$, and in general $V^{*}=V^{\circ}$ requires the tilted residuals to vanish a.s.\ wherever $\lambda_{0}'\neq0$.
\end{proof}

\subsection{The efficient weighted two-step estimator}\label{subsec: eff estimator proof}

Recall the estimator of Proposition~\ref{prop: efficient estimator}: the sample is split into two folds $I_{1},I_{2}$ of sizes $n_{1},n_{2}$ ($n_{k}/n\to1/2$); the first stage $\hat{p}$ is computed on the full sample as in Assumption~\ref{assu:np-firststage}; the weight function is estimated on each fold and used on the other, $\hat{w}^{(k)}:=T_{[\underline{w},\bar{w}]}\bigl(1/\{\hat{\sigma}^{2}_{(k)}(\cdot)+\hat{p}^{2}(1-\hat{p})\hat{\lambda}'_{(k)}(\hat{p})^{2}\}\bigr)$ where $T_{[\underline{w},\bar{w}]}$ truncates to a fixed interval $0<\underline{w}\le\bar{w}<\infty$ containing the range of $w_{0}$ (Assumption~\ref{assu:eff}(ii)); on fold $k$ the weighted second stage regresses $Y_i$ on $X_i$ and $b_K(\hat{p}_i)$ by weighted least squares over $\{i\in I_{k}:D_i=1\}$ with weights $\hat{w}_i=\hat{w}^{(-k)}(X_i)$, producing $\hat{\beta}^{(k)}$; and $\hat{\beta}^{*}:=\sum_{k}(n_{k}/n)\hat{\beta}^{(k)}$.

For a fixed weight function $w$ with $\underline{w}\le w\le\bar{w}$ define the $w$-weighted population objects: $m^{w}_{X,p}(t):=E[w X\mid p(X)=t,D=1]/E[w\mid p(X)=t,D=1]$, $v^{w}:=X-m^{w}_{X,p_{0}}(p_{0})$, $A_{w}:=E[Dwv^{w}v^{w\prime}]$, $\phi^{w}:=Dwv^{w}\lambda_{0}'(p_{0})$, and $\alpha_{w}:=E[\phi^{w}\mid X]=p_{0}w\lambda_{0}'(p_{0})v^{w}$. Note $A_{w}=E[Dwv^{w}X']$ by the weighted orthogonality $E[Dwv^{w}\mid p_{0}]=0$, and that at $w=w_{0}$, by \eqref{eq: weighted projection identity}, $m^{w_{0}}_{X,p_{0}}=\tilde{E}[X\mid p_{0}]$, $v^{w_{0}}=\tilde{v}$, and $A_{w_{0}}=E[p_{0}w_{0}\tilde{v}\tilde{v}']=V^{*-1}$.

\begin{lemma}[Fixed-weight expansion]\label{lem: eff fixed weight}
Fix a weight function $w$ with $\underline{w}\le w\le\bar{w}$, measurable in $X$, such that (i) Assumptions~\ref{assu:np-firststage}(iv), \ref{assu:diff}(i)--(ii), and~\ref{assu:secondstage} hold with the $w$-weighted conditional means and spline projections in place of the unweighted ones, and (ii) the sieve-approximation conditions of Assumption~\ref{assu:np-firststage}(ii)--(iii) hold with the weighted representer $\alpha_{w}=p_{0}w\lambda_{0}'(p_{0})v^{w}$ in place of $\alpha_{0}$, i.e.\ $\Delta_{\alpha_{w}}(L)\to0$ and $\sqrt{n}\,\Delta_{p}(L)\,\Delta_{\alpha_{w}}(L)\to0$ for the $L^{2}$-projection error $\Delta_{\alpha_{w}}$ of $\alpha_{w}$ onto $\mathrm{span}(\psi_{L})$. (Assumption~\ref{assu:eff}(iii) supplies (i)--(ii) at $w=w_{0}$.) Then, under the remaining conditions of Proposition~\ref{prop: np primitives}, the $w$-weighted second-stage estimator $\hat{\beta}_{w}$ computed on a subsample $I\subset\{1,\dots,n\}$ with $|I|/n\to\tau\in(0,1]$ (with full-sample $\hat{p}$) satisfies
\begin{equation}\label{eq: fixed weight expansion}
\sqrt{|I|}\bigl(\hat{\beta}_{w}-\beta_{0}\bigr)=A_{w}^{-1}\Bigl[\frac{1}{\sqrt{|I|}}\sum_{i\in I}D_iw_iv^{w}_i\varepsilon_i-\sqrt{\tfrac{|I|}{n}}\,\frac{1}{\sqrt{n}}\sum_{j=1}^{n}\alpha_{w}(X_j)(D_j-p_{0j})\Bigr]+o_{p}(1);
\end{equation}
the noise channel runs over the subsample while the first-stage score channel runs over the full sample, because $\hat{p}$ solves the full-sample normal equations.
\end{lemma}

\begin{proof}
The proof re-runs Lemmas~\ref{lem: np first stage rates}--\ref{lem: np riesz} and the assembly of Appendix~\ref{app: np proof} with the following substitutions, none of which alters any rate argument because $w$ is bounded above and below: the second-stage projection $\widehat{\Pi}_{K}$ becomes the $w$-weighted spline projection $\widehat{\Pi}^{w}_{K}$ on $\{b_K(\hat{p}_j)\}_{j\in I,D_j=1}$; the Frisch--Waugh--Lovell identity holds verbatim with weighted residuals $\hat{v}^{w}_i$ and $\hat{A}_{w}:=|I|^{-1}\sum_{i\in I}D_iw_i\hat{v}^{w}_i\hat{v}^{w\prime}_i$; in Lemma~\ref{lem: np second stage} \S1, the population target of $\widehat{\Pi}^{w}_{K}X$ is $m^{w}_{X,\hat{p}}$, the decomposition (I)--(III) applies with $m^{w}$ in place of $m$, term (I) is controlled by the $w$-weighted conditional-mean stability (hypothesis (i); the co-area primitive stated after Assumption~\ref{assu:np-firststage} delivers it because $m^{w}$ is a ratio of two stable conditional means with denominator $E[w\mid\cdot,D=1]\ge\underline{w}$), term (II) by the $w$-weighted H\"older smoothness of $m^{w}_{X,p}$ (the analogue of Assumption~\ref{assu:diff}(i) in hypothesis (i)), and term (III) by the uniform concentration of Lemma~\ref{lem: np uniform} with the bounded weight absorbed into the function class (envelope and entropy are unchanged, since $\underline{w}\le w\le\bar{w}$ is a fixed function of the conditioning data), together with the uniform generated-index Gram conditioning, whose $w$-weighted version follows from Lemma~\ref{lem: np uniform}(b) and the sandwich $\underline{w}\,\widehat{G}_{K}(\hat{p})\preceq\widehat{G}^{w}_{K}(\hat{p})\preceq\bar{w}\,\widehat{G}_{K}(\hat{p})$; \S2 (oracle term) and \S3 (control term) hold verbatim with $\phi^{w}$ in place of $\phi$ and the bounded factor $w$ carried through every display; Lemma~\ref{lem: np quadratic} holds with $\sup_{i}\lVert\hat{v}^{w}_i\rVert=O_{p}(1)$ (same Lebesgue-constant argument, weighted Gram conditioning as above); and Lemma~\ref{lem: np riesz} applies to $\phi^{w}$: its displayed hypotheses hold since $E[\phi^{w}\mid X]=\alpha_{w}\in L^{2}$ and $\sup_{x}E[\lVert\phi^{w}\rVert^{2}\mid X=x]<\infty$ under Assumptions~\ref{assu:IID}(ii) and~\ref{assu:hfunc}(i) and $w\le\bar{w}$, and the sieve-approximation inputs it consumes, $\Delta_{\alpha_{w}}(L)\to0$ in its term $(\mathrm{I_a})$ and $\sqrt{n}\,\Delta_{p}\Delta_{\alpha_{w}}\to0$ in its double-robustness cancellation (II)--(III), are exactly hypothesis (ii). The only modification to Lemma~\ref{lem: np riesz} is that the drift sum runs over $I$ while $\hat{p}$ solves the full-sample normal equations: writing $\hat{\Gamma}^{I}_{w}:=|I|^{-1}\sum_{i\in I}\phi^{w}_i\psi_{L}(X_i)'$ (with $E[\hat{\Gamma}^{I}_{w}]=\Gamma_{w}$ and the same variance bounds as in term $\mathrm{(I_b)}$ there), the score channel becomes $\sqrt{|I|}\,\hat{\Gamma}^{I}_{w}\hat{G}^{-1}S_{n}$ with the \emph{full-sample} $S_{n}$, so the leading term is $\sqrt{|I|}\,\Gamma_{w}G^{-1}S_{n}=\sqrt{|I|/n}\cdot n^{-1/2}\sum_{j=1}^{n}\alpha_{w,L}(X_j)(D_j-p_{0j})$, and the bias channels (II)--(III) are unchanged. Replacing $\alpha_{w,L}$ by $\alpha_{w}$ costs $o_{p}(1)$ exactly as in term $\mathrm{(I_a)}$. Collecting terms as in Section~\ref{subsec: np assembly} yields \eqref{eq: fixed weight expansion}.
\end{proof}

\begin{lemma}[Weight replacement]\label{lem: eff weight replacement}
Let $\hat{w}:=\hat{w}^{(-k)}$ be the cross-fitted weight of Proposition~\ref{prop: efficient estimator}: $\hat{w}$ is measurable with respect to $\mathcal{G}_{k}:=\sigma\bigl(\{(X_j,D_j)\}_{j=1}^{n},\{Y_j\}_{j\in I_{-k}}\bigr)$, $\underline{w}\le\hat{w}\le\bar{w}$, and $\lVert\hat{w}-w_{0}\rVert_{\infty}=o_{p}(n^{-1/4})$ (Assumption~\ref{assu:eff}(iv)). Then $\sqrt{n_{k}}\bigl(\hat{\beta}^{(k)}-\beta_{0}\bigr)$ admits the expansion \eqref{eq: fixed weight expansion} at the \emph{fixed} weight $w_{0}$ with $I=I_{k}$.
\end{lemma}

\begin{proof}
Write $\widehat{\Pi}^{\hat{w}}_{K}$ and $\widehat{\Pi}^{w_{0}}_{K}$ for the $\hat{w}$- and $w_{0}$-weighted spline projections on $\{b_K(\hat{p}_j)\}_{j\in I_{k},D_j=1}$, and $\hat{v}^{\hat{w}}_i,\hat{v}^{w_{0}}_i$ for the corresponding residuals of $X_i$. Because neither projection involves the outcomes, both residuals are $\mathcal{G}_{k}$-measurable. Three preliminary perturbation bounds. First, by the weighted normal equations and $\underline{w}\le\hat{w},w_{0}\le\bar{w}$, the spline-coefficient difference obeys $\lVert\hat{c}_{\hat{w}}-\hat{c}_{w_{0}}\rVert\le C\underline{w}^{-2}\lVert\hat{w}-w_{0}\rVert_{\infty}\,O_{p}(1)$ on the event that the $w_{0}$-weighted generated-index Gram is well conditioned (Assumption~\ref{assu:secondstage}(v) and the sandwich $\underline{w}\widehat{G}_{K}(\hat{p})\preceq\widehat{G}^{w}_{K}(\hat{p})\preceq\bar{w}\widehat{G}_{K}(\hat{p})$), whence
\begin{align}\label{eq: projection perturbation}
\bigl(n_{k}^{-1}\textstyle\sum_{i\in I_{k}}D_i\lVert\hat{v}^{\hat{w}}_i-\hat{v}^{w_{0}}_i\rVert^{2}\bigr)^{1/2}\le C\lVert\hat{w}-w_{0}\rVert_{\infty}&=o_{p}(n^{-1/4}),\\
\sup_{i\in I_{k}}\lVert\hat{v}^{\hat{w}}_i-\hat{v}^{w_{0}}_i\rVert \le\xi(K)\,O_{p}\bigl(\lVert\hat{w}-w_{0}\rVert_{\infty}\bigr)&=o_{p}(1),
\end{align}
the last step using $\xi(K)=O(\sqrt{K})$, $K=o(\sqrt{n})$, so $\xi(K)n^{-1/4}\to0$; in particular $\sup_{i}\lVert\hat{v}^{\hat{w}}_i\rVert=O_{p}(1)$. Second, $\lVert\hat{w}\hat{v}^{\hat{w}}-w_{0}\hat{v}^{w_{0}}\rVert_{\mathrm{emp}}\le\bar{w}\lVert\hat{v}^{\hat{w}}-\hat{v}^{w_{0}}\rVert_{\mathrm{emp}}+\lVert\hat{w}-w_{0}\rVert_{\infty}O_{p}(1)=o_{p}(n^{-1/4})$. Third, $\hat{A}_{\hat{w}}-\hat{A}_{w_{0}}=o_{p}(1)$ by the same bounds.

Now compare the two weighted-least-squares moments channel by channel, exactly as in the decomposition of Lemma~\ref{lem: np second stage}.

\emph{(a) Noise channel.} $n_{k}^{-1/2}\sum_{i\in I_{k}}D_i\{\hat{w}_i\hat{v}^{\hat{w}}_i-w_{0,i}\hat{v}^{w_{0}}_i\}\varepsilon_i$ is, conditionally on $\mathcal{G}_{k}$, a sum of independent mean-zero terms ($\varepsilon_i\perp\mathcal{G}_{k}$ given $(X_i,D_i)$ for $i\in I_{k}$, since $\hat{w}$ uses only fold $I_{-k}$ outcomes) with conditional variance bounded by $\bar{\sigma}^{2}\,n_{k}^{-1}\sum_{i}D_i\lVert\hat{w}_i\hat{v}^{\hat{w}}_i-w_{0,i}\hat{v}^{w_{0}}_i\rVert^{2}=o_{p}(1)$; conditional Chebyshev gives $o_{p}(1)$.

\emph{(b) Linear drift channel, pre-Riesz replacement.} The first-stage drift enters as $n_{k}^{-1/2}\sum_{i\in I_{k}}D_i\{\hat{w}_i\hat{v}^{\hat{w}}_i-w_{0,i}\hat{v}^{w_{0}}_i\}\lambda_{0}'(p_{0i})\delta_i$ with $\delta_i=\hat{p}_i-p_{0i}$; by Cauchy--Schwarz, bounded $\lambda_{0}'$, \eqref{eq: projection perturbation}, and the empirical first-stage rate $n^{-1}\sum_i\delta_i^{2}=o_{p}(n^{-1/2})$ (Lemma~\ref{lem: np first stage rates}(b)), it is bounded by $\sqrt{n_{k}}\cdot o_{p}(n^{-1/4})\cdot o_{p}(n^{-1/4})=o_{p}(1)$. The weight replacement therefore happens \emph{before} any Riesz conversion, and the drift is subsequently converted at the fixed function $w_{0}$ by Lemma~\ref{lem: eff fixed weight}; no expansion at the random weight $\hat{w}$ is ever required, which is what removes any own-observation bias from $\hat{w}$.

\emph{(c) Spline-bias and empirical-projection channels.} The $\hat{w}$-weighted spline-bias term is bounded by $\sqrt{n}\,\bar{w}\,O_{p}(1)\,O(K^{-m})=o_{p}(1)$ under Assumption~\ref{assu:secondstage}(iii), and the $[\Pi_{K}-\widehat{\Pi}^{\hat{w}}_{K}]$ term by the projection contraction in the $\hat{w}$-weighted empirical inner product, $\lVert(I-\widehat{\Pi}^{\hat{w}}_{K})r\rVert_{\mathrm{emp},\hat{w}}\le\lVert r\rVert_{\mathrm{emp},\hat{w}}\le\bar{w}^{1/2}\lVert r\rVert_{\mathrm{emp}}$ applied to the $O(K^{-m})$-small spline residual, giving $o_{p}(1)$ as in Lemma~\ref{lem: np second stage}\,\S3; neither requires any structure on $\hat{w}$ beyond its bounds.

\emph{(d) Quadratic remainder.} The $\hat{w}$-weighted quadratic remainder is bounded, as in Lemma~\ref{lem: np quadratic}, by $\tfrac{\bar{L}_{2}}{2}\bar{w}\,(\sup_{i}\lVert\hat{v}^{\hat{w}}_i\rVert)\sqrt{n}\,n^{-1}\sum_i\delta_i^{2}=o_{p}(1)$, using $\sup_{i}\lVert\hat{v}^{\hat{w}}_i\rVert=O_{p}(1)$ from \eqref{eq: projection perturbation}.

Combining (a)--(d) with $\hat{A}_{\hat{w}}-\hat{A}_{w_{0}}=o_{p}(1)$ and Slutsky, $\sqrt{n_{k}}(\hat{\beta}^{(k)}-\beta_{0})$ has the same expansion as the $w_{0}$-weighted estimator on $I_{k}$, which is \eqref{eq: fixed weight expansion} at $w=w_{0}$ by Lemma~\ref{lem: eff fixed weight}. The same argument with $w_{0}$ replaced by any fixed $w^{\dag}\in[\underline{w},\bar{w}]$ satisfying the hypotheses of Lemma~\ref{lem: eff fixed weight} and $\lVert\hat{w}-w^{\dag}\rVert_{\infty}=o_{p}(n^{-1/4})$ yields the expansion at $w^{\dag}$, which underlies the misspecification-robustness claim in Remark~\ref{rem: efficient weights}.
\end{proof}

\begin{proof}[Proof of Proposition~\ref{prop: efficient estimator}]
By Lemma~\ref{lem: eff weight replacement} and then Lemma~\ref{lem: eff fixed weight} at $w_{0}$ (using $v^{w_{0}}=\tilde{v}$, $\alpha_{w_{0}}=p_{0}w_{0}\lambda_{0}'\tilde{v}$, $A_{w_{0}}=V^{*-1}$),
\[
\sqrt{n_{k}}\bigl(\hat{\beta}^{(k)}-\beta_{0}\bigr)=V^{*}\Bigl[\frac{1}{\sqrt{n_{k}}}\sum_{i\in I_{k}}D_iw_{0,i}\tilde{v}_i\varepsilon_i-\sqrt{\tfrac{n_{k}}{n}}\,\frac{1}{\sqrt{n}}\sum_{j=1}^{n}\alpha_{w_{0}}(X_j)(D_j-p_{0j})\Bigr]+o_{p}(1).
\]
Aggregating, $\sqrt{n}(\hat{\beta}^{*}-\beta_{0})=\sum_{k}\sqrt{n_{k}/n}\,\sqrt{n_{k}}(\hat{\beta}^{(k)}-\beta_{0})$: the noise channels stack to $n^{-1/2}\sum_{i=1}^{n}D_iw_{0,i}\tilde{v}_i\varepsilon_i$, and the score channels carry total weight $\sum_{k}n_{k}/n=1$, giving
\[
\sqrt{n}\bigl(\hat{\beta}^{*}-\beta_{0}\bigr)=V^{*}\frac{1}{\sqrt{n}}\sum_{i=1}^{n}\bigl\{D_iw_{0,i}\tilde{v}_i\varepsilon_i-\alpha_{w_{0}}(X_i)(D_i-p_{0i})\bigr\}+o_{p}(1)=V^{*}\frac{1}{\sqrt{n}}\sum_{i=1}^{n}w_{0}(X_i)\tilde{v}_i\,\tilde{u}_i+o_{p}(1),
\]
the last equality by $Dw_{0}\tilde{v}\varepsilon-p_{0}w_{0}\lambda_{0}'\tilde{v}(D-p_{0})=w_{0}\tilde{v}\{D\varepsilon-p_{0}\lambda_{0}'(D-p_{0})\}=w_{0}\tilde{v}\tilde{u}$. The summands are i.i.d., mean zero ($E[\tilde{u}\mid X]=0$), with variance $E[w_{0}^{2}\tilde{v}\tilde{v}'\kappa^{2}]=E[(p_{0}^{2}/\kappa^{2})\tilde{v}\tilde{v}']=V^{*-1}$ by \eqref{eq: composite variance} and $w_{0}=p_{0}/\kappa^{2}$. By Lindeberg--L\'evy and Slutsky, $\sqrt{n}(\hat{\beta}^{*}-\beta_{0})\to_{d}N(0,V^{*}V^{*-1}V^{*})=N(0,V^{*})$, and the influence function is $V^{*}w_{0}\tilde{v}\tilde{u}=\psi^{*}$, the efficient influence function of Proposition~\ref{prop: efficiency bound}. Consistency of the plug-in variance estimator $\hat{V}^{*}$ of Remark~\ref{rem: efficient weights} is established separately in Lemma~\ref{lem: eff variance consistency} below.
\end{proof}

\begin{lemma}[Consistency of $\hat{V}^{*}$]\label{lem: eff variance consistency}
Suppose, in addition to the conditions of Proposition~\ref{prop: efficient estimator}, that $\sup_{x\in\mathcal X}E[|\varepsilon_i|^4\mid X_i=x,D_i=1]<\infty$, the second-stage spline dimension satisfies the derivative rate $K^{4}\log K/n\rightarrow0$, and $m>2$ in Assumption~\ref{assu:hfunc}; the joint window $n^{1/(2m)}\ll K\ll n^{1/4}$ implied by this rate and the maintained undersmoothing $\sqrt{n}K^{-m}\rightarrow0$ is nonempty precisely when $m>2$, so the composite variance estimator requires a smoother $\lambda_{0}$ than the point estimator, because differentiating the fitted spline tightens the admissible-$K$ cap from $\sqrt{n}$ to $n^{1/4}$. Then the composite plug-in $\hat{V}^{*}$ of Remark~\ref{rem: efficient weights} satisfies $\hat{V}^{*}\rightarrow_{p}V^{*}$.
\end{lemma}

\begin{proof}
Write $\hat{\omega}^{*}_{i}=\hat{w}_i\hat{v}^{\hat{w}}_i\{D_i\hat{\varepsilon}_i-\hat{p}_i\hat{\lambda}'(\hat{p}_i)(D_i-\hat{p}_i)\}$ for the estimated composite summand and $\omega^{*}_{i}=w_{0,i}\tilde{v}_i\tilde{u}_i$ for its population counterpart. Since $\hat{V}^{*}=\hat{A}_{\hat{w}}^{-1}\bigl(n^{-1}\sum_i\hat{\omega}^{*}_{i}\hat{\omega}^{*\prime}_{i}\bigr)\hat{A}_{\hat{w}}^{-1}$ and $\hat{A}_{\hat{w}}\rightarrow_{p}A_{w_{0}}=V^{*-1}$ (Lemma~\ref{lem: eff weight replacement} with the fixed-weight machinery), it suffices to show $n^{-1}\sum_i\lVert\hat{\omega}^{*}_{i}-\omega^{*}_{i}\rVert^{2}=o_{p}(1)$; the law of large numbers for $n^{-1}\sum_i\omega^{*}_{i}\omega^{*\prime}_{i}\rightarrow_{p}E[\omega^{*}\omega^{*\prime}]=V^{*-1}$ (finite under the fourth-moment condition of the lemma and Assumption~\ref{assu:eff}(i)--(ii)) and the Cauchy--Schwarz inequality then complete the argument. The squared gap decomposes into five nuisance errors: (i) $\hat{w}-w_{0}$, $o_{p}(n^{-1/4})$ in sup norm by Assumption~\ref{assu:eff}(iv); (ii) $\hat{v}^{\hat{w}}_i-\tilde{v}_i$, $o_{p}(1)$ in empirical $L^{2}$ norm by \eqref{eq: projection perturbation} and the fixed-weight analogue of Lemma~\ref{lem: np second stage} \S1, routed through Lemma~\ref{lem: np uniform}; (iii) $\hat{\varepsilon}_i-\varepsilon_i=X_i'(\beta_{0}-\hat{\beta})+\lambda_{0}(p_{0i})-\hat{\lambda}(\hat{p}_i)$, $o_{p}(1)$ in empirical $L^{2}$ norm by consistency of $\hat{\beta}$, the spline rates of Assumption~\ref{assu:secondstage}, and Lemma~\ref{lem: np first stage rates}; (iv) $\hat{p}_i-p_{0i}$, $o_{p}(1)$ in sup norm; and (v) the estimated derivative. Only (v) requires an argument beyond those already imposed, because $\hat{\lambda}$ is fitted on the same-sample generated index $\hat{p}$. Two points structure the argument. First, only \emph{empirical} $L^{2}$ consistency is needed: the five errors enter the average $n^{-1}\sum_i\lVert\hat{\omega}^{*}_{i}-\omega^{*}_{i}\rVert^{2}$ multiplied by uniformly bounded cofactors, so by the Cauchy--Schwarz inequality it suffices that $\lVert\hat{\lambda}'(\hat{p})-\lambda_{0}'(p_{0})\rVert_{n,2,D}:=\bigl(n^{-1}\sum_iD_i[\hat{\lambda}'(\hat{p}_i)-\lambda_{0}'(p_{0i})]^{2}\bigr)^{1/2}=o_{p}(1)$; no sup-norm result is required. Second, a pointwise conditioning argument at $\hat{p}$ is unavailable for the same reason as in Lemma~\ref{lem: np second stage}: conditioning on the first stage makes the outcome shock mean zero but leaves the design discrepancy $q^{w_{0}}_{\hat{p}}(X)=\mu_{0}(X)-X'\beta_{w_{0}}(\hat{p})-\lambda_{w_{0},\hat{p}}(\hat{p}(X))$, which satisfies $E[Dw_{0}q^{w_{0}}_{p}\mid p(X)]=0$ only at each fixed $p$, not conditionally on $X$; its empirical projection onto the joint design is controlled uniformly over $\mathcal{P}_{n}$ by Lemma~\ref{lem: np uniform}(c) instead. Throughout, the estimator is exactly as defined: the full-sample series first stage, the joint least-squares second stage of $Y$ on $(X,b_K(\hat{p}))$, and the first-order Riesz-corrected variance; the Frisch--Waugh--Lovell representation remains a proof device only.

\emph{Estimation step (joint-series level rate and inverse inequality).} Lemma~\ref{lem: np uniform}(c), applied within fold $k$ at the fixed bounded weight $\omega=w_{0}$ (its weighted conditional means and partially linear component lie in a fixed $\Lambda^{m}$ ball by Assumption~\ref{assu:eff}(iii)), gives $\lVert\hat{c}_{w_{0}}-c_{w_{0},\hat{p}}\rVert=O_{p}(\rho_{n}+K^{-m})$ with $\rho_{n}=\sqrt{(K+L)\log n/n}$. The estimated weight is transferred by a direct normal-equation perturbation of the \emph{joint} outcome regression, not by \eqref{eq: projection perturbation} (which covers the $X$-on-spline projection only). Write $\hat{\theta}_{\hat{w}}-\hat{\theta}_{w_{0}}=\widehat{Q}_{\hat{w}}(\hat{p})^{-1}s$ with $s:=n_{k}^{-1}\sum_{i\in I_{k}}D_i(\hat{w}_i-w_{0,i})Z_i(\hat{p})\,e_{w_{0},i}$ and $e_{w_{0},i}:=Y_i-Z_i(\hat{p})'\hat{\theta}_{w_{0}}$, and dualize the score in the $\widehat{Q}_{\hat{w}}^{-1/2}$ metric: $\lVert\widehat{Q}_{\hat{w}}^{-1}s\rVert\leq\lVert\widehat{Q}_{\hat{w}}^{-1/2}\rVert_{op}\sup_{\lVert a\rVert=1}\lvert n_{k}^{-1}\sum_iD_i(\hat{w}_i-w_{0,i})(a'\widehat{Q}_{\hat{w}}^{-1/2}Z_i)e_{w_{0},i}\rvert\leq O_{p}(1)\,\lVert\hat{w}-w_{0}\rVert_{\infty}\sup_{\lVert a\rVert=1}\bigl(n_{k}^{-1}\sum_iD_i(a'\widehat{Q}_{\hat{w}}^{-1/2}Z_i)^{2}\bigr)^{1/2}\bigl(n_{k}^{-1}\sum_iD_ie_{w_{0},i}^{2}\bigr)^{1/2}$ by the Cauchy--Schwarz inequality. The middle factor is dimension free: $n_{k}^{-1}\sum_iD_i(a'\widehat{Q}_{\hat{w}}^{-1/2}Z_i)^{2}=a'\widehat{Q}_{\hat{w}}^{-1/2}\widehat{Q}_{1}\widehat{Q}_{\hat{w}}^{-1/2}a\leq\underline{w}^{-1}$, since the weight sandwich gives $\widehat{Q}_{1}\preceq\underline{w}^{-1}\widehat{Q}_{\hat{w}}$. Hence $\lVert\hat{\theta}_{\hat{w}}-\hat{\theta}_{w_{0}}\rVert=O_{p}(1)\cdot o_{p}(n^{-1/4})\cdot O_{p}(1)=o_{p}(n^{-1/4})$ by Assumption~\ref{assu:eff}(iv) and $n_{k}^{-1}\sum_iD_ie_{w_{0},i}^{2}=O_{p}(1)$, so the level rate for the reported coefficients is $\lVert\hat{c}-c_{w_{0},\hat{p}}\rVert=O_{p}(\rho_{n}+K^{-m})+o_{p}(n^{-1/4})$; after the $O_{p}(K)$ conversion below, the transfer term contributes $o_{p}(Kn^{-1/4})=o_{p}(1)$, since $K=o(n^{1/4})$. The conversion to the derivative uses the spline inverse inequality through the \emph{maximum eigenvalue} of the derivative Gram, not its trace (the trace is of order $K^{3}$ for the normalized basis, and a per-summand Cauchy--Schwarz bound would deliver only $O_{p}(K^{3/2})$). For any spline $s=b_K'u$, the $L^{2}$ inverse inequality $\lVert s'\rVert_{L^{2}(dt)}\leq CK\lVert s\rVert_{L^{2}(dt)}$ and the two-sided class density bounds give $u'H_{p}u=\int(s')^{2}q_{p}\,dt\leq\bar{f}\,C^{2}K^{2}\int s^{2}\,dt\leq(\bar{f}/\underline{f})C^{2}K^{2}\,u'G_{K}(p)u$, that is, $H_{p}:=E[D_ib_K'(p_i)b_K'(p_i)']\preceq CK^{2}G_{K}(p)\preceq CK^{2}\bar{c}_{G}I$ uniformly over $\mathcal{P}_{n}$ by Assumption~\ref{assu:p}(iii). The empirical version satisfies $\lambda_{\max}(\widehat{H}_{\hat{p}})=O_{p}(K^{2})$ on the membership event: the deviation $\sup_{p}\lVert\widehat{H}_{p}-H_{p}\rVert_{op}$ is controlled by the matrix Bernstein inequality \citep[Theorem~6.1 of][]{tropp2012user} over the $p$-net of Lemma~\ref{lem: np uniform}, with envelope $O(K^{3})$ and variance proxy $O(K^{5})$, and is $O_{p}(K^{2})$ because $K(K+L)\log n=o(n/\log n)$ under Assumption~\ref{assu:secondstage}(iv); the off-net extension uses the spline derivative bounds $\sup_t\lVert b_K'(t)\rVert=O(K^{3/2})$ and $\sup_t\lVert b_K''(t)\rVert=O(K^{5/2})$, so moving $p$ within $\zeta(L)n^{-4}$ of a net centre perturbs $\widehat{H}_{p}$ in operator norm by at most $2\sup_t\lVert b_K'\rVert\sup_t\lVert b_K''\rVert\,\zeta(L)n^{-4}=O(K^{4}\zeta(L)n^{-4})$, negligible under polynomial growth. Hence $\lVert b_K'(\hat{p})'(\hat{c}-c_{w_{0},\hat{p}})\rVert_{n,2,D}^{2}=(\hat{c}-c_{w_{0},\hat{p}})'\widehat{H}_{\hat{p}}(\hat{c}-c_{w_{0},\hat{p}})\leq O_{p}(K^{2})\lVert\hat{c}-c_{w_{0},\hat{p}}\rVert^{2}$, so the conversion factor is $O_{p}(K)$; together with the $\hat{w}$-transfer term bounded above and the spline simultaneous-approximation error for the derivative, a further $O(K^{-(m-1)})$, this gives
\[
\lVert\hat{\lambda}'(\hat{p})-\lambda_{w_{0},\hat{p}}'(\hat{p})\rVert_{n,2,D}=O_{p}\bigl(K\rho_{n}+K^{-(m-1)}\bigr)+o_{p}\bigl(Kn^{-1/4}\bigr)=o_{p}(1),
\]
since $K^{2}\rho_{n}^{2}=\{K^{2}/\sqrt{n}\}\{(K+L)\log n/\sqrt{n}\}\rightarrow0$ by $K^{4}\log K/n\rightarrow0$ and Assumption~\ref{assu:secondstage}(iv), $Kn^{-1/4}\rightarrow0$ by $K=o(n^{1/4})$, and $m>1$; no additional $K$--$L$ condition is needed. The weight-transfer term is carried explicitly because it need not be dominated by the other two.

\emph{Target-drift step (empirical $L^{2}$).} It remains to show $\lVert\lambda_{w_{0},\hat{p}}'(\hat{p})-\lambda_{0}'(p_{0})\rVert_{n,2,D}=o_{p}(1)$. The argument-shift piece obeys $\lVert\lambda_{0}'(\hat{p})-\lambda_{0}'(p_{0})\rVert_{n,2,D}\leq\bar L_2\bigl(n^{-1}\sum_i\delta_i^{2}\bigr)^{1/2}=o_{p}(n^{-1/4})$ by Assumption~\ref{assu:hfunc}(i) and Lemma~\ref{lem: np first stage rates}(b). For the function-drift piece set $g_{n}:=\lambda_{w_{0},\hat{p}}-\lambda_{0}$, which on the membership event has uniformly bounded first and second derivatives ($m>2$, the class's H\"{o}lder ball, and boundedness of $\beta_{w_{0}}(\hat{p})$). Level smallness: $\lVert\beta_{w_{0}}(\hat{p})-\beta_{0}\rVert=o_{p}(1)$: $\beta_{w_{0}}(p_{0})=\beta_{0}$ exactly, because the conditional mean is exactly partially linear so any bounded-weight projection at the true index recovers $(\beta_{0},\lambda_{0})$, $\beta_{w_{0}}(p)-\beta_{0}=A_{w_{0}}(p)^{-1}E[D w_{0}v^{w_{0}}_{p}\{\lambda_{0}(p_{0})-E^{w_{0}}[\lambda_{0}(p_{0})\mid p,D{=}1]\}]$, is $O_{p}(\lVert\hat{p}-p_{0}\rVert_{L^{2}})$ at $p=\hat{p}$ by the weighted conditional-mean stability; and the weighted conditional-mean stability (Assumption~\ref{assu:eff}(iii)) together with the Lipschitz continuity of $\lambda_{0}$ give $E[D_ig_{n}(\hat{p}(X_i))^{2}]=o_{p}(1)$; since $E[D_ig_{n}(\hat{p}(X_i))^{2}]=P[D_i=1]\int g_{n}^{2}f_{\hat{p}\mid D=1}\,dt\geq P[D_i=1]\,\underline{f}\int g_{n}^{2}\,dt$ by the class's conditional density lower bound on its interval support, this converts to $\lVert g_{n}\rVert_{L^{2}(dt)}=o_{p}(1)$ on that interval. The interval interpolation inequality $\lVert g_{n}'\rVert_{L^{2}(dt)}\leq C\bigl(\lVert g_{n}\rVert_{L^{2}(dt)}^{1/2}\lVert g_{n}''\rVert_{L^{2}(dt)}^{1/2}+\lVert g_{n}\rVert_{L^{2}(dt)}\bigr)=o_{p}(1)$ upgrades the level to the derivative in $L^{2}(dt)$, and the empirical-to-population comparison for the scalar class $\{d\,(g'(p(x)))^{2}:p\in\mathcal{P}_{n},\,g\in C^{2}\text{-ball}\}$ (bounded envelope; the net-and-bracketing argument of Lemma~\ref{lem: np uniform}) together with the density upper bound $\bar{f}$ returns $\lVert g_{n}'(\hat{p})\rVert_{n,2,D}^{2}\leq \bar{f}\,C\lVert g_{n}'\rVert_{L^{2}(dt)}^{2}+o_{p}(1)=o_{p}(1)$. Combining the two steps, $\lVert\hat{\lambda}'(\hat{p})-\lambda_{0}'(p_{0})\rVert_{n,2,D}=o_{p}(1)$, which is exactly what the composite-summand bound consumes. Every error enters $\hat{\omega}^{*}_{i}$ multiplied by factors that are uniformly $O_{p}(1)$: $\hat{w}$ is bounded by construction, $\sup_i\lVert\hat{v}^{\hat{w}}_i\rVert=O_{p}(1)$ by the weighted stability, $\hat{p}_i\in[0,1]$, and $\lambda_{0}'$ is bounded; the product structure and the fourth-moment condition of the lemma therefore give $n^{-1}\sum_i\lVert\hat{\omega}^{*}_{i}-\omega^{*}_{i}\rVert^{2}=o_{p}(1)$, completing the proof.
\end{proof}

\section{Proof of the fixed-dimensional implementation (Proposition~\refPropAsym)}\label{app: fixed K proof}

Proposition~\ref{prop: asymptotics} is Theorem~4.1 of \cite{newey2009two} specialized to the no-exclusion selection estimator. The proof therefore consists of mapping notation and verifying his Assumptions 2.1 and 4.1--4.5 from ours; no separate asymptotic argument is required. Throughout, $d_{\gamma}$ denotes the fixed first-stage dimension of Assumption~\ref{assu:convergence}(i) and $K=K_{n}\rightarrow\infty$ the second-stage spline dimension, as in Assumption~\ref{assu:hfunc}.

\emph{Notation map.} His $(y,x,w,d)$ are our $(Y_i,X_i,X_i,D_i)$: the first-step regressors coincide with the outcome regressors, which is precisely the no-exclusion design. His index $v(w,\alpha)$ is our $\phi(X_i)'\gamma$ with $\alpha=\gamma\in\mathbb{R}^{d_{\gamma}}$ fixed and finite dimensional; his strictly monotone transformation $\tau(v,\eta)$ is our known link $F$, with no nuisance parameter (formally $\hat{\eta}=\eta_{0}$, trivially $\sqrt{n}$-consistent); his second-step approximating functions $p^{K}(\tau)$ are our spline basis evaluated at $\hat{p}_i=F(\phi(X_i)'\hat{\gamma}_{n})$; his unknown correction function $h_{0}(v)$ is $\lambda_{0}(F(v))$, so that $\partial h_{0}(v)/\partial v=\lambda_{0}'(p_{0})f(\phi'\gamma_{0})$ by the chain rule; his selected-sample partialling-out residual $u_i$ is our $D_iv_i$, writing $v_i$ as a column vector in this section (so that $u_i=\tilde{X}_i'$ under the main text's row convention); and his $(M,\Omega,H,V_{\hat{\alpha}})$ are our $(A,\Omega,G,V_{\gamma})$, since
\[
H=E\left[u_i\,\frac{\partial h_{0}(v_i)}{\partial v}\,\frac{\partial v(w_i,\alpha_{0})}{\partial\alpha'}\right]=E\left[D_iv_i\,\lambda_{0}'(p_{0i})\,f(\phi(X_i)'\gamma_{0})\,\phi(X_i)'\right]=G.
\]

\emph{Verification of the conditions.}

\emph{His Assumption 2.1.} Nonsingularity of $M=E[u_iu_i']=A$ is Assumption~\ref{assu:Dv*}(i). His surrounding discussion motivates this condition through an excluded variable, but the assumption itself requires only that no linear combination of $X_i$ coincide with a function of the index on the selected sample; in our design that is delivered by nonlinearity of $p_{0}$ in $X_i$, which is the identification content of Section~\ref{sec: model}.

\emph{His Assumption 4.1.} Root-$n$ asymptotic linearity of the first step with a mean-zero influence function depending only on the first-stage data $(X_i,D_i)$, a finite nonsingular variance, and a consistent variance estimator: Assumption~\ref{assu:convergence}(ii)--(iii), with $\psi_i=J^{-1}s_i$ and $V_{\gamma}=J^{-1}\Sigma_{s}J^{-1}$. That $\psi_i$ is a function of the first-stage data alone is what yields $E[u_i\varepsilon_i\psi_i']=0$ and hence the additive form of $V$ below.

\emph{His Assumption 4.2.} $E[D_i\lVert X_i\rVert^{2+\delta}]<\infty$ and bounded $\mathrm{Var}(X_i\mid v,D_i=1)$ follow from the compact support of $X_i$ (Assumption~\ref{assu:IID}(ii)); bounded $E[\varepsilon_i^{2}\mid v,D_i=1]$ is Assumption~\ref{assu:s2}.

\emph{His Assumption 4.3.} $h_{0}=\lambda_{0}\circ F$ is continuously differentiable of order $s\geq3$ by Assumption~\ref{assu:convergence}(iv); $E[X_i\mid v,D_i=1]$, which equals $E[X_i\mid p_{0}(X_i)=\cdot,D_i=1]$ composed with $F$, is continuously differentiable of order $t=2$ by Assumptions~\ref{assu:diff}(i) and \ref{assu:convergence}(i).

\emph{His Assumption 4.4.} $\sqrt{n}(\hat{\eta}-\eta_{0})=O_{p}(1)$ holds trivially. The distribution of $\tau(v(w,\alpha_{0}),\eta_{0})=p_{0}(X_i)$ is absolutely continuous with density bounded away from zero on a compact support: the support is compact because $p_{0}$ is continuous on the compact $\mathcal{X}$ (Assumption~\ref{assu:IID}(ii)), and by Bayes' rule the unconditional density satisfies $f_{p_{0}}(t)=f_{p_{0}\mid D=1}(t)\,P[D_i=1]/P[D_i=1\mid p_{0i}=t]\in[\underline{f}\,P[D_i=1],\,\bar{f}\,P[D_i=1]/\varepsilon]$, using the conditional density bounds of Assumption~\ref{assu:p}(i) and the overlap bound $P[D_i=1\mid p_{0}(X_i)=\cdot]>\varepsilon$ of Assumption~\ref{assu:diff}(iii). The first two derivatives of $v(w,\alpha)=\phi(w)'\alpha$ with respect to $\alpha$ are $\phi(w)$ and $0$, bounded on the compact support, and those of $\tau=F$ are bounded by Assumption~\ref{assu:convergence}(i).

\emph{His Assumption 4.5, spline case.} The cubic spline satisfies his degree requirement $3\geq t-1=1$; the smoothness order satisfies $s\geq3$; and the rate conditions $K^{4}/n\rightarrow0$ and $\sqrt{n}\,K^{-s-t+1}=\sqrt{n}\,K^{-s-1}\rightarrow0$ are imposed in Assumption~\ref{assu:convergence}(v). The two rates are compatible: $\sqrt{n}\,K^{-s-1}\rightarrow0$ requires $K\gg n^{1/(2(s+1))}$, and $1/(2(s+1))\leq1/8<1/4$.

\emph{Nonsingularity of his $\Omega$.} For any $c\neq0$, $c'\Omega c=E[\sigma_{0}^{2}(X_i)D_i(c'v_i)^{2}]$; since $\sigma_{0}^{2}$ is positive (Assumption~\ref{assu:s2}), $c'\Omega c=0$ would force $D_i(c'v_i)^{2}=0$ almost surely and hence $c'Ac=0$, contradicting Assumption~\ref{assu:Dv*}(i).

\emph{Conclusion.} All conditions of Theorem~4.1 of \cite{newey2009two} hold. His proof establishes, through his equations (A.6)--(A.10), the expansion
\[
\sqrt{n}\left(\hat{\beta}_{n}-\beta_{0}\right)=M^{-1}\Bigl[\frac{1}{\sqrt{n}}\sum_{i=1}^{n}u_i\varepsilon_i-H\,\sqrt{n}\left(\hat{\alpha}-\alpha_{0}\right)\Bigr]+o_{p}\left(1\right),
\]
and substituting the first-stage expansion of Assumption~\ref{assu:convergence}(ii) yields exactly the displayed representation of Proposition~\ref{prop: asymptotics}, $\sqrt{n}(\hat{\beta}_{n}-\beta_{0})=A^{-1}n^{-1/2}\sum_i\{\tilde{X}_i'\varepsilon_{i}-GJ^{-1}s_{i}\}+o_{p}(1)$. (His printed combined score, p.~S227, carries the opposite sign on the first-step influence term relative to his equation (A.8); the discrepancy is immaterial, as the limiting variance depends on it only through the quadratic form $HV_{\hat{\alpha}}H'$.) The theorem then delivers $\sqrt{n}(\hat{\beta}_{n}-\beta_{0})\rightarrow_{d}N(0,V)$ with $V=M^{-1}(\Omega+HV_{\hat{\alpha}}H')M^{-1}=A^{-1}(\Omega+GV_{\gamma}G')A^{-1}$, which is \eqref{eq: asymptotic distribution when p is estimated}, together with consistency of his variance estimator, his equation (3.7). Our \eqref{eq: corrected variance estimator} is the sample outer product of the combined influence summand; expanding the square reproduces his two terms, the White component and the correction $\hat{G}\hat{J}^{-1}\hat{\Sigma}_{s}\hat{J}^{-1}\hat{G}'$, plus sample cross terms that converge in probability to $E[u_i\varepsilon_i\psi_i']=0$ (established on his p.~S228), so the two estimators share the limit $V$ and consistency transfers. In particular, the uniform consistency of the estimated derivative inside $\hat{G}$, $\max_{i\leq n}\lvert\partial\hat{h}(\hat{v}_i)/\partial v-\partial h_{0}(v_i)/\partial v\rvert\rightarrow_{p}0$, is established within his proof (his equation (A.11)) under the conditions verified above and requires no separate argument. This completes the proof. \qed

\section{Detailed simulation results}\label{appendix: simulations}

\subsection{Design specifications}\label{subsec: dgp specs}

We first investigate the single-covariate case using the following DGP (referred to
as DGP0):
\begin{equation}
    Y = D \cdot \left(\beta_0 + X\beta_1 + 2 \cdot V \right),\quad D=\mathbbm{1}[\alpha_0 + \alpha_1 X + \alpha_2 X^2 + \alpha_3 X^3 + U \ge 0],
\end{equation}
\begin{equation}\label{eq: joint normality}
X\sim N(0,1), \quad \begin{bmatrix}
    V\\U
    \end{bmatrix}|X \sim  N\left(\begin{bmatrix}
        0\\0
    \end{bmatrix}, \begin{bmatrix}
        1 & 0.75\\
        0.75& 1
        \end{bmatrix} \right).
\end{equation}
In this instance, $\beta_0$ is not separately identified from $\lambda_0(\cdot)$. The identification of $\beta_1$ is contingent upon the parameter values $\alpha = (\alpha_0, \alpha_1, \alpha_2, \alpha_3)$. This is because $p_0(X)$ must not exhibit strict monotonicity. We employ our semiparametric estimator (denoted KL) for $\beta_1$. For alternative estimators, we consider the two-part model (referred to as TPM), which is the ordinary least squares (OLS) estimator conditional on $D=1$, assuming random selection, and Heckman's MLE (denoted HSM). Both TPM and HSM are misspecified. We also compare our estimator with the oracle estimator, which incorporates the true functional forms of $p_0(X)$ and the selection bias. Specifically, in the oracle estimation, we estimate $\alpha$ using probit regression of $D$ on $Z:=(1, X, X^2, X^3)$. Subsequently, we include $\frac{\phi(Z\hat{\alpha})}{\Phi(Z\hat{\alpha})}$ as a control function in the second step.

We set $(\beta_0, \beta_1) = (0.5, 1)$ and consider two designs for selection parameters:
$$\text{(a) } \alpha = (0.6, 1.50, -0.5, -0.05), \quad \text{(b) } \alpha = (0.4, 1.50, 0.2, 0.05).$$

We next generate Monte Carlo samples from the following DGP (referred to as DGP1
henceforth), where $X$ consists of two continuously distributed variables and the
unobservables are jointly normally distributed as in \eqref{eq: joint normality}:
\begin{align}
    Y &= D \cdot \left(\beta_0 + X_1\beta_1 + X_2\beta_2 + 2 \cdot V \right),\\
    D &=\mathbbm{1}[\alpha_0 + \alpha_1X_1 + \alpha_2X_1^2 + \alpha_3X_1^3 + \alpha_4X_1X_2 + \alpha_5X_2 + \alpha_6X_2^2 + U \ge 0].
\end{align}
$X_1$ and $X_2$ are drawn from the standard normal distribution and independent of each other. The parameter values are set as:
$$\alpha = (1.5, 0.5, -0.5, 0.2, 0.5, 1.0, -0.5), \quad \beta = (0.5, 0.5, 0.25).$$
The average selection probability across Monte Carlo samples is 66\%.

In the last design (DGP2 henceforth), we consider the scenario where $X$ consists of a continuously distributed variable, $X_1 \sim N(0,1)$, and a binary variable, $X_2 \sim \operatorname{Bernoulli}(0.5)$. The remaining elements of DGP2 are otherwise identical to DGP1 except for the selection process:
\begin{equation*}
    D =\mathbbm{1}[\alpha_0 + \alpha_1X_1 + \alpha_2X_1^2 + \alpha_3X_1^3 + \alpha_4 X_1X_2 + \alpha_5 X_2 + \alpha_6X_1^2X_2 + \alpha_7X_1^3X_2 + U \ge 0].
\end{equation*}
The parameter values are set as:
$$\alpha = (0.2, -0.2, -0.5, 0.3, 0.1, 0.5, -0.3, 0.2), \quad \beta = (0.5, 0.5, 0.25).$$
The average selection probability is 52\% under this DGP.

\subsection{Single-covariate design (DGP0)}\label{subsec: dgp0 results}

In both designs, the parameter values are chosen to ensure that the selection probability, $P(D=1)$, is approximately 60\%. Let $h(X) := \alpha_0 + \alpha_1 X + \alpha_2 X^2 + \alpha_3 X^3$ be the selection index. The shape of $h$ is shown in Figure~\ref{fig: simul-DGP0-selection-index}, and the performance of the estimators is reported in Table~\ref{tab: finite sample performance - DGP0} and Figure~\ref{fig: simul-DGP0}. Under design (a), $h(\cdot)$ is not monotone, so our estimator for $\beta_1$ is well centred around the true value. Its root-mean-squared error (RMSE) is close to that of the oracle estimator. Conversely, with design (b), $\beta_1$ remains unidentified, so our estimator suffers from a large RMSE, as expected. In both designs, the TPM exhibits substantial misspecification bias. Heckman’s MLE performs poorly in the non-monotone design due to misspecification but performs very well in the monotone design. This is because the selection index is close to linear in the effective support of $X$ and the error distribution is correctly specified in the monotone design. Coverage probabilities illustrate the practical content of Proposition~\ref{prop: asymptotics}. Selection makes the second-stage error heteroskedastic by construction, even though $(U,V)$ is homoskedastic, so the appropriate first-stage-blind benchmark is the robust sandwich standard error of the second-stage regression, that is, \eqref{eq: corrected variance estimator} with the correction term omitted. Confidence intervals based on these robust but uncorrected standard errors fall short of nominal coverage (0.921 in design (a) and 0.941 in design (b)), whereas the corrected standard errors from \eqref{eq: corrected variance estimator} restore approximately nominal coverage in both designs (0.944 and 0.950). The gap is direct finite-sample evidence that the estimation of $p_0$ contributes to the sampling variance of $\hat{\beta}$, as the theory predicts; the first-stage contribution is largest in design (a), where the correction raises coverage from 0.921 to 0.944.

\begin{table}[!htbp]
\centering\small
\setlength{\tabcolsep}{4pt}
\caption{Finite-sample performance of estimators (single covariate)}\label{tab: finite sample performance - DGP0}
\begin{tabular}{lcccccccc}
\toprule
 & \multicolumn{4}{c}{Non-monotone selection} & \multicolumn{4}{c}{Monotone selection} \\
\cmidrule(lr){2-5} \cmidrule(lr){6-9}
 & TPM & HSM & KL & Oracle & TPM & HSM & KL & Oracle \\
\midrule
 RMSE &  0.524 & 0.109 & 0.080 &  0.071 &  0.693 &  0.059 &  0.231 &  0.096 \\
 Bias & -0.522 & 0.092 & -0.020 & -0.004 & -0.692 & -0.015 & -0.004 & -0.003 \\
 Coverage (uncorrected) & 0.000 & 0.671 & 0.921 & 0.941 & 0.000 & 0.950 & 0.941 & 0.954 \\
 Coverage (corrected) & -- & -- & 0.944 & -- & -- & -- & 0.950 & -- \\
\bottomrule
\end{tabular}
\par\footnotesize\renewcommand{\baselineskip}{11pt}\justifying
\textbf{Note:} TPM, HSM, KL, and Oracle denote OLS under random selection, Heckman’s MLE, our semiparametric sieve estimator, and the oracle estimator, respectively. `Coverage' rows report the coverage probability of 95\% confidence intervals. For the KL estimator, `uncorrected' uses the robust standard errors of the second-stage regression, which treat $\hat{p}$ as known (the correction term in \eqref{eq: corrected variance estimator} omitted), and `corrected' uses the first-stage-corrected standard errors of Proposition \ref{prop: asymptotics} via \eqref{eq: corrected variance estimator}; TPM, HSM, and Oracle use their conventional standard errors.
\end{table}
\FloatBarrier

\begin{figure}[!htbp]
    \centering
    \caption{Selection index designs (DGP0)}\label{fig: simul-DGP0-selection-index}
    \includegraphics[width=\textwidth]{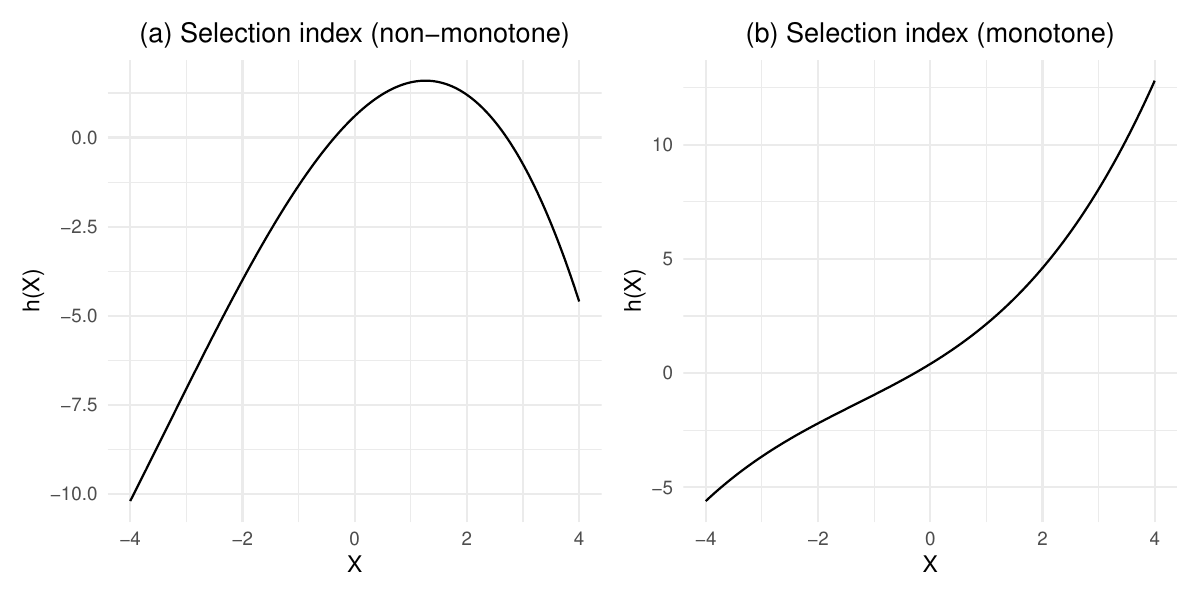}
    \caption{Finite-sample performance of estimators (DGP0)}\label{fig: simul-DGP0}
    \includegraphics[width=\textwidth]{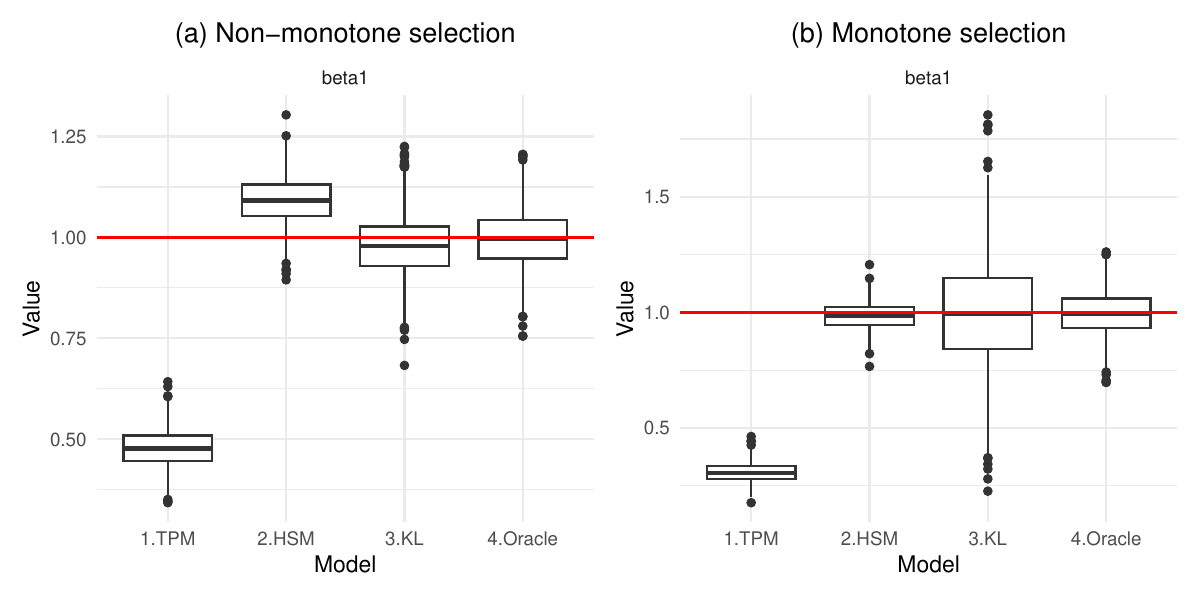}
    \par\footnotesize\renewcommand{\baselineskip}{11pt}\raggedright
\textbf{Note:} Boxplots mark the interquartile range (IQR) with the median (thick bar); whiskers extend to 1.5~×~IQR and dots denote outliers. Red lines indicate true parameter values. TPM, HSM, KL, and Oracle denote OLS under random selection, Heckman’s MLE, our semiparametric sieve estimator, and the oracle estimator, respectively.
\end{figure}
\FloatBarrier

\subsection{Two-covariate designs (DGP1 and DGP2)}\label{subsec: dgp12 results}

Figure~\ref{fig:simul-overall} displays the box plots of the estimates for DGP1 and
DGP2; the corresponding summary statistics are reported in Table~\ref{tab: finite sample performance - DGP1 and 2} of the paper.

\begin{figure}[!htbp]
  \centering
  \caption{Finite-sample performance (DGP1 and DGP2)}
  \label{fig:simul-overall}
  \begin{minipage}[t]{0.78\textwidth}
    \centering
    \textbf{(a) DGP1 (selection probability = 0.66)}\par\smallskip
    \includegraphics[width=\linewidth]{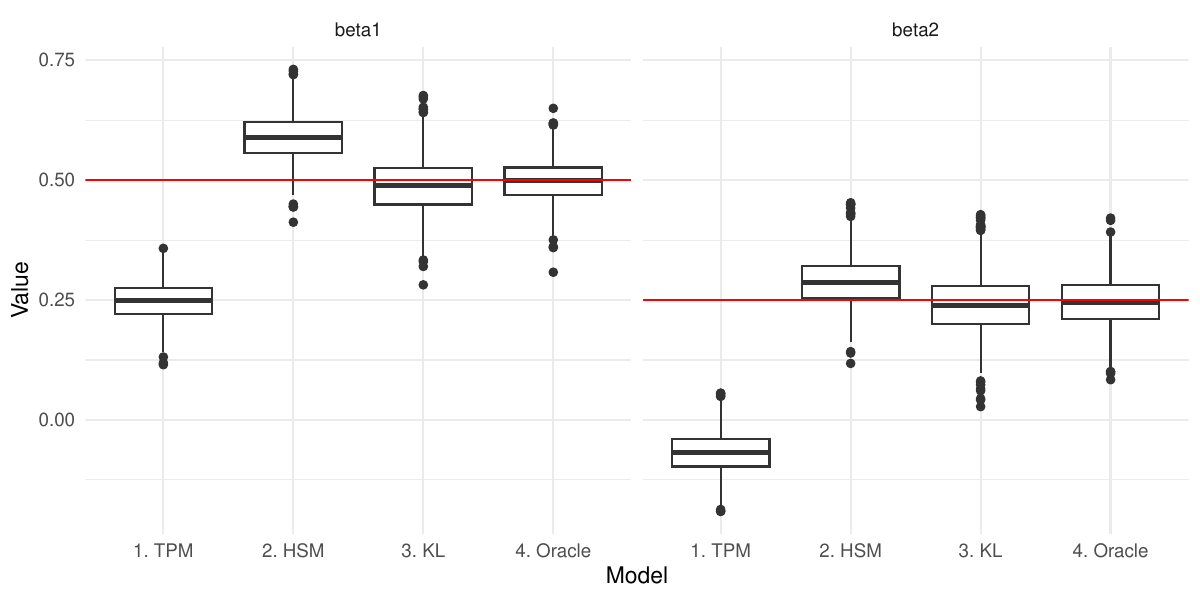}
  \end{minipage}

  \vspace{1em}

  \begin{minipage}[t]{0.78\textwidth}
    \centering
    \textbf{(b) DGP2 (selection probability = 0.52)}\par\smallskip
    \includegraphics[width=\linewidth]{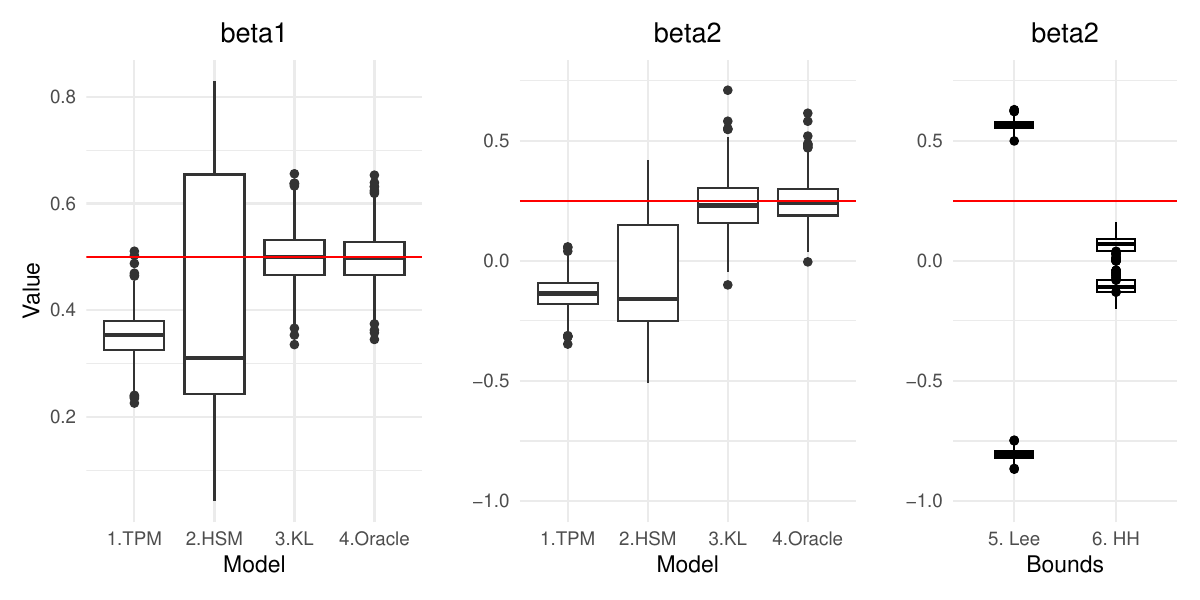}
  \end{minipage}

  \par\footnotesize\renewcommand{\baselineskip}{11pt}\raggedright
\textbf{Note:} Boxplots mark the IQR with the median (thick bar); whiskers extend to 1.5~×~IQR and dots denote outliers. Red lines indicate true parameter values. TPM, HSM, KL, and Oracle denote OLS under random selection, Heckman’s MLE, our semiparametric sieve estimator, and the oracle estimator, respectively. Lee and HH bounds are computed using a larger sample size ($n=100,000$).
\end{figure}
\FloatBarrier

\subsection{Lee and Honor\'{e}--Hu bounds}\label{subsec: bounds results}

We also evaluate the bounds proposed by \cite{lee2009bounds} and \cite{honore2020selection} using DGP2. We do not consider DGP1 because there is no binary treatment variable for which Lee’s bounds are applicable. For the HH bounds, we closely follow the HH implementation (with the logistic first stage). We use $n=100{,}000$, rather than the $n=5{,}000$ used for the point estimators, because the HH bounds are not reliably estimated at the smaller sample size: they are empty in 93 of 1,000 iterations. With the larger sample size, both bounds are reliably estimated. The right panel of Figure~\ref{fig:simul-overall}(b) displays the box plots of the HH and Lee bounds. The Lee bounds consistently contain the true parameter value for $\beta_2$ because they are very wide in this setup. They are never informative about the sign of the treatment effect, as they include zero in every iteration under DGP2. Although much tighter, the HH bounds are uninformative in most iterations and never contain the true value because the model misspecifies the selection process.

\subsection{Simulation results on weak nonlinearity}\label{appendix: weak nonlinearity}

\begin{table}[!htbp]
\centering\footnotesize
\setlength{\tabcolsep}{4pt}
\caption{Finite-sample performance of estimators}\label{tab: finite sample performance - weak nonlinearity}
\begin{tabular}{lccccccccc}
\toprule
 & & \multicolumn{4}{c}{Weak nonlinearity} & \multicolumn{4}{c}{Very weak nonlinearity} \\
\cmidrule(lr){3-6} \cmidrule(lr){7-10}
 & & TPM & HSM & KL & Oracle & TPM & HSM & KL & Oracle \\
\midrule
 \multirow{2}{*}{RMSE} &$\beta_1$ &  0.159 & 0.034 &  0.051 &  0.037 &  0.125 &  0.035 &  0.048 &  0.035 \\
                       & $\beta_2$ &  0.305 & 0.040 &  0.075 &  0.053 &  0.287 &  0.040 &  0.075 &  0.054 \\
 \multirow{2}{*}{Bias} & $\beta_1$ & -0.156 & -0.009 & -0.011 & -0.003 & -0.121 & -0.017 & -0.011 & -0.003 \\
                       & $\beta_2$ & -0.303 & 0.003 & -0.018 & -0.004 & -0.286 & -0.001 & -0.020 & -0.004 \\
\multirow{2}{*}{Coverage (uncorrected)} & $\beta_1$ & 0.000 & 0.946 & 0.926 & 0.939 &  0.015 & 0.919 & 0.924 & 0.942  \\
                       & $\beta_2$ & 0.000 & 0.949 & 0.924 & 0.948 & 0.000 & 0.948 & 0.916 & 0.943 \\
\multirow{2}{*}{Coverage (corrected)} & $\beta_1$ & -- & -- & 0.944 & -- & -- & -- & 0.942 & -- \\
                       & $\beta_2$ & -- & -- & 0.942 & -- & -- & -- & 0.929 & -- \\
\bottomrule
\end{tabular}
\par\footnotesize\renewcommand{\baselineskip}{11pt}\justifying
\textbf{Note:} TPM, HSM, KL, and Oracle denote OLS under random selection, Heckman’s MLE, our semiparametric sieve estimator, and the oracle estimator, respectively. For the KL estimator, `uncorrected' coverage uses the robust standard errors of the second-stage regression, which treat $\hat{p}$ as known, and `corrected' uses the first-stage-corrected standard errors of equation~\eqref{eq: corrected variance estimator} of the paper; TPM, HSM, and Oracle use their conventional standard errors.
\end{table}
\FloatBarrier

\begin{figure}[!htbp]
  \centering
  \caption{Finite-sample performance with weak nonlinearity}
  \label{fig:simul-weak-nonlinearity}
  \begin{minipage}[t]{0.95\textwidth}
    \centering
    \textbf{(a) DGP1: weak nonlinearity
    ($\alpha_2,\alpha_3,\alpha_6$ divided by 10)}\par\smallskip
    \includegraphics[width=\linewidth]{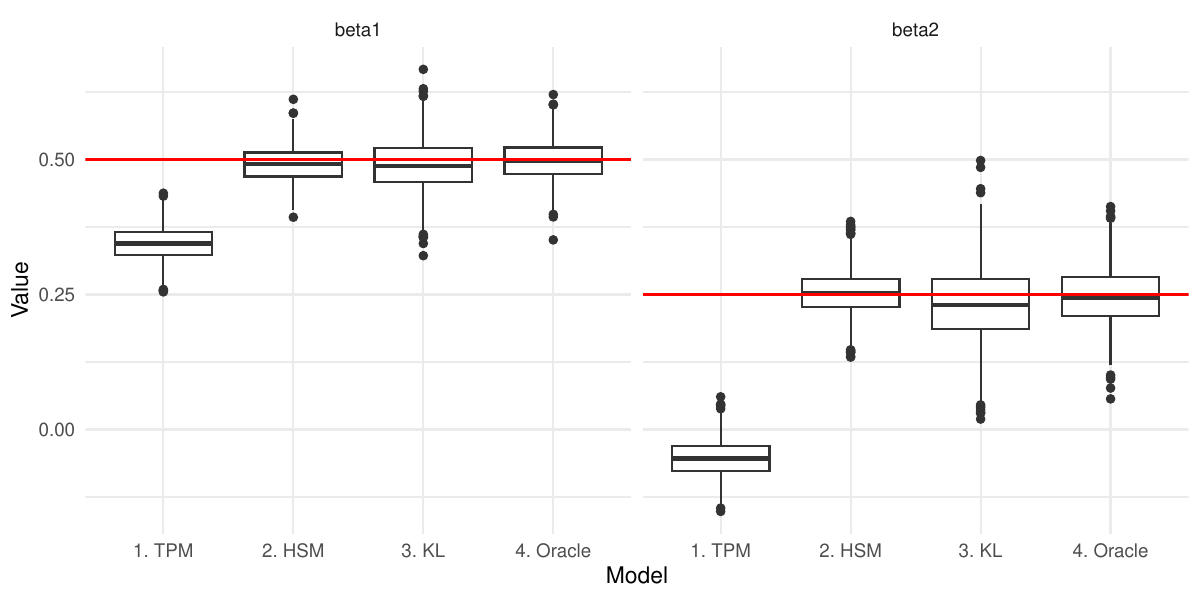}
  \end{minipage}

  \vspace{1em}

  \begin{minipage}[t]{0.95\textwidth}
    \centering
    \textbf{(b) DGP1: very weak nonlinearity
    ($\alpha_2,\alpha_3,\alpha_6$ divided by 100)}\par\smallskip
    \includegraphics[width=\linewidth]{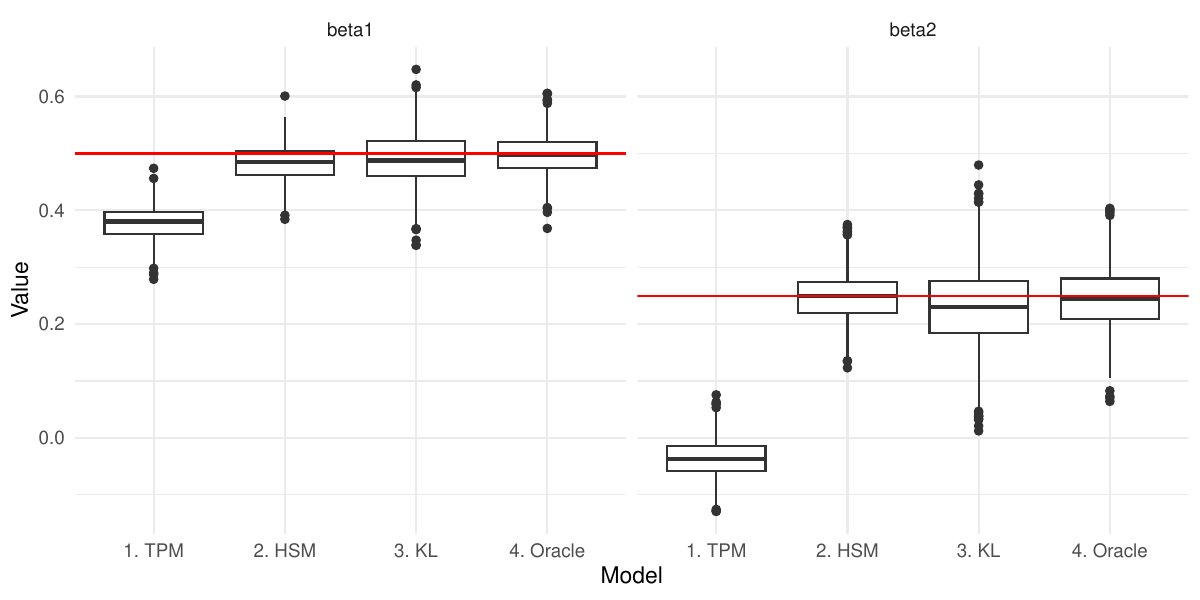}
  \end{minipage}

  \par\footnotesize\renewcommand{\baselineskip}{11pt}\raggedright
\textbf{Note:} Boxplots mark the IQR with the median (thick bar); whiskers extend to 1.5~×~IQR and dots denote outliers. Red lines indicate true parameter values. TPM, HSM, KL, and Oracle denote OLS under random selection, Heckman’s MLE, our semiparametric sieve estimator, and the oracle estimator, respectively.
\end{figure}
\FloatBarrier

\section{Nonlinearity in the selection equation}\label{appendix: nonlinearity in selection}

We estimate the selection index by the unpenalized sieve probit MLE (Assumption~\ref{assu:convergence} of the paper), implemented with the \texttt{gam} command of the \texttt{mgcv} package in R with fixed degrees of freedom (\texttt{fx = TRUE}). Age and experience enter through group-specific cubic regression splines with five basis functions, so each smooth term has exactly four degrees of freedom after the identifiability constraint. Because identification rests on nonlinearity of the selection index, the relevant empirical question is not whether age and experience matter in selection but whether their effects are nonlinear, and a significance test of a whole smooth term cannot distinguish the two. We therefore report the nested likelihood-ratio tests of Remark~\ref{rem: nonlinearity test} of the paper, comparing restricted index specifications with the sieve; all models are unpenalized maximum likelihood fits of the same binomial probit likelihood, so the classical chi-square theory applies. Table~\ref{tab:nonlinearity} reports four tests for each subsample: the null of a (group-specific) linear index in age and experience against the sieve (twelve degrees of freedom, three per smooth); the nulls that age and that experience enter linearly, one at a time, keeping the other smooth unrestricted (six degrees of freedom each); and the identification-relevant null that the selection index is spanned by the outcome-equation regressors, $g(X)=X'\gamma$ with $X$ including the quadratic terms in age and experience, the case in which the model is unidentified. For the last test the sieve is augmented with the two quadratic terms so that the null is nested, because the natural cubic regression spline space does not contain quadratic polynomials exactly (fourteen degrees of freedom).

Every null is overwhelmingly rejected in all four subsamples: the linear index is rejected with likelihood-ratio statistics between 418.5 and 1555.2 on twelve degrees of freedom, age and experience each exhibit significant curvature ($p<0.001$ in every case), and the outcome-equation index is rejected with statistics between 155.4 and 1066.3 on fourteen degrees of freedom. The selection index is thus strongly nonlinear, and nonlinear beyond the quadratic terms already present in the outcome equation, which is the source of identification in our model (Remark~\ref{rem: nonlinearity test} of the paper). We further perturb the number of knots in both the selection and outcome regressions, and the results remain stable. We report the wage regression results with seven knots in the next section (Table~\ref{tab:regression-hetero-7-knots}).

\begin{table}[!htbp]
\centering\footnotesize
\caption{Likelihood-ratio tests of nonlinearity in the selection index}
\label{tab:nonlinearity}
\begin{tabular}{lccccc}
\toprule
& & \multicolumn{2}{c}{Racial selection eq.} & \multicolumn{2}{c}{Gender selection eq.} \\
\cmidrule(lr){3-4}\cmidrule(lr){5-6}
Null hypothesis (index restriction) & df & Men & Women & White & Mexican \\
\midrule
Linear in age and experience         & 12 & 571.9 & 1363.4 & 1555.2 & 418.5 \\
Age enters linearly                  & 6  & 273.2 & 426.5  & 531.8  & 133.8 \\
Experience enters linearly           & 6  & 51.8  & 286.8  & 327.7  & 22.6  \\
Outcome-equation index $F(X'\gamma)$ & 14 & 155.4 & 656.4  & 1066.3 & 155.7 \\
\bottomrule
\end{tabular}

\vspace{0.5em}
\par\footnotesize\renewcommand{\baselineskip}{11pt}\justifying
\textbf{Note:} Each cell reports the likelihood-ratio statistic $2(l_{\mathrm{unres}}-l_{\mathrm{res}})$ of the restricted index specification in the row against the group-specific cubic-regression-spline sieve of the first stage; all models are unpenalized probit maximum likelihood fits, and every statistic is significant at $p<0.001$. In the first three rows the restricted models impose group-specific linear terms for the indicated covariate(s) and the unrestricted model is the sieve exactly as used in estimation. In the last row the null is the index spanned by the outcome-equation regressors, age, age$^{2}$, experience, experience$^{2}$, and the remaining controls, with common coefficients, under which the model is unidentified (Remark~\ref{rem: nonlinearity test} of the paper); because the natural cubic regression spline space does not contain quadratic polynomials, the unrestricted model for this test is the sieve augmented with age$^{2}$ and experience$^{2}$, so that the null is nested. df denotes the degrees of freedom of the $\chi^{2}$ null distribution.%
\end{table}
\FloatBarrier
\newpage

\section{Full wage regression tables}\label{appendix: full wage regression tables}


\begin{table}[!htbp]
\centering\footnotesize
\setlength{\tabcolsep}{3pt}
\caption{Wage regression and racial wage disparity}
\label{tab:new-results}

\begin{tabular}{l *{3}{d{3.3}}c *{3}{d{3.3}}c}
\toprule
& \multicolumn{4}{c}{Men} & \multicolumn{4}{c}{Women} \\
\cmidrule(lr){2-5}\cmidrule(lr){6-9}
Variable
& \multicolumn{1}{c}{TPM}
& \multicolumn{1}{c}{HSM}
& \multicolumn{1}{c}{KL}
& \multicolumn{1}{c}{HH}
& \multicolumn{1}{c}{TPM}
& \multicolumn{1}{c}{HSM}
& \multicolumn{1}{c}{KL}
& \multicolumn{1}{c}{HH}\\
\midrule
mexican & -0.113 & -0.113 & -0.089 & [-0.114, -0.103]
             & -0.078 & -0.078 & -0.066 & [-0.089, -0.066]\\
             & (0.005)& (0.005)& (0.009)&
             & (0.005)& (0.005)& (0.006)& \\[3pt]

age & 0.078 & 0.079 & 0.106 & [0.078, 0.091]
             & 0.111 & 0.113 & 0.132 & [0.095, 0.132]\\
             & (0.006)& (0.006)& (0.010)&
             & (0.007)& (0.007)& (0.009)& \\[3pt]

age$^2$ & 0.000 & 0.000 & -0.001 & [-0.001, 0.000]
             & 0.000 & 0.000 & -0.001 & [-0.001, 0.000]\\
             & (0.000)& (0.000)& (0.000)&
             & (0.000)& (0.000)& (0.000)& \\[3pt]

exp & -0.025 & -0.025 & -0.041 & [-0.032, -0.024]
             & -0.069 & -0.070 & -0.081 & [-0.082, -0.061]\\
             & (0.005)& (0.005)& (0.008)&
             & (0.006)& (0.006)& (0.007)& \\[3pt]

exp$^2$ & 0.000 & 0.000 & 0.000 & [0.000, 0.000]
             & 0.000 & 0.000 & 0.000 & [0.000, 0.000]\\
             & (0.000)& (0.000)& (0.000)&
             & (0.000)& (0.000)& (0.000)& \\[3pt]

less hs & -0.169 & -0.170 & -0.213 & [-0.191, -0.174]
             & -0.174 & -0.177 & -0.225 & [-0.235, -0.177]\\
             & (0.012)& (0.012)& (0.017)&
             & (0.014)& (0.015)& (0.017)& \\[3pt]

some college & 0.052 & 0.051 & 0.045 & [0.047, 0.050]
             & 0.033 & 0.033 & 0.027 & [0.022, 0.028]\\
             & (0.009)& (0.009)& (0.010)&
             & (0.011)& (0.011)& (0.011)& \\[3pt]

college & 0.235 & 0.235 & 0.211 & [0.220, 0.230]
             & 0.156 & 0.155 & 0.134 & [0.119, 0.143]\\
             & (0.022)& (0.023)& (0.025)&
             & (0.025)& (0.025)& (0.026)& \\[3pt]

adv degree & 0.258 & 0.257 & 0.208 & [0.230, 0.252]
             & 0.200 & 0.199 & 0.171 & [0.148, 0.174]\\
             & (0.031)& (0.031)& (0.036)&
             & (0.034)& (0.034)& (0.035)& \\[3pt]

veteran & -0.001 & -0.001 & 0.012 & [-0.001, 0.005]
             & 0.030 & 0.030 & 0.032 & [0.029, 0.031]\\
             & (0.006)& (0.006)& (0.007)&
             & (0.016)& (0.016)& (0.015)& \\[3pt]

married & 0.135 & 0.136 & 0.175 & [0.134, 0.153]
             & 0.034 & 0.033 & 0.011 & [0.013, 0.053]\\
             & (0.004)& (0.005)& (0.012)&
             & (0.004)& (0.005)& (0.007)& \\[3pt]

calif & 0.151 & 0.151 & 0.142 & [0.147, 0.151]
             & 0.204 & 0.204 & 0.199 & [0.199, 0.208]\\
             & (0.007)& (0.007)& (0.008)&
             & (0.007)& (0.007)& (0.008)& \\[3pt]

arizo & 0.042 & 0.042 & 0.050 & [0.042, 0.045]
             & 0.098 & 0.098 & 0.099 & [0.097, 0.098]\\
             & (0.009)& (0.009)& (0.010)&
             & (0.009)& (0.009)& (0.009)& \\[3pt]

texas & 0.015 & 0.015 & 0.039 & [0.014, 0.026]
             & 0.030 & 0.031 & 0.037 & [0.025, 0.037]\\
             & (0.007)& (0.008)& (0.011)&
             & (0.008)& (0.008)& (0.008)& \\

\bottomrule
\end{tabular}

\vspace{0.5em}
\par\footnotesize\renewcommand{\baselineskip}{11pt}\justifying
\textbf{Note:} Standard errors appear in parentheses below coefficients. For the KL estimator, the reported standard errors are the first-stage-corrected standard errors of Proposition~\ref{prop: asymptotics} of the paper, computed via equation~\eqref{eq: corrected variance estimator} of the paper. TPM, HSM, KL, and HH denote OLS under random selection, Heckman’s MLE, our semiparametric two-step estimator, and \cite{honore2020selection} bounds, respectively.
\end{table}
\FloatBarrier

\begin{table}[!htbp]
\centering\footnotesize
\setlength{\tabcolsep}{3pt}
\caption{Wage regression and gender wage disparity}
\label{tab:wage-results}

\begin{tabular}{l *{3}{d{3.3}}c *{3}{d{3.3}}c}
\toprule
& \multicolumn{4}{c}{White} & \multicolumn{4}{c}{Mexican} \\
\cmidrule(lr){2-5}\cmidrule(lr){6-9}
Variable
& \multicolumn{1}{c}{TPM}
& \multicolumn{1}{c}{HSM}
& \multicolumn{1}{c}{KL}
& \multicolumn{1}{c}{HH}
& \multicolumn{1}{c}{TPM}
& \multicolumn{1}{c}{HSM}
& \multicolumn{1}{c}{KL}
& \multicolumn{1}{c}{HH}\\
\midrule
female & -0.209 & -0.159 & -0.211 & [-0.214, -0.179]
             & -0.193 & -0.195 & -0.180 & [-0.219, -0.145]\\
             & (0.003)& (0.004)& (0.005)&
             & (0.006)& (0.007)& (0.012)& \\[3pt]

age & 0.103 & 0.045 & 0.109 & [0.072, 0.116]
             & 0.056 & 0.058 & 0.044 & [0.013, 0.078]\\
             & (0.006)& (0.006)& (0.007)&
             & (0.009)& (0.010)& (0.012)& \\[3pt]

age$^2$ & -0.001 & 0.000 & -0.001 & [-0.001, 0.000]
             & 0.000 & 0.000 & 0.000 & [0.000, 0.000]\\
             & (0.000)& (0.000)& (0.000)&
             & (0.000)& (0.000)& (0.000)& \\[3pt]

exp & -0.053 & -0.019 & -0.057 & [-0.062, -0.035]
             & -0.019 & -0.020 & -0.013 & [-0.031, 0.003]\\
             & (0.005)& (0.005)& (0.006)&
             & (0.007)& (0.007)& (0.008)& \\[3pt]

exp$^2$ & 0.000 & 0.000 & 0.000 & [0.000, 0.000]
             & 0.000 & 0.000 & 0.000 & [0.000, 0.000]\\
             & (0.000)& (0.000)& (0.000)&
             & (0.000)& (0.000)& (0.000)& \\[3pt]

less hs & -0.159 & -0.042 & -0.164 & [-0.164, -0.099]
             & -0.186 & -0.188 & -0.170 & [-0.220, -0.131]\\
             & (0.012)& (0.013)& (0.013)&
             & (0.014)& (0.015)& (0.018)& \\[3pt]

some college & 0.033 & 0.047 & 0.031 & [0.022, 0.032]
             & 0.078 & 0.078 & 0.074 & [0.066, 0.080]\\
             & (0.009)& (0.010)& (0.009)&
             & (0.012)& (0.012)& (0.012)& \\[3pt]

college & 0.171 & 0.217 & 0.165 & [0.143, 0.177]
             & 0.323 & 0.323 & 0.316 & [0.317, 0.318]\\
             & (0.021)& (0.023)& (0.022)&
             & (0.029)& (0.029)& (0.028)& \\[3pt]

adv degree & 0.201 & 0.268 & 0.194 & [0.170, 0.219]
             & 0.408 & 0.406 & 0.404 & [0.387, 0.422]\\
             & (0.029)& (0.032)& (0.030)&
             & (0.040)& (0.040)& (0.039)& \\[3pt]

veteran & 0.000 & -0.024 & 0.002 & [-0.013, 0.003]
             & 0.069 & 0.068 & 0.076 & [0.057, 0.093]\\
             & (0.006)& (0.007)& (0.006)&
             & (0.013)& (0.013)& (0.014)& \\[3pt]

married & 0.081 & 0.097 & 0.080 & [0.079, 0.092]
             & 0.095 & 0.095 & 0.091 & [0.084, 0.100]\\
             & (0.003)& (0.004)& (0.003)&
             & (0.006)& (0.006)& (0.007)& \\[3pt]

calif & 0.185 & 0.209 & 0.184 & [0.183, 0.199]
             & 0.147 & 0.147 & 0.146 & [0.144, 0.149]\\
             & (0.006)& (0.007)& (0.006)&
             & (0.012)& (0.012)& (0.013)& \\[3pt]

arizo & 0.074 & 0.066 & 0.075 & [0.071, 0.075]
             & 0.052 & 0.053 & 0.047 & [0.035, 0.061]\\
             & (0.007)& (0.008)& (0.007)&
             & (0.015)& (0.015)& (0.016)& \\[3pt]

texas & 0.035 & 0.001 & 0.038 & [0.015, 0.039]
             & -0.029 & -0.028 & -0.039 & [-0.060, -0.013]\\
             & (0.006)& (0.007)& (0.006)&
             & (0.012)& (0.012)& (0.014)& \\

\bottomrule
\end{tabular}

\vspace{0.5em}
\par\footnotesize\renewcommand{\baselineskip}{11pt}\justifying
\textbf{Note:} Standard errors appear in parentheses below coefficients. For the KL estimator, the reported standard errors are the first-stage-corrected standard errors of Proposition~\ref{prop: asymptotics} of the paper, computed via equation~\eqref{eq: corrected variance estimator} of the paper. TPM, HSM, KL, and HH denote OLS under random selection, Heckman’s MLE, our semiparametric two-step estimator, and \cite{honore2020selection} bounds, respectively.
\end{table}
\FloatBarrier

\begin{table}[!htbp]
\centering\footnotesize
\setlength{\tabcolsep}{5pt}
\caption{Semiparametric wage regression: robust versus first-stage-corrected standard errors}
\label{tab:regression-hetero}
\begin{tabular}{lcccccccc}
\hline
 & \multicolumn{2}{c}{Men} & \multicolumn{2}{c}{Women} & \multicolumn{2}{c}{White} & \multicolumn{2}{c}{Mexican} \\
Variable & s.e.$_{rob}$ & s.e.$_{corr}$ & s.e.$_{rob}$ & s.e.$_{corr}$ & s.e.$_{rob}$ & s.e.$_{corr}$ & s.e.$_{rob}$ & s.e.$_{corr}$ \\
\hline
wage gap      & (0.0085) & (0.0089) & (0.0060) & (0.0061) & (0.0045) & (0.0046) & (0.0118) & (0.0120) \\
age           & (0.0098) & (0.0103) & (0.0083) & (0.0085) & (0.0066) & (0.0066) & (0.0121) & (0.0122) \\
age$^2$       & (0.0001) & (0.0001) & (0.0001) & (0.0001) & (0.0000) & (0.0000) & (0.0001) & (0.0001) \\
exp           & (0.0072) & (0.0076) & (0.0064) & (0.0065) & (0.0055) & (0.0055) & (0.0078) & (0.0079) \\
exp$^2$       & (0.0001) & (0.0001) & (0.0001) & (0.0001) & (0.0000) & (0.0000) & (0.0001) & (0.0001) \\
less hs       & (0.0160) & (0.0168) & (0.0164) & (0.0167) & (0.0131) & (0.0132) & (0.0179) & (0.0180) \\
some college  & (0.0098) & (0.0103) & (0.0103) & (0.0106) & (0.0091) & (0.0091) & (0.0120) & (0.0121) \\
college       & (0.0238) & (0.0251) & (0.0247) & (0.0255) & (0.0222) & (0.0222) & (0.0284) & (0.0285) \\
adv degree    & (0.0340) & (0.0358) & (0.0337) & (0.0347) & (0.0304) & (0.0305) & (0.0386) & (0.0389) \\
veteran       & (0.0068) & (0.0071) & (0.0145) & (0.0147) & (0.0059) & (0.0059) & (0.0136) & (0.0137) \\
married       & (0.0117) & (0.0123) & (0.0069) & (0.0070) & (0.0034) & (0.0034) & (0.0068) & (0.0068) \\
calif         & (0.0076) & (0.0081) & (0.0074) & (0.0076) & (0.0058) & (0.0058) & (0.0127) & (0.0128) \\
arizo         & (0.0091) & (0.0097) & (0.0091) & (0.0093) & (0.0070) & (0.0070) & (0.0160) & (0.0161) \\
texas         & (0.0100) & (0.0105) & (0.0076) & (0.0077) & (0.0061) & (0.0061) & (0.0139) & (0.0140) \\
\hline
\end{tabular}
\vspace{0.3em}
\par\footnotesize\renewcommand{\baselineskip}{11pt}\justifying
\textbf{Note:} s.e.$_{rob}$ is the heteroskedasticity-robust standard error from the second-stage regression, which ignores the first stage; s.e.$_{corr}$ is the first-stage-corrected standard error of Proposition~\ref{prop: asymptotics} of the paper, computed via equation~\eqref{eq: corrected variance estimator} of the paper.
\end{table}

\begin{table}[!htbp]
\centering\footnotesize
\setlength{\tabcolsep}{5pt}
\caption{Wage regression with seven knots}
\label{tab:regression-hetero-7-knots}
\begin{tabular}{lcccccccc}
\hline
 & \multicolumn{2}{c}{Men} & \multicolumn{2}{c}{Women} & \multicolumn{2}{c}{White} & \multicolumn{2}{c}{Mexican} \\
Variable & Coef & s.e. & Coef & s.e. & Coef & s.e. & Coef & s.e. \\
\hline
wage gap      & -0.090 & (0.0087) & -0.067 & (0.0062) & -0.212 & (0.0046) & -0.176 & (0.0114) \\
age           & 0.105 & (0.0101) & 0.130 & (0.0084) & 0.108 & (0.0065) & 0.042 & (0.0119) \\
age$^2$       & -0.001 & (0.0001) & -0.001 & (0.0001) & -0.001 & (0.0000) & 0.000 & (0.0001) \\
exp           & -0.041 & (0.0074) & -0.080 & (0.0065) & -0.057 & (0.0055) & -0.012 & (0.0077) \\
exp$^2$       & 0.000 & (0.0001) & 0.000 & (0.0001) & 0.000 & (0.0000) & 0.000 & (0.0001) \\
less hs       & -0.212 & (0.0167) & -0.223 & (0.0166) & -0.164 & (0.0132) & -0.166 & (0.0176) \\
some college  & 0.045 & (0.0102) & 0.027 & (0.0106) & 0.030 & (0.0091) & 0.074 & (0.0122) \\
college       & 0.212 & (0.0249) & 0.133 & (0.0256) & 0.164 & (0.0222) & 0.315 & (0.0287) \\
adv degree    & 0.209 & (0.0354) & 0.170 & (0.0348) & 0.193 & (0.0304) & 0.405 & (0.0390) \\
veteran       & 0.011 & (0.0071) & 0.032 & (0.0147) & 0.003 & (0.0059) & 0.078 & (0.0137) \\
married       & 0.174 & (0.0120) & 0.014 & (0.0070) & 0.081 & (0.0034) & 0.090 & (0.0068) \\
calif         & 0.142 & (0.0081) & 0.199 & (0.0075) & 0.185 & (0.0058) & 0.146 & (0.0128) \\
arizo         & 0.050 & (0.0096) & 0.099 & (0.0092) & 0.075 & (0.0070) & 0.045 & (0.0160) \\
texas         & 0.038 & (0.0104) & 0.037 & (0.0077) & 0.037 & (0.0061) & -0.041 & (0.0136) \\
\hline
\end{tabular}
\vspace{0.3em}
\par\footnotesize\renewcommand{\baselineskip}{11pt}\justifying
\textbf{Note:} Reported standard errors are the first-stage-corrected standard errors of Proposition~\ref{prop: asymptotics} of the paper, computed via equation~\eqref{eq: corrected variance estimator} of the paper.
\end{table}
\FloatBarrier

\bibliographystyle{chicago}
\bibliography{ref}

\let\section\ArxivSection

\clearpage
\let\bibliography\ArxivBibliography
\let\bibliographystyle\ArxivBibliographystyle
\bibliographystyle{chicago}
\bibliography{ref}

\end{document}